\newcommand{\Left}{\mathsf{Left}}
\newcommand{\leftside}{\mathsf{left}}
\newcommand{\rightside}{\mathsf{right}}
\newcommand{\Vor}{\textsf{Vor}}
\newcommand{\dist}{\mathsf{dist}}
\newcommand{\direction}{\mathsf{direction}}
\newcommand{\countA}{\mathsf{count}}
\newcommand{\select}{\mathsf{select}}
\newcommand{\ancestor}{\mathsf{ancestor}}
\newcommand{\VD}{\textsf{VD}}
\newcommand{\out}[1]{#1^{out}}
\newcommand{\VDin}{\textsf{VD}_{in}}
\newcommand{\VDout}{\textsf{VD}_{out}}
\newcommand{\weight}{{\rm \omega}}
\newcommand{\len}{\mathsf{len}}
\newcommand{\X}{X}
\newtheorem{claim}{Claim}
\newtheorem{lemma}{Lemma}
\newtheorem{theorem}{Theorem}
\newtheorem{corollary}{Corollary}
\newtheorem{definition}{Definition}
\newtheorem{remark}{Remark}
\crefname{claim}{Claim}{Claims}
\newcommand{\R}{\mathbb{R}}
\newcommand{\Otild}{\tilde{O}}
\title{Faster Construction of a Planar Distance Oracle \\with $\tilde O(1)$  Query Time\thanks{This research was partially funded by Israel Science Foundation grant 810/21.}
}
\author[1,2]{Itai Boneh}
\author[1,2]{Shay Golan}
\author[1]{Shay Mozes}
\author[1]{Daniel Prigan}
\author[2]{Oren Weimann}
\affil[1]{Reichman University, Israel}
\affil[2]{University of Haifa, Israel}
\date{}
\begin{document}

\maketitle
\vspace{-0.5in}

\begin{abstract}
We show how to preprocess a weighted undirected $n$-vertex planar graph  in $\tilde O(n^{4/3})$ time, such that the distance between any pair of vertices can then be reported in $\tilde O(1)$ time. This improves the previous $\tilde O(n^{3/2})$ preprocessing time [JACM'23].

Our main technical contribution is a near optimal construction of \emph{additively weighted Voronoi diagrams} in undirected planar graphs. Namely, given a planar graph $G$ and a face $f$, we show that one can preprocess $G$ in $\tilde O(n)$ time such that given any weight assignment to the vertices of $f$ one can construct the additively weighted Voronoi diagram of $f$ in near optimal $\tilde O(|f|)$ time. This improves the $\tilde O(\sqrt{n |f|})$ construction time of  [JACM'23].
 \end{abstract}

\section{Introduction}
In this paper we investigate the following question:  {\em How fast can you preprocess a planar graph so that subsequent distance queries can be answered in near-optimal $\tilde O(1)$ time?} Imagine you could do the preprocessing in near-optimal $\tilde O(n)$ time. This would be truly remarkable, since it would allow us to efficiently convert between standard graph representations, which support quick local (e.g., adjacency) queries, into a representation that supports quick non-local distance queries. In a sense, this can be thought of as an analogue for distances of the Fast Fourier Transform (which converts in $\Otild(n)$ time between the time domain and the frequency domain). Such a tool would allow us to develop algorithms for planar graphs that can access any pairwise distance in the graph essentially for free!
Unfortunately, we are still far from obtaining $\tilde O(n)$ preprocessing time. The fastest preprocessing time is  $\tilde O(n^{3/2})$ \cite{ourJACM}. We make a step in this direction by improving the preprocessing time to $\tilde O(n^{4/3})$ in weighted undirected planar graphs.

Formally, a \emph{distance oracle} is a data structure that can report the distance $\dist(u,v)$ between any two vertices $u$ and $v$ in a graph $G$.
Distance oracles for planar graphs have been studied extensively in the past three decades both in the exact~\cite{LongPettie,ourJACM,Fredslund-HMW-N20,ArikatiCCDSZ96,Djidjev96,ChenX00,FakcharoenpholR06,Klein05,Wulff-Nilsen10,Nussbaum11,Cabello12,MozesS2012,Cohen-AddadDW17,GawrychowskiMWW18,CharalampopoulosGMW19} and in the approximate~\cite{Thorup04,KawarabayashiKS11,KawarabayashiST13,Klein02,GuX19,Wulff-Nilsen16,ChanS19,LeW21} settings.
In the approximate setting, Thorup~\cite{Thorup04} presented a near-optimal oracle that returns
 $(1+\varepsilon)$-approximate distances (for any constant $\varepsilon$) in $\tilde O(1)$ time and requires $\tilde O(n)$ space and construction time (see also~\cite{Klein02,KawarabayashiKS11,KawarabayashiST13,GuX19,Wulff-Nilsen16,ChanS19,LeW21} for polylogarithmic improvements). Is it possible that the same near-optimal bounds can be achieved without resorting to approximation? We next review the rich history of {\em exact} oracles in planar graphs.

\medskip
\noindent
{\bf The history of exact planar distance oracles.}
Let $Q$ and $S$ denote the query-time and the space of an oracle, respectively.
The early planar distance oracles~\cite{ArikatiCCDSZ96,Djidjev96,ChenX00} were based solely on {\em planar separators}~\cite{LiptonT80,Miller86,Frederickson87} and achieved a tradeoff of
$Q=\tilde{O}(n/\sqrt{S})$ for $S\in [n^{4/3},n^2]$, and
$Q=O(n^2/S)$ for $S \in [n,n^{4/3})$.
In \cite{FakcharoenpholR06},  Fakcharoenphol and Rao introduced the use of \emph{Monge matrices} to distance oracles, and devised an oracle with $\tilde{O}(n)$ space and $\tilde{O}(\sqrt{n})$ query time.
By combining their ideas with Klein's~\cite{Klein02,CabelloCE13} {\em multiple source shortest path} (MSSP) data structure, Mozes and Sommer~\cite{MozesS2012} obtained the $Q=\tilde{O}(n/\sqrt{S})$ tradeoff for nearly the full range $[n\log\log n, n^2]$.
Other works~\cite{Wulff-Nilsen10,Nussbaum11,MozesS2012,Fredslund-HMW-N20} focused on achieving strictly optimal query-time {\em or} strictly optimal space.
Namely, Wulff-Nilsen's work~\cite{Wulff-Nilsen10} gives optimal $O(1)$ queries with weakly subquadratic $O(n^2\log^4\log n/\log n)$ space, whereas Nussbaum~\cite{Nussbaum11} and Mozes and Sommer's~\cite{MozesS2012} oracles give optimal $O(n)$ space with $O(n^{1/2+\epsilon})$ query-time.
Except for \cite{Wulff-Nilsen10}, all of the above oracles can be  constructed in $\Otild(n)$ time.
However, none of them provides polylogarithmic $\tilde O(1)$ query-time using truly subquadratic $O(n^{2-\varepsilon})$ space.

In FOCS 2017, Cohen-Addad, Dahlgaard, and Wulff-Nilsen~\cite{Cohen-AddadDW17} (inspired by Cabello's~\cite{Cabello19} breakthrough use of \emph{Voronoi diagrams} for computing the diameter of planar graphs) realized that Voronoi diagrams, when applied to regions of an $r$-division, can be used to break the barrier mentioned above.
In particular, they presented the first oracle with $\tilde O(1)$ query-time and truly subquadratic $O(n^{5/3})$ space, or more generally, the tradeoff $Q=\tilde{O}(n^{5/2}/S^{3/2})$ for any $S\in[n^{3/2},n^{5/3}]$.
This came at the cost of increasing the preprocessing time from $\Otild(n)$ to $O(n^2)$, which was subsequently improved to match the space bound~\cite{DBLP:journals/siamcomp/GawrychowskiKMS21}.
In SODA 2018, Gawrychowski, Mozes, Weimann, and Wulff-Nilsen~\cite{GawrychowskiMWW18} improved the space and preprocessing time to $\tilde{O}(n^{3/2})$ with $\tilde O(1)$ query-time (and the tradeoff to $Q=\tilde{O}(n^{3/2}/S)$ for $S\in[n,n^{3/2}]$) by defining a dual representation of Voronoi diagrams and developing an efficient {\em point-location} mechanism on top of it.
In STOC 2019, Charalampopoulos, Gawrychowski, Mozes, and Weimann~\cite{CharalampopoulosGMW19} observed that the same point-location mechanism can be used on the Voronoi diagram for the complement of regions in the $r$-division. This observation alone suffices to improve the oracle size to $O(n^{4/3})$ (while maintaining $\tilde O(1)$ query-time).
By combining this with a sophisticated recursion (where a query at recursion level $i$ reduces to $\log n$ queries at level $i+1$) they further obtained an oracle of size $n^{1+o(1)}$ and query-time $n^{o(1)}$. Finally, in SODA 2021,
Long and Pettie~\cite{LongPettie} showed how much of the point-location work can be done  without recursion, and that only two (rather than $\log n$) recursive calls suffice. This led to the state of the art oracle, requiring $n^{1+o(1)}$ space and $\tilde O(1)$ query-time or $\tilde O(n)$ space and $n^{o(1)}$ query-time.\footnote{A journal version containing all the above Voronoi-based oracles was published in JACM in 2023~\cite{ourJACM}.}

In terms of space and query time, these latter Voronoi-based oracles are almost optimal. However, their construction time is $\tilde O(n^{3/2})$, and improving it is mentioned in \cite{ourJACM} as an important open problem. The bottleneck behind this $\tilde O(n^{3/2})$ bound is the time for constructing Voronoi diagrams as we next explain.

\medskip
\noindent
{\bf Point-location in Voronoi diagrams.}
Let $\X$ be a planar graph, and let $f$ be a face of $\X$. The vertices of $f$ are called the {\em sites} of the Voronoi diagram, and each site $s$ has a weight $\weight(s)\geq 0$ associated with it.
The distance between a site $s$ and a vertex $v\in \X$, denoted by $\dist^\weight(s,v)$ is defined as $\weight(s)$ plus the length of the $s$-to-$v$ shortest path in $\X$.
The {\em additively weighted Voronoi diagram} $\VD(f,\weight)$ is a partition of $\X$'s vertices into pairwise disjoint sets, one set $\Vor(s)$ for each site $s$. The set $\Vor(s)$, called the {\em Voronoi cell} of $s$, contains all vertices in $\X$ that are closer (w.r.t.~$\dist^\weight(\cdot,\cdot)$) to $s$ than to any other site (we assume that  distances are unique to avoid the need to handle ties\footnote{This assumption can be achieved in $O(n)$ time. With high probability using~\cite{Isolation,Isolation2}, or deterministically using~\cite{Isolation3}.}).
A {\em point-location} query $v$ asks for the site $s$ whose Voronoi cell $\Vor(s)$ contains $v$.
There exists a dual representation $\VD^*(f,\weight)$ of $\VD(f,\weight)$. The size of this representation is $O(|f|)$, and together with an MSSP data structure it supports point-location queries in $\Otild(1)$ time.

\begin{figure}[htb]
\begin{minipage}[c]{0.25\textwidth}
    \includegraphics[width=\textwidth]{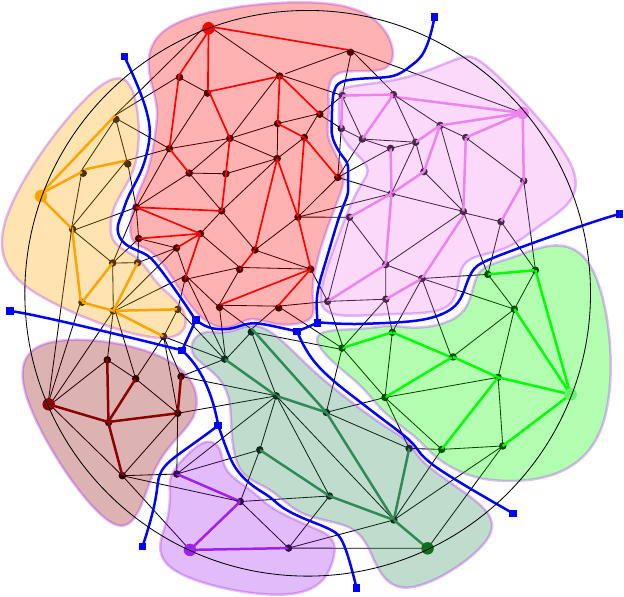}
  \end{minipage}\hfill
  \begin{minipage}[c]{0.7\textwidth}
    \caption{A graph (piece) $\X$. The vertices of $\X$'s infinite face $f$ are the sites of the Voronoi diagram $\VD$. Each site is represented by a unique color, which is also used to shade its Voronoi cell. The dual representation $\VD^*$ is illustrated as the blue tree. The tree has 7 leaves (corresponding to 7 copies of $f^*$) and 5 internal nodes (corresponding to the 5  trichromatic faces of $\VD$). The edges of the tree correspond to contracted subpaths in $\X^*$. \label{fig:VD}}
  \end{minipage}
\end{figure}

\medskip
\noindent
{\bf Our results and techniques.}
In \cite{ourJACM}, it was shown that  given the MSSP of a graph (piece) $\X$, a face $f$ of $\X$, and additive weights $\weight (\cdot)$ to the vertices of $f$, the Voronoi diagram $\VD^*(f,\weight)$ can be constructed in $\tilde O(\sqrt{n |f|})$ time.
In this paper, we show (see \cref{thm:main}) how to construct it in near-optimal $\tilde O(|f|)$ time when the graph is undirected.
As discussed below, this leads to a static distance oracle of space $\tilde O(n^{4/3})$, construction time $\tilde O(n^{4/3})$, and query-time $\tilde O(1)$.
In addition, it implies a {\em dynamic} distance oracle that supports $\tilde O(1)$-time distance queries from a single-source $s$, and $\tilde O(n^{2/3})$-time updates consisting of edge insertions, edge deletions, and  changing the source $s$.
The dynamic distance oracle is obtained by simply plugging  our construction into the oracle of Charalampopoulos and Karczmarz \cite{Panos} (thus improving its update time from $\tilde O(n^{4/5})$ to $\tilde O(n^{2/3})$). We note that the current best {\em all-pairs} dynamic oracle by   Fakcharoenphol and Rao \cite{FakcharoenpholR06} has $\tilde O(n^{2/3})$ update time and $\tilde O(n^{2/3})$ query time. Therefore, in undirected graphs, our new oracle achieves the same bounds but has the benefit that, between source changes, the query takes only $\tilde O(1)$ time.

Our construction is inspired by Charalampopoulos, Gawrychowski, Mozes, and Weimann who showed in ICALP 2021 \cite{CharalampopoulosGM21}  that near-optimal $\tilde O(|f|)$ construction time is possible for the special case when the planar graph is an {\em alignment graph} of two strings.
The alignment graph is an acyclic grid graph of constant degree.
More importantly, shortest paths in this graph are monotone, in the sense that they only go right or down along the grid.
This makes the Voronoi diagram of every piece $\X$ much more structured.
In particular, one can find the trichromatic vertices of $\VD^*$ (i.e., the faces of $\X$ whose vertices belong to three different Voronoi cells) by recursively zooming in on them using cycle separators;
The intersection of a cycle separator with each Voronoi cell consists of at most two contiguous intervals of the cycle separator.
This partition induced by these intervals can be found with a binary search approach, whose complexity (up to a logarithmic factor) is linear in the number of trichromatic vertices enclosed by the cycle.
From the partition it is easy to deduce whether a trichromatic vertex of $\VD^*$ is enclosed by the cycle separator, and infer whether we should recurse on each side of the separator.
In this paper we extend this binary search approach to general undirected planar graphs.

At a first glance, such an extension seems impossible as the intersection of cycle separators and Voronoi cells might be very fragmented, even when there are only three sites (see \cref{fig:fragmented}).
To overcome this, we use {\em shortest path separators}.
Such separators have been widely used in essentially all approximate distance oracles for planar graphs, but also in exact algorithms (e.g.,  \cite{DBLP:conf/soda/ChalermsookFN04,BorradaileSW15,DBLP:conf/esa/LackiS11,DBLP:conf/soda/MozesNNW18}). These cycle separators may contain $\Theta(n)$ edges (rather than $O(\sqrt n)$), but consist of two shortest paths $P,P'$, so that, because the graph is undirected, any other shortest path can intersect each of $P,P'$ at most once.
In our context, this implies a monotonicity property when we restrict our attention to shortest paths that enter the cycle separator only from the left or only from the right.
If we only allow paths to enter $P$ from one side, say the left, then the intersection of each Voronoi cell (with distances now defined under this restriction) with $P$ is a single contiguous interval.

To better explain the difficulties with this approach and how we overcome them,  let us first define some terminology.
We say that vertex $v\in P$ {\em prefers} a site $s$ if $v$ belongs to $\Vor(s)$ (defined without any restrictions).
We say that $v \in P$ left-prefers a site $s$ if $v$ belongs to $\Vor(s)$ under the restriction that paths are only allowed to enter $P$ from the left.
Repeating the above discussion using this terminology, the partition induced by the prefers relation may be very fragmented, but the partition induced by the left-prefers relation is not fragmented.
There are two immediate problems with working with the partition induced by left-preference (or right-preference).
First, we are interested in constructing the Voronoi diagram without any restrictions, and second, we do not know how to efficiently compute shortest paths under restrictions on the direction in which they enter $P$.

To address the first difficulty we show that one can infer which sides of the cycle separator contain a trichromatic vertex of the Voronoi diagram (without restrictions) by inspecting the endpoints of the intervals of the two partitions - the one with respect to left-preference and the one with respect to right-preference. This is shown in \cref{sec:partition_to_trichromatic}. We also show, by adapting the divide-and-conquer construction algorithm of \cite{DBLP:journals/siamcomp/GawrychowskiKMS21}, how to reduce the problem of computing a Voronoi diagram with many sites to the problem of computing the trichromatic vertex of a Voronoi diagram with only three sites. This is explained in \cref{sec:computingVD}.

Overcoming the second difficulty is the heart of our technical contribution.
We know how to efficiently determine,
by enhancing the standard MSSP data structure, whether the true shortest path from a site $s$ to a vertex $v \in P$ enters $P$ from the right or from the left.
However,
we do not know how to efficiently report the shortest path from $s$ to $v$ that enters $P$ from the other side (because that path is not the overall shortest path from $s$ to $v$).
We observe that we can work with {\em relaxed} preferences.
If $v$ prefers $s$ (without restrictions on left/right), then: (1) we might as well say that $v$ both right-prefers and left-prefers $s$ (because either way this will point us to $s$), and (2) if the shortest path from $s$ to $v$ enters $P$ from the right, we do not really care what the left preference of $v$ is, so we might as well say that $v$ left-prefers $s'$ for any site $s'$.
In \cref{sec:relaxed-partition} we describe how to compute a partition with respect to such a relaxed notion of left-preference and right-preference. The exact details are a bit more complicated than described here. In particular we use the term {\em like} rather than {\em prefer} to express that the preference is more loose.

Constructing a partition with relaxed preferences turns out to be more complicated than with strict preferences. One challenge comes from the fact that with relaxed preferences making progress in a binary search procedure is problematic. This is because a vertex $v$ of $P$ no longer has a unique left-preferred site $s$. In fact, when working with relaxed preferences, a vertex $v$ may equally like all sites.
Thus, we need to have some way to recognize a ``winner'' site with respect to a vertex that does not have a unique preference (see \cref{fig:criss-cross}).
Another challenge we encounter is what we call {\em swirly} paths - when the $s$-to-$v$ path goes around $P$ before entering $P$ (see \cref{fig:setup}).
Swirly paths make the realizations of many of the arguments outlined here more complicated.
The structure of the shortest paths that allows us to make progress in the binary search procedure is different and more complicated in the presence of swirly paths.
Moreover, we do not always know how to efficiently identify whether a shortest path is swirly or not.
To simplify the presentation we first explain how to obtain a partition with respect to relaxed preferences when there are no swirly paths, and in \cref{app:two_sites} describe how to handle swirly paths.

We believe that our result is a promising step toward exact oracles that simultaneously have optimal space, query-time and preprocessing time.
One way to achieve this would be to be able to find a way to use recursion to obtain the functionality of our enhanced MSSP data structure without actually computing it over and over in large regions of the graph, in a similar manner to the way this issue was avoided for the standard MSSP in \cite{ourJACM}).
This seems to be the only significant obstacle preventing us from pushing our new ideas all the way through, and obtaining an almost optimal $n^{1+o(1)}$ construction time for undirected planar graphs.

\section{Preliminaries}

\noindent
{\bf Representation of Voronoi diagrams.}
With a standard transformation (without increasing the size of $G$ asymptotically) we can guarantee that each vertex of $G$ has constant degree and that $G$ is triangulated.
Consider a subgraph $\X$ (called a {\em piece}) of $G$ whose boundary vertices $\partial \X$ (vertices incident to edges in $G\setminus \X$) lie on a single face $f$ of $\X$.
Note that $f$ is the only face of $\X$ that is not a triangle.
The vertices of $f$, assigned with weights $\weight(\cdot)$, are the sites of the Voronoi diagram $\VD=\VD(f,\weight)$ of $\X$.
There is a dual representation $\VD^*$ of $\VD$ as a tree with $O(|f|)$ vertices (see \cref{fig:VD}):
Let $\X^*$ be the planar dual of $\X$.
Consider the subgraph of $\X^*$ consisting of the duals of edges $uv$ of $\X$ such that $u$ and $v$ are in different Voronoi cells.
In this subgraph, we repeatedly contract edges incident to degree-2 vertices.
The remaining vertices (edges) are called {\em Voronoi vertices (edges)}.
A Voronoi vertex is dual to a face whose three vertices belong to three different Voronoi cells.
We call such a face (and its corresponding dual vertex) {\em trichromatic}.
Finally, we define $\VD^*$ to be the tree obtained from the subgraph by replacing the node $f^*$ by multiple copies, one for each edge incident to $f^*$.
The complexity (i.e., the number of vertices and edges) of $\VD^*$ is $O(|f|)$.
For example, in a VD of just two sites $s$ and $t$, there are no trichromatic faces.
In this case, $\VD^*$ is just a single edge corresponding to an $st$-cut or equivalently to a cycle in the dual graph.
We call this cycle the $st$-{\em bisector} and denote it by $\beta^*(s,t)$.
Note that a trichromatic vertex is a meeting point of three bisectors.
Another important example is a VD of three sites (we call this a {\em trichromatic VD}).
Such a VD has at most one trichromatic face (in addition to face $f$ itself).

\medskip
\noindent
{\bf Using point-location for distance oracles.}
To see why point-location on Voronoi diagrams is useful for distance oracles, we describe here the $\tilde O(n^{4/3})$-space $\tilde O(1)$-query oracle of \cite{CharalampopoulosGMW19} that we will be using.
It begins with an $r$-{\em division} of the graph $G$.
This is a partition of $G$ into $O(n/r)$ subgraphs (called {\em pieces}) such that every piece $\X$ contains $O(r)$ vertices and $O(\sqrt{r})$ {\em boundary} vertices $\partial \X$ (vertices shared by more than one piece).
An $r$-division can be computed in $\tilde O(n)$ time \cite{KleinMS13} with the additional property that the boundary vertices $\partial \X$ of every piece $\X$ lie on a constant number of faces of the piece (called {\em holes}).
To simplify the presentation, we assume that $\partial \X$ lies on a single hole (in general, we apply the same reasoning to each hole separately, and return the minimum distance found among the $O(1)$ holes).
The oracle consists of the following for each piece $\X$ of the $r$-division:
\begin{enumerate}
\item The $O(|\X|^{3/2})$ space, $O(|\X|^{3/2})$ construction time, $\tilde O(1)$ query-time distance oracle of~\cite{GawrychowskiMWW18} on the graph $\X$.
In total, these require $O(n\sqrt{r})$ space and construction time.
\item Two MSSP data structures~\cite{Klein02}, one for $\X$ and one for $\out{\X}=G  \setminus  (\X  \setminus  \partial \X)$, both with sources $\partial \X$.
The MSSP for $\X$ requires space $O(r \log r)$, and the MSSP for $\out{\X}$ requires space $O(n \log n)$.
Using these MSSPs, we can then query in $\tilde O(1)$ time the $u$-to-$v$ distance for any $u\in \partial \X$ and $v\in G$.
The total space and construction time of these MSSPs is $\tilde O({n^2}/{r})$, since there are $O(n/r)$ pieces.
\item For each vertex $u$ of $\X$, compute the Voronoi diagram $\VDin(u,\X)$ (resp. $\VDout(u,\X)$) for $\X$ (resp. $\out{\X}$) with sites $\partial \X$ and additive weights the distances from $u$ to these vertices in $G$ (the additive weights are computed in $\tilde O(|\partial \X|) = \tilde O(\sqrt{r})$ time using \cite{FakcharoenpholR06}).
Each Voronoi diagram can be represented in $O(\sqrt{r})$ space~\cite{GawrychowskiMWW18} and can be constructed in $\tilde O(\sqrt{n \sqrt{r}})$ time \cite{CharalampopoulosGMW19}.
Hence, all Voronoi diagrams require $O(n \sqrt{r})$ space and $\tilde O(n^{3/2} r^{1/4})$ construction time.
\end{enumerate}

To query a $u$-to-$v$ distance, let $\X$ be the piece that contains $u$.
If $v \not\in \X$ then the $u$-to-$v$ path must cross $\partial \X$.
We perform a point-location query for $v$ in  $\VDout(u,\X)$ in time $\tilde O(1)$ \cite{GawrychowskiMWW18}.
If $v\in \Vor(s)$ then we return the $u$-to-$s$ distance (from the precomputed additive weight) plus the $s$-to-$v$ distance (from the MSSP of $\out{\X}$).
Otherwise, $v \in \X$ and we return the minimum of two options: (1) The shortest $u$-to-$v$ path crosses $\partial \X$.
This is similar to the previous case except that the point-location query for $v$ is done in $\VDin(u,\X)$.
(2)
The shortest $u$-to-$v$ path does not cross $\partial \X$ (i.e., the path lies entirely within $\X$).
We  retrieve this distance by querying the distance oracle stored for $\X$.

By choosing $r=n^{2/3}$ we get an oracle of space $\tilde O(n^{4/3})$ and query-time $\tilde O(1)$.
The construction time however is $\tilde O(n^{5/3})$.
Notice that the only bottleneck preventing $\tilde O(n^{4/3})$ construction time is the construction of the Voronoi diagrams.

\medskip
\noindent
{\bf Shortest path separators.}
A {\em shortest path separator} $Q$ is a balanced cycle separator consisting of two shortest paths $P$ and $P'$ (emanating at the same vertex) plus a single edge.
A complete recursive decomposition tree $T$ of $G$ using shortest path separators can be obtained in linear time \cite{LT79}.

An arc $e=uv$ emanates (enters) left of a simple path $P$ if there exist two arcs $e_1$, $e_2$ in $P$, such that $u$ ($v$) is the head of $e_1$ and the tail of $e_2$, and $e$ appears between $e_1$ and $e_2$ in the clockwise order of arcs incident to $u$.
Otherwise, $e$  emanates (enters) right of $P$.
Notice that this is not defined for the endpoints of $P$, however, in the context in which we will use it, we will always extend $P$ so that its original endpoints will become internal.
Specifically, consider a shortest path separator $Q$ composed of an $x$-to-$y$ path $P$, an $x$-to-$y'$ path $P'$, and an edge $(y,y')$.
Then, we will extend $P$ on one side with $y'$ and on the other side with the vertex following $x$ on $P'$.
A symmetric extension will be done for $P'$.

\section{Efficient Computation of a Relaxed Partition}\label{sec:relaxed-partition}
In this section we define a relaxed partition, and introduce an algorithm that computes it in $\Otild(1)$ time.

\subsection{Definitions and data structures}

In our settings, let $F$ denote the face, and let $H\subseteq F$ denote a subset of vertices (sites) of $F$ (for the most part, we will use $|H|=3$ in order to find trichromatic faces, see \cref{sec:partition_to_trichromatic}).
We say that vertex $v\in P$ {\em prefers} a site $c \in H$ if $v$ belongs to $\Vor_H(c)$.
It is natural to partition the vertices of $P$ according to the site in $H$ that they prefer.
In the case of the alignment graph of~\cite{CharalampopoulosGM21}, each part in this partition is a contiguous interval of $P$.
In general planar graphs however, this is not the case, and the parts can be very fragmented (see \cref{fig:fragmented}).

\begin{figure}[htb]
\begin{minipage}[c]{0.25\textwidth}
    \includegraphics[width=\textwidth]{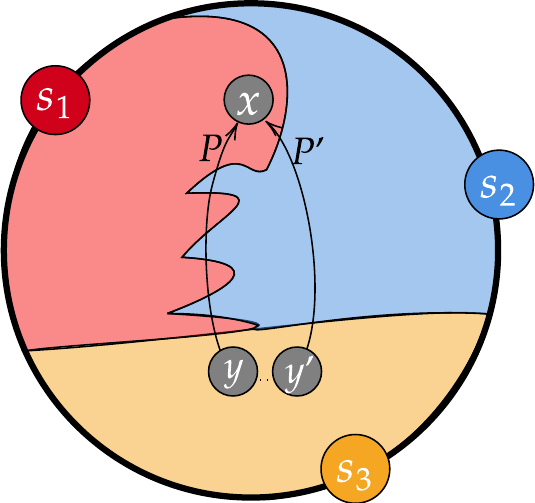}
  \end{minipage}\hfill
  \begin{minipage}[t]{0.7\textwidth}
    \caption{The partition of $P$ into parts corresponding to the Voronoi cells of $\VD(s_1,s_2,s_3)$ is very fragmented. \label{fig:fragmented}}
  \end{minipage}
\end{figure}

We observe that the partition would induce contiguous intervals if we only consider shortest paths that are allowed to enter $P$ from the right (or only from the left).
In that case we can say that a vertex $v \in P$ right-prefers (left-prefers) $c$.
We know how to determine efficiently whether  the true shortest path from a site $c$ to a vertex $v \in P$ enters $P$ from the right or from the left.
But the problem is that then we do not know how to efficiently report the shortest path from $c$ to $v$ that enters $P$ from the other side (because it is not the overall shortest path from $c$ to $v$).

To overcome this issue we observe that we can work with relaxed preferences, which we call \emph{like}.
If $v$ prefers $c$ (without restrictions on left/right), then we might as well say that $v$ both right-likes and left-likes $c$ (because either way this will point us to $c$).
Similarly, if $v$ prefers $c$ (again, without restrictions on left/right) then if the shortest path from $c$ to $v$ enters $P$ from the right, we do not really care what the left like of $v$ is (since in this case shortest paths entering $P$ from the left are just irrelevant), so we may as well say that $v$ left-likes $c'$ for any site $c'$.

Recall that the endpoints of $P$ are $x$ and $y$.
We denote by $s_x$ and $s_y$ the sites of $H$ such that $x\in \Vor_H(s_x)$  and $y\in\Vor_H(s_y)$.
In practice we will use the left-like (and symmetrically the right-like) relation with respect to a subset $S \subseteq H$ of sites.
If the true site in $S\cup\{s_x,s_y\}$ that $v$ prefers (without restrictions on left/right) is not one of the sites in $S$ then we may as well say that $v$ both left-likes and right-likes all sites in $S$ because eventually they will have no effect on the true answer.

The following definition formalizes the above discussion.
\begin{definition}[$(S,\leftside)$-like]
Let $P$ be an  $x$-to-$y$ shortest path, let
$H=\{h_1,h_2,\ldots h_{|H|}\}$ be a cyclic subsequence of a face $F$, and let $S=\{s_1,s_2,\ldots, s_{|S|}\}$ be a subsequence of $H$.
Let $s_x\in H$ (resp. $s_y$) be the site such that $x\in\Vor_H(s_x)$ (resp. $y\in \Vor_H(s_y)$).
A vertex $v\in P$ that belongs to the Voronoi cell $\Vor_{S\cup \{s_x,s_y\}}(c)$ is said to $(S,\leftside)$-like $s_i$ if at least one of the following holds: (1) $c=s_i$,
    (2) the shortest path from $c$ to $v$ enters $P$ from the right, or (3)  $c\notin S$.
\end{definition}

This notion of $(S,\leftside)$-like,  induces a {\em  relaxed partition} of $P$, in which the parts do form contiguous intervals along $P$.
 The relaxed partition has the property that for every $v\in P$  if $v\in \Vor_H(c)$ \textbf{and} $c$ reaches $P$ from some $direction$ (left or right) then in the $(H,direction)$-relaxed partition $v$ is in the part of $P$ corresponding to site $c$.
 In other words, together, the $(H,\leftside)$ and $(H,\rightside)$-relaxed partitions assign to each vertex of $P$ at most two sites in $H$, and it is guaranteed that one of these two sites is the true site of $v$.

\begin{definition}[Relaxed Partition]
Let $P$ be a shortest path, and let
$S$ and $H$ be cyclic subsequences of a face $F$ with $S\subseteq H$, denoted by $S=\{s_1,s_2,\ldots, s_{|S|}\}$ and $H=\{h_1,h_2,\ldots h_{|H|}\}$.

A relaxed $(S,\leftside)$-partition of $P$ w.r.t. $S$ is a partition of $P$ into $|S|$ disjoint subpaths $P_i=P[u_i,v_i]$ such that every $x\in P_i$ is a vertex that $(S,\leftside)$-likes $s_i$.
\end{definition}
The definitions of $(S,\rightside)$-like and of an $(S,\rightside)$-relaxed partition are symmetric.
The main part of our algorithm is dedicated to finding a relaxed partition.
First, in \cref{sec:ns-partition}
we will show how to find a relaxed partition of $P$ w.r.t. $S \subset H$ consisting of only two sites.
Then, in \cref{sec:H-partition}, we explain how to obtain a relaxed partition of $P$ w.r.t. $H$ by combining the relaxed partitions of pairs of sites in $H$.

For two vertices $u,v$ we denote the shortest path in $G$ between $u$ and $v$ as $R_{u,v}$.
For a site $s\in F$ we denote by $\Left(P,s)$ the set of vertices $v\in P$ such that $R_{s,v}$ enters $P$ from the left.

The following lemma describes a data structure that will later (\cref{lem:out_partition}) allow to find an $(F,\leftside)$-partition of $P$. A similar proof finds an $(F,\rightside)$-partition.

\begin{lemma}[Enhanced MSSP data structure]\label{lem:enhanced_mssp}
    Given a planar graph $G$, a face $f$ and a recursive decomposition $T$ of $G$, one can construct in $\Otild(n)$ time a data structure supporting the following queries, each in $\Otild(1)$ time. For any source $s\in f$, shortest path $P \in T$, and vertex $v\in P$:
    \begin{enumerate}
        \item $\dist(s,v)$ - return the distance between $s$ and $v$ in $G$.
        \item $\direction(s,v,P)$ - return whether $R_{s,v}$ enters $P$ from left or right.
        \item $\countA(s,P,i,j)$ for $i,j\in P$ - return the number of vertices $x$ on $P[i,j]\cap \Left(P,s)$.
        \item $\select(s,P,i,j,k)$ for $i,j\in P$ - return the $k$'th vertex $x$ of $P[i,j]\cap \Left(P,s)$.
        \item $\ancestor(s,v,d)$ - return the ancestor of $v$ that is in depth $d$ in the shortest paths tree of $s$.
    \end{enumerate}
\end{lemma}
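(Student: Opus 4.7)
The plan is to enhance Klein's MSSP data structure with additional functionality for queries (2)--(5). MSSP already answers query (1), $\dist(s,v)$, in $\tilde O(1)$ time after $\tilde O(n)$ preprocessing, by persistently maintaining the shortest-path tree $T_s$ as $s$ sweeps around $f$ via $\tilde O(n)$ edge pivots in total. I will use two structural facts: (i) in an undirected planar graph with unique shortest paths, two shortest paths intersect in a (possibly empty) contiguous subpath; and (ii) $R_{s,v}$ is exactly the root-to-$v$ path in $T_s$. For query (5), $\ancestor(s,v,d)$, I would augment the persistent link-cut tree underlying MSSP with depth labels to support level-ancestor queries in $\tilde O(1)$ time; persistence adds only polylogarithmic overhead per pivot, preserving the $\tilde O(n)$ bounds.

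For query (2), $\direction(s,v,P)$, fact (i) implies that $R_{s,v}\cap P$ is a contiguous subpath whose deeper endpoint is $v$, so the ancestors of $v$ in $T_s$ lying on $P$ form a contiguous range of depths $[d^{*},d_v]$. I would preprocess $P$ so that ``does $w$ lie on $P$?'' is answered in $O(1)$, then binary-search on $d$ using query (5) for the smallest $d$ with $\ancestor(s,v,d)\in P$. Setting $a=\ancestor(s,v,d^{*})$ and $b=\ancestor(s,v,d^{*}-1)$, the edge $(b,a)$ enters $P$ at $a$ from a specific side (as defined in the preliminaries on shortest-path separators), which is the answer. Total cost: $O(\log n)$ ancestor calls, i.e.\ $\tilde O(1)$.

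For queries (3) and (4), I would maintain, for each $P\in T$, a persistent order-statistics tree indexed by positions along $P$, whose bit at position $v$ is $\mathbb{1}[v\in\Left(P,s)]$ for the current source $s$. To update it as $s$ advances along $f$, I would replay each MSSP pivot and identify the vertices on $P$ whose direction labels flip; by fact (i), each affected subtree intersects $P$ in a contiguous subpath, so each pivot triggers $O(1)$ range toggles on the persistent structure, costing $O(\log^{2} n)$ total with persistence. Summing over the $\tilde O(n)$ pivots and over all paths of $T$ (of total length $O(n\log n)$), the build time and space are $\tilde O(n)$, while $\countA$ and $\select$ within $P[i,j]$ reduce to standard $\tilde O(1)$-time queries on the version of the tree corresponding to $s$.

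The hard part will be the analysis for queries (3) and (4). One must show that when a pivot swaps an edge in $T_s$, the set of vertices of $P$ whose direction label changes decomposes into $O(1)$ contiguous subpaths of $P$, so that each update to the persistent structure fits in $\tilde O(1)$ range toggles. This requires combining fact (i) with a careful look at how a pivoted subtree's descendants restricted to $P$ are organized --- and, when several subtrees are affected by one pivot, bounding their count and efficiently locating the relevant intervals of $P$, potentially using the persistent tree itself to find the boundaries. Everything else (augmented link-cut trees, binary search for $d^{*}$, and standard persistence) is essentially mechanical once this structural bound is in hand.
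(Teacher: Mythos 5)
Your proposal matches the paper's high-level plan: enhance MSSP with top-/link-cut trees for ancestor queries, and maintain a persistent range structure per separator path holding the set of vertices the current source reaches from the left. Where you diverge is in how that structure is maintained and how you bound the work. The paper, after a pivot at vertex $u$, walks along $P$ from $u$ in both directions updating one vertex at a time until a non-descendant of $u$ is reached; the total cost is then bounded \emph{amortized} over the whole sweep by a dedicated claim (each vertex of $P$ changes left/right status at most twice as $s$ rotates around $f$, because the sources reaching it from the left form a cyclic interval). You instead propose a per-pivot bound via $O(1)$ persistent range toggles, which would sidestep that amortization argument entirely. The structural claim you flag as the ``hard part'' is indeed the crux and it \emph{is} true: if the pivot reparents $a$, then the only vertices of $P$ whose direction can change are the descendants of $a$ that lie on $P$; by uniqueness of shortest paths these form a single contiguous subpath containing $a$ (and it is empty if $a\notin P$), and because they all inherit $a$'s entry edge into $P$ they all flip together --- one range toggle suffices. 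Two smaller gaps in your write-up: (i) your worry about ``several subtrees affected by one pivot'' is a red herring, since a pivot changes exactly one vertex's parent and hence affects exactly one subtree; (ii) you do not say how to find, in $\tilde O(1)$ time, which of the $O(\log n)$ separator paths through $a$ are actually affected and what the range endpoints are --- the paper's $Cross(u)$ data structure plus binary search (e.g.\ via ancestor queries) handles this, and you would need something analogous. Finally, your direction query uses a binary search over ancestor depths to locate the entry edge, whereas the paper simply looks $v$ up in $Left_P$; both are fine, though the paper's is one lookup.
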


The proof of  \cref{lem:enhanced_mssp} is deferred to \cref{sec:enhanced_mssp}.
In the rest of this section we prove the following lemma.

\begin{lemma}\label{lem:out_partition}
Given  $F$, $H$, and $T$, together with their enhanced MSSP, and a shortest path  $P \in T$, one can compute an $(H,\leftside)$-partition of $P$, in $\Otild(|H|)=\Otild(1)$ time.
\end{lemma}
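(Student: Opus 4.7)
The plan is to compute the $(H,\leftside)$-partition of $P$ by locating its at most $|H|-1$ breakpoints along $P$ one at a time, each via a binary search whose probes are answered in $\Otild(1)$ time using the enhanced MSSP of \cref{lem:enhanced_mssp}.

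First I would build a like-subroutine that, given $v\in P$, returns the set of sites in $H$ that $v$ $(H,\leftside)$-likes. Since here $S=H$, condition (3) of the $(S,\leftside)$-like definition never fires. Using $|H|$ distance queries, compute the additively weighted distance $\dist(h,v)+\weight(h)$ for every $h\in H$, and let $c\in H$ be the minimizer; this is the true preferred site of $v$ in $\Vor_H$. Then invoke $\direction(c,v,P)$. If the answer is ``right'', condition (2) fires for every site of $H$, so $v$ $(H,\leftside)$-likes all of $H$; call such $v$ \emph{flexible}. Otherwise only condition (1) fires and $v$ $(H,\leftside)$-likes exactly $c$; call such $v$ \emph{strict} with label $c$. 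The subroutine runs in $\Otild(|H|)$ time.

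Next I would establish the structural claim at the heart of the proof: along $P$, the strict vertices' labels respect the cyclic order of $H$ on $F$. The justification is a planarity argument. For two strict vertices $u,v\in P$ with $u$ before $v$ and labels $h_i,h_j$ respectively, the shortest paths $R_{h_i,u}$ and $R_{h_j,v}$ both enter $P$ from the left, so together with $P$ (and the portion of $F$ between $h_i$ and $h_j$) they are drawn in the disk on the left side of $P$. If the cyclic order of $h_i,h_j$ around $F$ disagreed with the order of $u,v$ on $P$, then $R_{h_i,u}$ and $R_{h_j,v}$ would have to cross each other transversally inside this disk, contradicting the fact that two distinct shortest paths in an undirected planar graph can share only a contiguous subpath and thus cannot cross transversally. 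I expect this contiguity argument to be the main obstacle.

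Given contiguity, locating the $|H|-1$ breakpoints reduces to $O(|H|)$ independent binary searches on $P$, one for each consecutive cyclic pair $(h_i,h_{i+1})$: at each probe, the like-subroutine classifies $v$ as strict (which directs the search by comparing $v$'s label to $h_i$ and $h_{i+1}$ in the cyclic $H$-order) or as flexible (absorbed into the adjacent part by a fixed tie-breaking rule, which is safe since flexible vertices $(H,\leftside)$-like every site). Each binary search performs $O(\log n)$ probes of $\Otild(|H|)$ cost, for a total of $\Otild(|H|^2)$; since $|H|=O(1)$ in our setting, this matches the claimed $\Otild(|H|)=\Otild(1)$ bound. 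By construction, every vertex of $P$ is placed in a part whose label it $(H,\leftside)$-likes, producing the desired relaxed partition.
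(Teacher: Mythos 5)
Your approach differs from the paper's in a real way: you classify every probed vertex directly as \emph{strict} (uniquely $(H,\leftside)$-liking its true Voronoi site, which reaches it from the left) or \emph{flexible} (true site reaches from the right, hence left-likes everything), and then try to locate the breakpoints between strict-label intervals by ordinary binary search on positions of $P$. The paper instead reduces to $H_{\leftside}$ (\cref{lem:Hleft-is-enough}), solves the two-site case (\cref{lem:s1s2partition}) with a delicate treatment of swirly paths, and combines pairwise partitions; crucially, its binary search does not probe arbitrary positions but finds ``left medians'' of $\Left(P,s)$ via the $\countA/\select$ queries and uses the \emph{winning} criterion (\cref{clm:winner_takes_something,clm:no-criss-cross,clm:criss-cross}) to guarantee each probe shrinks the search space.

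The genuine gap in your argument is the binary search itself: you do not say what to do when a probe returns a flexible vertex, and ``absorb it into an adjacent part'' is a statement about the final output, not a rule that tells the search which half of $P$ to recurse into. Flexible vertices carry no directional information about where the nearest strict vertices lie, and between two strict vertices $P$ may contain arbitrarily long flexible stretches (or almost all of $P$ may be flexible). Probing the position midpoint can then land on flexible vertices repeatedly, and the search either stalls or makes an arbitrary (and potentially incorrect) choice that lands the breakpoint on the wrong side of an unseen strict vertex. This is exactly the problem the paper's left-median machinery and winning criterion are designed to solve, and without some replacement for it your procedure is not an $\Otild(1)$-probe algorithm. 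A secondary issue: your non-crossing argument for the structural claim asserts the two shortest paths lie ``in the disk on the left side of $P$'', but $P$ alone bounds no disk, and the disk that does the work is bounded by $R_x\circ P\circ R_y$ plus an arc of $F$; shortest paths from sites in $F_{\leftside}$ can be swirly and exit that disk. The claim is nonetheless salvageable, because a strict vertex's true-site path is contained in that site's Voronoi cell and hence disjoint from both $R_x$ and $R_y$, so it is automatically non-swirly — but this observation is needed and you do not make it.
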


\subsection{Setup of the query}

Recall that the endpoints of $P$ are $x$ and $y$, and that $s_x$ and $s_y$ are the sites of $H$ such that $x\in \Vor_H(s_x)$  and $y\in\Vor_H(s_y)$.
Notice that it is straightforward to find $s_x$ and $s_y$ in $\Otild(|H|)=\Otild(1)$ time by explicitly checking the distance from each of the sites in $H$ to $x$  and to $y$ using the enhanced MSSP.
We denote $R_x=R_{s_x,x}$ and $R_y=R_{s_y,y}$.
We assume $s_x\ne s_y$, since the case $s_x=s_y$ is a degenerated case.\footnote{
If $s_x=s_y$ one can think of the graph obtained by making an incision along $R_x$, which allows us to think of $s_x$ as being split into two vertices, where one is the site of $x$ and the other is the site of $y$, see also \cref{remark:linear_time_partition}.}
We also assume that $H\cap P=\emptyset$, since otherwise we find a partition for $H\setminus P$, and then add the intervals of the sites in $H\cap P$, using a binary search along $P$.
The sites $s_x$ and $s_y$ partition the face $F$ into two intervals which we denote by $F_{\leftside}$ and $F_{right}$.
In addition, let $H_{\leftside}=F_{\leftside}\cap H$ and $H_{right}=F_{right}\cap H$.
The following lemma (see proof in \cref{app:missing_proof}) states that when computing the
$(H,\leftside)$-partition, we can ignore every $s\in H_{right}$.
Thus, every time we consider a partition with respect to some $S\subseteq H_{\leftside}$, we consider the vertices of $S$ from the closest vertex to $s_x$ to the closest vertex to $s_y$ (on $F_{\leftside}$).
In addition, the partition is always such that the order of the parts corresponding to vertices of $S$ is from the one containing $x$ to the one containing $y$.

\begin{lemma}\label{lem:Hleft-is-enough}
Every $(H_{\leftside},\leftside)$-partition is an $(H,\leftside)$-partition.
\end{lemma}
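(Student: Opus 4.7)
The plan is to verify that an $(H_{\leftside},\leftside)$-partition $P_1,\ldots,P_{|H_{\leftside}|}$ automatically satisfies the $(H,\leftside)$-partition requirements once we adjoin empty parts for every site in $H\setminus H_{\leftside}=H_{right}\cup\{s_x,s_y\}$. The task reduces to checking, for each $v\in P_i$ with $s_i\in H_{\leftside}$, that $v$ also $(H,\leftside)$-likes $s_i$. Write $c'$ for the site of $H$ with $v\in \Vor_H(c')$, and let $c$ be the site of $H_{\leftside}\cup\{s_x,s_y\}$ with $v\in\Vor_{H_{\leftside}\cup\{s_x,s_y\}}(c)$.

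The key geometric step, which holds under the no-swirly-paths assumption active in this section, is to show that for every $c^\star\in H_{right}$ and every $v\in P$ the shortest path $R_{c^\star,v}$ enters $P$ from the right. The intuition is that $R_x$, $R_y$, and the arc of $F$ from $s_x$ to $s_y$ through $F_{\leftside}$ form a simple closed curve bounding a planar region that contains the ``left side of $P$'' but no site of $H_{right}$. A left-entering shortest path from $c^\star\in H_{right}$ would have to cross $R_x$ or $R_y$ transversally, which is prohibited by the combination of the non-crossing property of shortest paths in undirected planar graphs with the no-swirly assumption.

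Given this key fact I would case split on $c'$. If $c'\in H_{right}$, the key fact directly yields condition~(2) of the $(H,\leftside)$-like definition for $s_i$. If $c'\in H_{\leftside}\cup\{s_x,s_y\}$, then because a subset Voronoi diagram can only remove candidates, $c'$ is also the winner in $\Vor_{H_{\leftside}\cup\{s_x,s_y\}}$, so $c=c'$. In this case, conditions~(1) and~(2) of $v$ $(H_{\leftside},\leftside)$-liking $s_i$ translate directly to conditions~(1) and~(2) of $v$ $(H,\leftside)$-liking $s_i$, since the witness site and its shortest-path direction are unchanged.

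The main obstacle is condition~(3) of the $(H_{\leftside},\leftside)$-like definition, which is invoked precisely when $c=c'\in\{s_x,s_y\}$: here condition~(3) of the $(H,\leftside)$-like definition is vacuous (because $s_x,s_y\in H$), so we must derive condition~(2), namely that $R_{c',v}$ enters $P$ from the right. I expect to close this gap using the incision trick alluded to in the footnote on the degenerate case $s_x=s_y$: conceptually cut along $R_x$ (respectively $R_y$) so that $s_x$ (respectively $s_y$) splits into a ``left copy'' and a ``right copy''. The left copy is the site already accounted for by the restricted diagram via the left-entering path $R_x$ (respectively $R_y$); the right copy plays the same role as an $H_{right}$ site with respect to the closed curve above, so the non-crossing argument of the key step applies verbatim and forces $R_{c',v}$ to enter $P$ from the right. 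Hence whenever condition~(3) of $(H_{\leftside},\leftside)$-likes is invoked at $v$, condition~(2) of $(H,\leftside)$-likes holds at $v$, completing the verification.
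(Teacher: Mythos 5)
Your case split on the location of the true $\Vor_H$-winner $c'$ mirrors the structure of the paper's argument, but the core step is flawed. Your ``key geometric step'' asserts that for \emph{every} $c^\star\in H_{right}$ and \emph{every} $v\in P$, the shortest path $R_{c^\star,v}$ enters $P$ from the right, and justifies it via non-crossing of shortest paths plus the no-swirly assumption. Neither part goes through. First, the statement is false for arbitrary $v$; it is only true for the specific $v$ satisfying $v\in\Vor_H(c^\star)$, which is what your Case~A actually needs. Second, the non-crossing property of shortest paths constrains two shortest paths emanating from the \emph{same} source; $R_{c^\star,v}$ and $R_x$ (or $R_y$) emanate from different sites and are in general allowed to cross. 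Third, the no-swirly assumption is a simplifying hypothesis in force only in the proof of \cref{lem:out_partition_non-swirly} and concerns sites of $F_{\leftside}$, whereas \cref{lem:Hleft-is-enough} is a general statement proved in \cref{app:missing_proof} with no such hypothesis, and is used precisely to justify later reductions. What actually blocks a left-entering $R_{c,v}$ for $c\in H_{right}$ with $v\in\Vor_H(c)$ is a distance argument: let $C$ be the closed curve formed by $R_x$, $P$, $R_y$, and a Jordan arc through $F$ from $s_x$ to $s_y$. Such an $R_{c,v}$ would have to cross $C$; by uniqueness of shortest paths it cannot cross $P$, so it would cross $R_x$ or $R_y$ at some vertex $z$. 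But any $z\in R_x$ satisfies $\dist^\weight(s_x,z)<\dist^\weight(c,z)$ (else $x\notin\Vor_H(s_x)$), which forces $\dist^\weight(s_x,v)<\dist^\weight(c,v)$, contradicting $v\in\Vor_H(c)$.

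The second half of your proposal, the ``main obstacle'' you resolve with an incision trick, addresses a case that never arises. Since $s_x$ and $s_y$ are the endpoints of the arc $F_{\leftside}$, they belong to $H_{\leftside}$, so $H_{\leftside}\cup\{s_x,s_y\}=H_{\leftside}$ and the restricted winner $c$ always lies in $H_{\leftside}$. Clause~(3) of the $(H_{\leftside},\leftside)$-like definition, namely $c\notin H_{\leftside}$, is therefore never triggered, and your Case~B closes directly through clauses~(1) and~(2) as you already observe in its first half. The incision machinery (which the footnote introduces only for the degenerate scenario $s_x=s_y$) is not needed here and does not correspond to anything in the paper's proof.
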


\begin{remark}\label{remark:order}
When considering a partition of $P$ with respect to a set $S\subseteq H_{\leftside}$, the parts of $P$ in our partition are chosen so that they are consistent with the order of the vertices in $S$.
For example, the first part, corresponds to the vertex of $S$ closest to $s_x$,  contains $x$ (unless it is empty) and the part of the vertex closest to $s_y$, contains $y$ (unless it is empty).
\end{remark}

\noindent
{\bf Trimming $P$.}
By definition $x\in \Vor_H(s_x)$ and for every $v\in R_x$ we also have $v\in \Vor_H(s_x)$.
Therefore, w.l.o.g., we assume that $R_x\cap P$ contains only the vertex $x$, since all the vertices of $R_x\cap P$ are known to belong to $\Vor_H(s_x)$.
Similarly, we assume that $R_y\cap P= \{y\}$.

\medskip
\noindent
{\bf Swirly paths.}
For vertices $s\in F_{\leftside}$ and $v\in P$, we say that $R_{s,v}$ is a {\em swirly} path if $v\in\Left(P,s)$ and $R_{s,v}$ crosses $R_x$ or $R_y$.
Moreover, a path is called $x$-swirly (resp. $y$-swirly) if the first path it crosses among $R_x$ and $R_y$ is $R_x$ (resp. $R_y$). See \cref{fig:setup}.

Swirly paths complicate life.
We shall first (in \cref{sec:ns-partition}) explain how to obtain an $(S,\leftside)$-partition $P$ when $|S|=2$ and there are no swirly paths.
Next, in~\cref{sec:xy-partition} we show how to obtain an $(\{s_x,s_y\},\leftside)$ partition.
Then, in \cref{sec:two-partition}, we will show how to use the partition for $\{s_x,s_y\}$ to obtain  the partition for any $|S|=2$, even if there are swirly paths.
Finally, in \cref{sec:H-partition} we will show how to obtain an $(H,\leftside)$-partition of $P$, removing the restriction to two vertices and proving \cref{lem:out_partition}.

\begin{figure}[htb]
\begin{minipage}[c]{0.25\textwidth}
    \includegraphics[width=\textwidth]{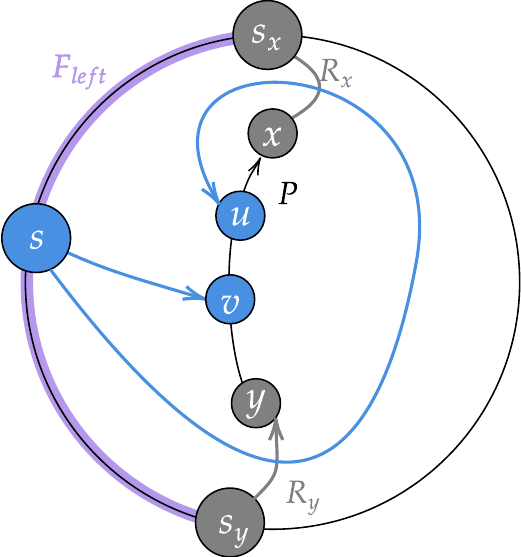}
  \end{minipage}\hfill
  \begin{minipage}[t]{0.7\textwidth}
    \caption{The path $P$ with endpoints $x$ and $y$. We think of $P$ as being oriented from $y$ to $x$. The corresponding sites $s_x$ and $s_y$ with $R_x=R_{s_x,x}$ and $R_y=R_{s_y,y}$.
   The site $s$ is on $F_{\leftside}$, the path $R_{s,v}$ is a non-swirly path, and the path $R_{s,u}$ is a $y$-swirly path. \label{fig:setup}}
  \end{minipage}
\end{figure}

\subsection{Partition with respect to two sites}\label{sec:ns-partition}

In this subsection we show how to compute an $(S,\leftside)$-partition when $|S|=2$.
We start with a special case, where we consider a subpath $\hat P$ of $P$ that is not involved in any swirly paths.
Later, in \cref{lem:s1s2partition} we introduce the algorithm for computing an $(S,\leftside)$-partition when $|S|=2$.

\begin{lemma}\label{lem:out_partition_non-swirly}
Let $s_1,s_2\in F_{\leftside}\setminus P$ be two sites such that $s_1$ is closer to $s_x$ on $F_{\leftside}$ than $s_2$.
Let $\hat P=P[a,b]$ be a subpath of $P$ with $a$ being closer to $x$ than $b$, such that for every $i\in\{1,2\}$ and $v\in \Left(\hat P,s_i)$ the path $R_{s_i,v}$ is non-swirly.
One can compute in $\Otild(1)$ time an $(\{s_1,s_2\},\leftside)$-partition of $\hat P$ into $\hat P_1,\hat P_2$ such that $a\in \hat P_1$ and $b\in\hat P_2$, unless the partition is trivial.
\end{lemma}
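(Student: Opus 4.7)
The plan is to binary search for a single split index $p^{*}\in[a-1,b]$ so that $\hat P_1 := P[a,p^{*}]$ and $\hat P_2 := P[p^{*}+1,b]$ constitute a valid $(\{s_1,s_2\},\leftside)$-partition. Each step of the binary search classifies the midpoint in $\tilde O(1)$ time via the enhanced MSSP, so the whole procedure runs in $\tilde O(\log|\hat P|) = \tilde O(1)$ time.

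I first prove a structural separation guaranteeing that a valid $p^{*}$ exists. For $i\in\{1,2\}$, let $A_i\subseteq\hat P$ denote the set of ``mandatory-$s_i$'' vertices, namely those $v$ whose true preferred site among $\{s_1,s_2,s_x,s_y\}$ is $s_i$ and for which $\direction(s_i,v,P)=\leftside$. Unfolding the $(\{s_1,s_2\},\leftside)$-like definition, $A_i$ is exactly the set of vertices that fail to $(\{s_1,s_2\},\leftside)$-like $s_{3-i}$. The claim is that every vertex of $A_1$ strictly precedes every vertex of $A_2$ on $\hat P$. Otherwise pick $v_1<v_2$ with $v_1\in A_2,\, v_2\in A_1$: both $R_{s_2,v_1}$ and $R_{s_1,v_2}$ are left-entering (by the mandatory condition) and non-swirly (by hypothesis), so they live inside the planar region $\Omega$ bounded by $R_x$, the $s_x$-to-$s_y$ subarc of $F_{\leftside}$ (on which $s_1$ precedes $s_2$), $R_y$, and $P[y,x]$ (on which $v_2$ precedes $v_1$). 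The four points therefore appear in cyclic order $s_1,s_2,v_2,v_1$ on $\partial\Omega$, so the two paths must cross at some vertex $p\in\Omega$. A standard exchange argument at $p$ yields
\[
\dist^\weight(s_2,v_1)+\dist^\weight(s_1,v_2) \;\ge\; \dist^\weight(s_1,v_1)+\dist^\weight(s_2,v_2),
\]
which contradicts the strict mandatory inequalities $\dist^\weight(s_2,v_1)<\dist^\weight(s_1,v_1)$ and $\dist^\weight(s_1,v_2)<\dist^\weight(s_2,v_2)$. Given this separation, any $p^{*}\in[\max A_1,\min A_2-1]$ (with the conventions $\max\emptyset = a-1$ and $\min\emptyset = b+1$) gives a valid partition, and when both $A_i$ are non-empty we additionally have $a\in\hat P_1$ and $b\in\hat P_2$.

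The binary search then maintains an interval $[\ell,r]\subseteq[a,b]$ containing a valid split, with invariants $A_2\cap[a,\ell-1]=\emptyset$ and $A_1\cap[r+1,b]=\emptyset$. At each step we classify the midpoint $m$ in $\tilde O(1)$ time by querying $\dist^\weight(s,m)$ for every $s\in\{s_1,s_2,s_x,s_y\}$ and $\direction(s_i,m,P)$ for $i\in\{1,2\}$ via the enhanced MSSP. If $m\in A_1$ the separation lets us set $\ell\gets m+1$; if $m\in A_2$ we set $r\gets m-1$. The subtle case is $m\in A_0 := \hat P\setminus(A_1\cup A_2)$, where $m$ likes both sites; here a ``winner'' tie-breaking rule is needed so that the search predicate remains monotone. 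My tentative rule uses the information already computed: treat $m$ as an $A_1$-surrogate (recurse right) iff $\direction(s_1,m,P)=\leftside$ and $\dist^\weight(s_1,m)\le\dist^\weight(s_2,m)$, and as an $A_2$-surrogate otherwise. A crossing argument analogous to the separation claim, applied to the relevant pair of left-entering shortest paths at distinct vertices, shows that this augmented predicate is monotone along $\hat P$ under the non-swirly hypothesis.

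The hard part will be the $A_0$ case: without the non-swirly hypothesis, $A_0$ vertices could be arbitrary and no polylogarithmic binary search would work. The hypothesis keeps the relevant left-entering shortest paths inside $\Omega$, where the cyclic-order planarity argument together with the quadrangle/exchange inequality drives both the $A_1$-before-$A_2$ separation and the monotonicity of the tie-breaking rule on $A_0$. Once these structural pieces are in place, the binary search itself is routine: it halts in $O(\log|\hat P|)$ iterations, each costing $\tilde O(1)$ enhanced-MSSP queries, and returns $p^{*}=\ell-1$, with the trivial partition output whenever $A_1$ or $A_2$ is empty.
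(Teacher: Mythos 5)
Your separation argument — that every mandatory-$s_1$ vertex precedes every mandatory-$s_2$ vertex under the non-swirly hypothesis — is sound and closely mirrors the paper's structural intuition. However, there is a genuine gap in the algorithmic part: your single binary search over a split index relies on a local tie-breaking predicate whose monotonicity you assert but do not establish, and I believe it actually fails. The problematic case is a midpoint $m\in A_0$ where $R_{s_1,m}$ enters $P$ from the \emph{right}. Your rule classifies $m$ as an $A_2$-surrogate and recurses left, but nothing prevents some $v\in A_1$ from lying \emph{farther from $x$} than $m$: membership in $\Left(P,s_1)$ is not contiguous along $P$, and the fact that $R_{s_1,m}$ enters from the right gives you no information about whether the global split lies before or after $m$. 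Concretely, Claim~\ref{clm:winner_takes_something} tells you such an $m$ \emph{likes} $s_1$ (so it may safely be placed in $\hat P_1$), but your test cannot distinguish this from the case where $m$ must be placed in $\hat P_2$ — so the search can discard the half containing the true split and violate its invariant.

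The paper sidesteps this by never querying an arbitrary midpoint of the interval. Instead it finds a \emph{left-median} of $\Left(\hat P,s_1)$ and a left-median of $\Left(\hat P,s_2)$ via $\countA$/$\select$ queries — guaranteeing both test vertices are reached from the left by their respective site — and then determines a ``winner'' (a non-swirly left-entering path shorter than the competitor's distance) at one of them. The easy, non-criss-cross case (Claim~\ref{clm:no-criss-cross}) is handled by four local distance/direction queries, but the criss-cross case (Claim~\ref{clm:criss-cross}), where the two left-medians appear in the ``wrong'' order, is the crux: it requires a nested binary search along $R_{s_1,v_1}$ using $\ancestor$ queries to locate the transition vertex $w$ between the $s_1$-closer prefix and the $s_2$-closer suffix, followed by a left/right test of $w$ against the tree of $s_2$ to deduce which side of the crossing vertex $z$ is closer to which site. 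Your proposal contains no analog of this mechanism, and it is precisely what is needed to make progress when the two paths criss-cross. The recursion in the paper also halves the \emph{count} $|\Left(\bar P,s_i)|$ for one site per step, not the interval length, which is what makes the depth bound correct despite non-contiguity of $\Left(P,s_i)$. To repair your proof you would need to replace the midpoint tie-break with something equivalent to the paper's winning-at-a-left-median machinery, including the criss-cross resolution.
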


The proof of \cref{lem:out_partition_non-swirly} consists of several steps.
We first introduce the notion of \emph{winning}, and show (\cref{clm:winner_takes_something}) that it is a monotone property of vertices of $\hat P$.
We will then use this monotonicity to perform binary search.
For a vertex $v\in \hat P$ we say that $s_1$ {\em wins at $v$} if there exists an $s_1$-to-$v$ non-swirly path $R$ that enters $P$ from the left and $\len(R)<\dist(s_2,v)$. We define $s_2$ winning at $v$ symmetrically.

To avoid clutter we assume $\hat P=P$, otherwise one just has to change in the following proofs $P$ to $\hat P$ and $x,y$ to the endpoints of $\hat P$.

\begin{figure}[htb]
     \centering
     \includegraphics[width=0.3\textwidth]{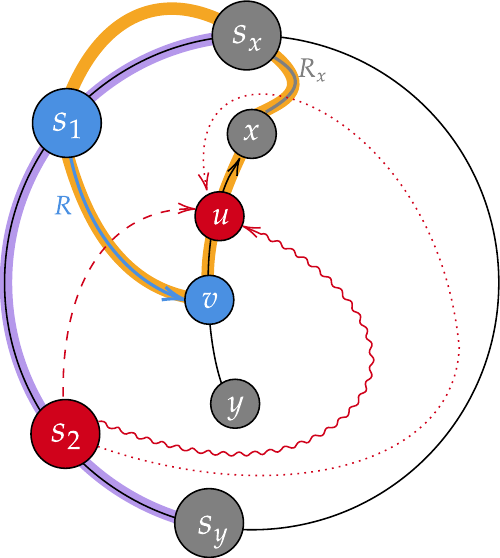}
     \hspace{1in}\includegraphics[width=0.31\textwidth]{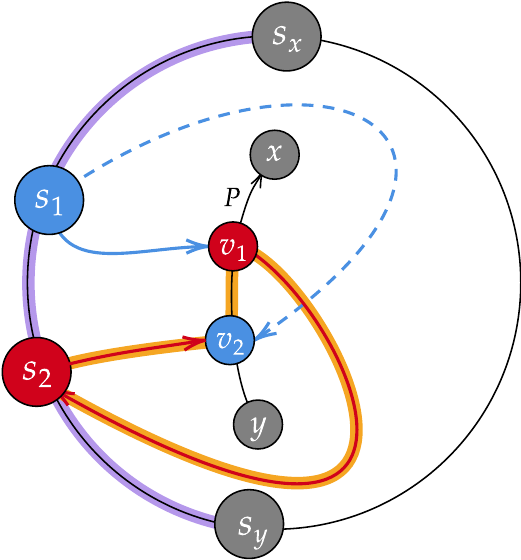}

     \caption{Left: The (orange) cycle $C$ in the proof of \cref{clm:winner_takes_something}. The three options for $R_{s_2,u}$ are in red. Dashed red contradicts $v$ being closer to $s_1$ than to $s_2$, dotted red contradicts $x$ being closer to $s_x$ than to $s_2$, and wavy red contradicts $R_{s_2,u}$ entering $P$ from the left. Right: The (orange) cycle $C$ in the proof of \cref{clm:no-criss-cross}. Since $s_1$ is (strictly) on one side of $C$ and $v_2$ is on the other side of $C$,  the dashed  $R_{s_1,v_2}$ path must cross $C$ in either  $P[v_2,v_1)$ or  $R_{s_2,v_2}$.
        Both these crosses are impossible.}
\label{fig:winner_takes_something}
 \end{figure}
\begin{claim} \label{clm:winner_takes_something}
Let $v\in P$ such that $s_1$ wins at $v$.
Then, every $u\in P[x,v]$ is a vertex that $(\{s_1,s_2\},\leftside)$-likes $s_1$.
Symmetrically, if $s_2$ wins at $v$, then every $u\in P[v,y]$ is a vertex that $(\{s_1,s_2\},\leftside)$-likes $s_2$.
\end{claim}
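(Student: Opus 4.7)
The plan is to prove the first statement by contradiction, and then observe that the second follows by symmetry. Suppose some $u \in P[x,v]$ does not $(\{s_1,s_2\},\leftside)$-like $s_1$. Unpacking the definition, the only way this can fail is that $u \in \Vor_{\{s_1,s_2,s_x,s_y\}}(s_2)$ and the shortest path $R_{s_2,u}$ enters $P$ from the left: otherwise one of the three conditions for $(S,\leftside)$-like (being in $\Vor(s_1)$, having the shortest path enter from the right, or being in $\Vor(s_x)$ or $\Vor(s_y)$) would immediately force $u$ to like $s_1$. Since $u \in \Left(P,s_2)$, the hypothesis of the lemma gives that $R_{s_2,u}$ is non-swirly, so it crosses neither $R_x$ nor $R_y$.

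Next I will build a closed curve $C$ by concatenating (i) $R_x$ from $s_x$ to $x$, (ii) $P[x,v]$ from $x$ to $v$, (iii) $R$ reversed from $v$ to $s_1$, and (iv) the arc of $F$ from $s_1$ to $s_x$ through $F_\leftside$. Since $s_1$ lies strictly between $s_x$ and $s_2$ on $F_\leftside$, the site $s_2$ is \emph{not} on the $F$-arc used by $C$, so $s_2$ lies outside $C$; moreover, since $R$ enters $P$ from the left, the interior of $C$ lies on the left side of $P[x,v]$. The simple path $R_{s_2,u}$ starts outside $C$ and ends on $C$, so it must cross $C$ somewhere; since it cannot cross the face $F$ (no edges there) and cannot cross $R_x$ (non-swirly), the only remaining options are that it crosses $R$ or that it approaches $u$ along $P[x,v]$ from outside $C$.

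The decisive case is a crossing of $R_{s_2,u}$ with $R$ at some vertex $w$. Write $\alpha = \len(R[s_1,w])$ and $\beta = \len(R[w,v])$, so $\alpha+\beta = \len(R) < \dist(s_2,v)$ by the winning hypothesis. Concatenating $R_{s_2,u}[s_2,w]$ with $R[w,v]$ yields an $s_2$-to-$v$ walk of length $\dist(s_2,w)+\beta \geq \dist(s_2,v)$, and combining with $\alpha+\beta<\dist(s_2,v)$ gives $\alpha < \dist(s_2,w)$. Then concatenating $R[s_1,w]$ with $R_{s_2,u}[w,u]$ produces an $s_1$-to-$u$ walk of length $\alpha + (\dist(s_2,u)-\dist(s_2,w)) < \dist(s_2,u)$, contradicting $u\in\Vor_{\{s_1,s_2,s_x,s_y\}}(s_2)$. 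In the complementary case $R_{s_2,u}$ avoids $R$, $R_x$, and $R_y$ entirely, and then the only way it can reach $u \in P[x,v]$ from outside $C$ is from the right side of $P$, contradicting the assumption that $R_{s_2,u}$ enters $P$ from the left.

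The symmetric claim for $s_2$ winning follows by swapping $s_1\leftrightarrow s_2$, $s_x\leftrightarrow s_y$, $R_x\leftrightarrow R_y$, and $P[x,v]\leftrightarrow P[v,y]$ throughout, so the same argument applies verbatim. I expect the main obstacle to be cleanly justifying the topological ``wavy'' step: certifying that the interior of $C$ really sits on the left side of $P[x,v]$ and hence that any $s_2$-to-$u$ curve avoiding $R$, $R_x$, $R_y$ must approach $u$ from the right. This should follow from $R$ being a non-swirly left-entering path together with the fact that the $s_1$-to-$s_x$ arc used in $C$ lies entirely on $F_\leftside$, which fixes the orientation of $C$ relative to $P$.
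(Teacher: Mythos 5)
Your argument is correct and tracks the paper's proof closely: the same cycle $C$ (built from $R$, $P[v,x]$, $R_x$, and the Jordan arc in $F$ between $s_1$ and $s_x$), and the same topological punchline that $R_{s_2,u}$ can only reach $u$ via the right side of $P$. The notable difference is how you dispatch potential intersections between $R_{s_2,u}$ and the rest of $C$: the paper simply asserts $R \cap R_{s_2,u} = \emptyset$ and $R_x \cap R_{s_2,u} = \emptyset$ (the latter via a clean preference comparison, since $\dist(s_x,\cdot)$ and $\dist(s_2,\cdot)$ strictly separate the two shortest paths), and then does topology once. You instead split into cases and, when $R_{s_2,u}$ meets $R$ at a vertex $w$, derive $\dist(s_1,u) < \dist(s_2,u)$ directly by the $\alpha,\beta$ arithmetic. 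This is a useful clarification: because the winning path $R$ is \emph{not} required to be a shortest path, the paper's one-line justification of $R \cap R_{s_2,u} = \emptyset$ glosses over exactly the computation you spell out, so your version is tighter on that point. For $R_x$ you lean on the non-swirly hypothesis of the lemma (whereas the paper derives $R_x\cap R_{s_2,u}=\emptyset$ purely from distances and so does not need that $R_{s_2,u}$ is non-swirly); both are legitimate given the lemma's hypotheses. Two small wording fixes: the dichotomy should be \emph{intersects $R$} versus \emph{disjoint from $R$} (not ``crosses'' versus ``avoids $R, R_x, R_y$ entirely''), since your arithmetic works for any shared vertex and the topological case only needs that $R_{s_2,u}$ does not \emph{cross} $C$ away from $P$; and $s_1$ may coincide with $s_x$, so ``strictly between'' should be weakened, though the conclusion that $s_2$ lies off the chosen $F$-arc still holds.
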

\begin{proof}
We prove the first claim, the proof of the symmetric claim is similar.
Since $s_1$ wins at $v$, there exists an $s_1$-to-$v$ non-swirly path $R$ that enters $P$ from the left and $\len(R)<\dist(s_2,v)$.
Let $u\in P[x,v]$.
Assume by contradiction that $u$ is a vertex that does not  $(\{s_1,s_2\},\leftside)$-like $s_1$.
Then $R_{s_2,u}$ must enter $P$ from the left and $\len(R_{s_2,u})<\min \{\dist(s_1,u),\dist(s_x,u)\}$.
Hence, for any $z\in R_{s_2,u}$ we have $\dist(s_2,z)<\dist(s_1,z)$.
Thus, $R\cap R_{s_2,u}=\emptyset$.
Similarly,  for any $z\in R_x$ we have $\dist(s_x,z)<\dist(s_2,z)$.
Hence, $R_x\cap R_{s_2,u}=\emptyset$.

Consider the cycle $C$ composed of $R$,$ P[v,x]$, $ R_x$ and the  Jordan curve connecting $s_x$ and $s_1$ embedded in the face $F$. See \cref{fig:winner_takes_something} (left).
Notice that $C$ has no self-crossing since $R$ does not cross $ R_x$ as $R$ is a non-swirly path.
We think of $C$ as an oriented cycle whose orientation is consistent with that of $P$ (recall that the orientation of $P$ is from $y$ to $x$) so as to define left and right properly (i.e., define what is left and right w.r.t. $C$).
Observe that $s_2$ is on the right side of $C$.
Since $R_{s_2,u}$ intersects $C$, but does not intersect $R$ nor $R_x$, it must be that $R_{s_2,u}$ intersects only $P$ among the parts of $C$.
Since $s_2$ is on the right side of $C$, $R_{s_2,u}$ enters $P$ from the right, a contradiction.
\end{proof}

Having proved that the winning property  is sufficient to perform a binary search to obtain a partition of $P$, we next show how to find a pair $(s_w,v_t)$ (where $w\in\{1,2\}$ and $t\in\{1,2\}$) such that $s_w$ is a winner at $v_t$.
We distinguish two scenarios that are handled differently (\cref{clm:no-criss-cross,clm:criss-cross}).
Later we will put it all together and show how to make this into an efficient binary search procedure.

    \begin{claim}\label{clm:no-criss-cross}
        Let $v_1$ and $v_2$ be  vertices such that $v_1\in\Left(P,s_1)$ and $v_2\in\Left(P,s_2)$.
        If $v_1$ is closer on $P$ to $x$ than $v_2$ then there is an $\Otild(1)$-time algorithm that outputs $w\in\{1,2\}$ and $t\in\{1,2\}$ such that $s_w$ wins at $v_t$.
    \end{claim}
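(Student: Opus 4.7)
The algorithm is to use the enhanced MSSP data structure to compute the four distances $d_{ij} := \dist(s_i, v_j)$ for $i,j \in \{1,2\}$ in $\Otild(1)$ total time, then return $(w,t) = (1,1)$ if $d_{11} < d_{21}$, return $(w,t) = (2,2)$ if $d_{22} < d_{12}$, and otherwise declare the situation impossible. When one of the two inequalities holds, the shortest path $R_{s_i, v_i}$ is, by the non-swirly hypothesis of the lemma together with $v_i \in \Left(P, s_i)$, a non-swirly $s_i$-to-$v_i$ path entering $P$ from the left, and therefore witnesses that $s_i$ wins at $v_i$.

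What remains is to prove that the ``otherwise'' branch is vacuous, i.e., that one cannot have $d_{11} \ge d_{21}$ and $d_{22} \ge d_{12}$ simultaneously. Following the planarity template used in the proof of \cref{clm:winner_takes_something}, I plan to build a simple Jordan cycle $C$ by concatenating $R_{s_1, v_1}$, the subpath $P[v_1, v_2]$, the shortest path $R_{s_2, v_2}$, and a Jordan arc $J$ drawn through the interior of the face $F$ from $s_2$ back to $s_1$. The non-swirly hypothesis ensures that these shortest paths are disjoint from $R_x$ and $R_y$; the uniqueness of shortest paths, together with distinct sources $s_1,s_2$ and distinct endpoints $v_1,v_2$ on $P$, keeps $R_{s_1,v_1}$ and $R_{s_2,v_2}$ from ruining the simplicity of $C$; and $J$ carries no graph edges since it lies inside a face. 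Orient $C$ consistently with $P$, and route $J$ so that a relevant local side of the endpoint of the shortest path we analyze next lies in the open region opposite from a relevant local side of $s_1$.

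Next I consider the shortest path $R_{s_1, v_2}$, which exists with length at most $d_{22}$ by assumption and therefore must cross $C$. Crossings with $J$ (no graph edges live inside a face) and with $R_{s_1, v_1}$ (two shortest paths from the common source $s_1$ form a tree) are ruled out immediately. The remaining cases---a crossing on $P[v_1, v_2)$ or on $R_{s_2, v_2}$---are each handled by a shortest-path exchange at the crossing vertex $z$: the identity $\dist(s_1, v_2) = \dist(s_1, z) + \dist(z, v_2)$, combined with the hypothesis $d_{22} \ge d_{12}$ (and, in the $R_{s_2, v_2}$ subcase, with the analogous identity for $s_2$), yields $\dist(s_1, z) \le \dist(s_2, z)$. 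Splicing suitable sub-paths then produces an $s_1$-to-$v_1$ route of length strictly less than $d_{21}$, contradicting $d_{11} \ge d_{21}$ in view of the uniqueness of pairwise shortest-path distances.

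The main obstacle is the exchange step, especially in the $R_{s_2, v_2}$ subcase: turning the topological crossing at $z$ into a strictly shorter $s_1$-to-$v_1$ path requires a carefully chosen rerouting (most likely gluing the $s_1$-to-$z$ prefix of $R_{s_1, v_2}$ either with the $z$-to-$v_2$ suffix of $R_{s_2, v_2}$ followed by $P[v_2, v_1]$, or with the shortest $z$-to-$v_1$ path directly), and then converting the resulting non-strict inequality into a strict one using the assumption that pairwise distances are unique. The other ingredients---the $\Otild(1)$ enhanced-MSSP queries, the construction of the cycle $C$, and the elimination of the trivial crossings---are essentially routine once the appropriate cycle is in place.
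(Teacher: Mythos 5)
The central flaw is that your ``otherwise'' branch is not vacuous, so the algorithm itself is wrong, not just the argument. When neither $d_{11}<d_{21}$ nor $d_{22}<d_{12}$ holds, you may indeed have $d_{21}<d_{11}$ and $d_{12}<d_{22}$ simultaneously; in that regime the winning pair is one of the ``cross'' pairs $(s_1,v_2)$ or $(s_2,v_1)$, and the paper's proof proves exactly this. Concretely, the paper's contradiction argument first derives $d_{21}<d_{11}$ and $d_{12}<d_{22}$, then additionally uses that $(s_2,v_1)$ fails (so $R_{s_2,v_1}$ enters $P$ from the right) to build the non-self-crossing cycle $C=R_{s_2,v_2}\circ P[v_2,v_1]\circ R_{s_2,v_1}$ and conclude that $R_{s_1,v_2}$ enters $P$ from the left, i.e.\ $(s_1,v_2)$ \emph{is} a valid pair. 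Nothing in that derivation rules out the distance configuration you declare impossible; it only shows that in such a configuration a different output exists. Since your algorithm never considers outputting $(1,2)$ or $(2,1)$, it would simply fail on inputs realizing that configuration. Any algorithm here must, as in the paper, probe all four $(w,t)$ pairs together with the side from which $R_{s_w,v_t}$ enters $P$.

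There is also a structural problem with the cycle you propose. You form $C$ from $R_{s_1,v_1}$, $P[v_1,v_2]$, $R_{s_2,v_2}$ and a face arc, and assert that ``distinct sources and distinct endpoints'' keep the two shortest paths from ruining simplicity. That is not so: two shortest paths with different sources and different targets can share a nontrivial subpath (by uniqueness, their intersection is a single subpath, but that subpath need not be empty), in which case your closed curve traverses an edge twice and the Jordan-curve argument does not apply as stated. The paper sidesteps this by building its cycle from \emph{two shortest paths out of the same source $s_2$}, namely $R_{s_2,v_1}$ and $R_{s_2,v_2}$; these form a tree and cannot cross, so the cycle is guaranteed non-self-crossing. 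Finally, the ``exchange step'' you flag as an obstacle is not a technical detail to be filled in later: because the situation you are trying to contradict is genuinely realizable, no splicing argument can close that gap.
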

    \begin{proof}
        For every $w,t$, we check if $R_{s_w,v_t}$ enters $P$ from the left and whether $\dist(s_w,v_t)$ is smaller than the distance of the other site to $v_t$.
        We return a pair $w,t$ satisfying the claim.
        This takes $\Otild(1)$ time using the enhanced MSSP data structure of \cref{lem:enhanced_mssp}.

        It remains to prove that at least one such pair exists.
        Assume by contradiction that no such pair exists.
        In particular, since $R_{s_1,v_1}$ enters $P$ from the left, it must be the case that $\dist(s_2,v_1)<\dist(s_1,v_1)$.
        Moreover, $R_{s_2,v_1}$ must enter $P$ from the right (otherwise $s_2,v_1$ is a valid pair).
        Similarly, since $R_{s_2,v_2}$ enters $P$ from the left we have $\dist(s_1,v_2)<\dist(s_2,v_2)$.

        We will show that $R_{s_1,v_2}$ enters $P$ from the left, implying that $(s_1,v_2)$ is a valid pair, contradicting our assumption.
        Consider the (non self crossing) cycle $C =R_{s_2,v_2}\circ P[v_2,v_1]\circ R_{s_2,v_1}$.
        We think of $C$ as an oriented cycle whose orientation is consistent with that of $P$ so as to define left and right properly, see \cref{fig:winner_takes_something} (right).

        Let $u$ be the first vertex of $R_{s_1,v_2}$ that belongs to $C$.
        Note that $u$ exists since $v_2 \in C$.
        Since $s_1$ is on the left side of $C$, $R_{s_1,v_2}$ enters $C$ from the left at $u$.
        We will show that $u\notin R_{s_2,v_1}$, which means that either (i) $u\in P(v_1,v_2]$ or (ii) $u \in R_{s_2,v_2}$.
        To see that  $u\notin R_{s_2,v_1}$ notice that every $z\in R_{s_2,v_1}$ satisfies $\dist(s_2,z)<\dist(s_1,z)$ and every $z\in R_{s_1,v_2}$ satisfies $\dist(s_1,z)<\dist(s_2,z)$.
        If (i) $u \in P(v_1,v_2]$  then $R_{s_2,v_2}$ enters $P$ from the left, as it enters $C$ from the left at $u$.
        If (ii) $u \in R_{s_2,v_2}$ then $R_{s_1,v_2}[u,v_2]=R_{s_2,v_2}[u,v_2]$, meaning that $R_{s_1,v_2}$ enters $P$ from the left.
        To conclude, in both cases $R_{s_1,v_2}$ enters $P$ from the left, which contradicts our assumption.
        \qedhere

    \end{proof}

\begin{figure}[ht]
    \centering
    \begin{subfigure}[t]{0.3\textwidth}
        \centering
        \includegraphics[width=\textwidth]{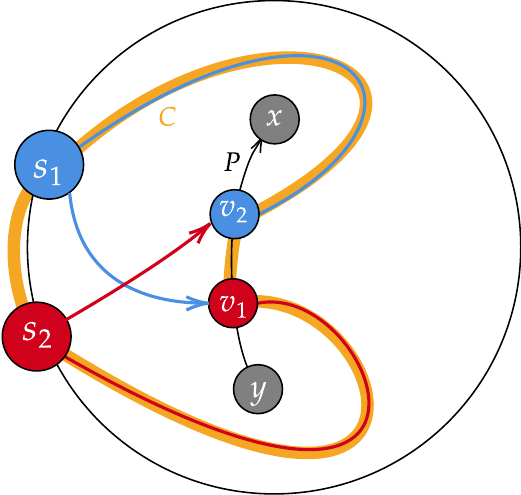}
               \label{fig:figure1}
    \end{subfigure}
    \hspace{1in}
    \begin{subfigure}[t]{0.3\textwidth}
        \centering
        \includegraphics[width=\textwidth]{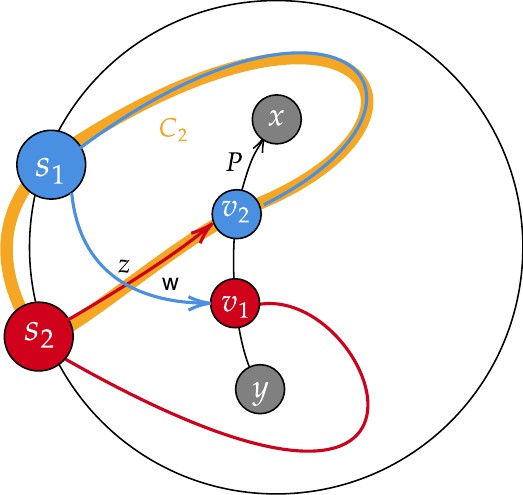}

        \label{fig:figure2}
    \end{subfigure}
    \caption{The (orange) cycles $C$ (left image) and $C_2$ (right image) in the proof of \cref{clm:criss-cross}.
    The situation here is that $s_1$ reaches $v_1$ from the left and loses to $s_2$ that reaches $v_1$ from the right, and the symmetric issue for $v_2$.
    Thus, $v_1$ and $v_2$ both $(\{s_1,s_2\},\leftside)$-likes both $s_1$ and $s_2$.
    To recognize a winner we detect which site is closer to $z$.}
    \label{fig:criss-cross}
\end{figure}

\begin{claim}\label{clm:criss-cross}
    Let $v_1$ and $v_2$ be  vertices such that $v_1\in\Left(P,s_1)$ and $v_2\in\Left(P,s_2)$.
    If $v_2$ is closer on $P$ to $x$ than $v_1$ then there is an $\Otild(1)$-time algorithm that outputs $w\in\{1,2\}$ and $t\in\{1,2\}$ such that $s_w$ wins at $v_t$.
\end{claim}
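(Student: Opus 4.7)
The plan is to produce a winning pair $(w,t)$ by identifying a vertex $z$ shared by the two shortest paths $R_{s_1,v_1}$ and $R_{s_2,v_2}$, and then declaring the winner to be whichever of $s_1,s_2$ is strictly closer to $z$. The intuition is exactly the one suggested by the caption of \cref{fig:criss-cross}: in the criss-cross configuration neither site dominates at either $v_i$ through its own shortest path, but splicing the two shortest paths at $z$ produces a concocted $s_w$-to-$v_{3-w}$ path that enters $P$ from the left and is strictly shorter than the current best.

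Existence of such a $z$ follows from the criss-cross configuration. On $F_{\leftside}$ the sites appear in the order $s_x,s_1,s_2,s_y$, while on $P$ oriented from $y$ to $x$ the vertices appear in the order $y,v_1,v_2,x$. Under the no-swirly hypothesis of \cref{lem:out_partition_non-swirly}, $R_{s_2,v_2}$ together with the sub-arc of $F_{\leftside}$ from $s_x$ to $s_2$ through $s_1$, with $R_x$, and with $P[x,v_2]$, forms a simple closed curve that puts $s_1$ on its boundary and $v_1$ outside it. A Jordan-curve argument then forces $R_{s_1,v_1}$ to cross $R_{s_2,v_2}$, and by the uniqueness of shortest paths the crossing is at a common vertex $z$; in fact $R_{s_1,v_1}\cap R_{s_2,v_2}$ is a contiguous subpath of each.

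I would then locate some such $z$ in $\Otild(1)$ time using the enhanced MSSP of \cref{lem:enhanced_mssp}. The membership test ``is $u$ on $R_{s_2,v_2}$?'' reduces to $\dist(s_2,u)+\dist(u,v_2)=\dist(s_2,v_2)$, evaluable in $\Otild(1)$ for any $u=\ancestor(s_1,v_1,d)$. A ``side-of-$R_{s_2,v_2}$'' predicate on ancestors of $v_1$, which flips exactly once along $R_{s_1,v_1}$ as it crosses from the $s_1$-side to the $v_1$-side of the Jordan curve, drives a $\Otild(1)$ binary search on the depth $d$. Evaluating this predicate in $\Otild(1)$ from a bounded number of $\dist$ and $\direction$ queries on the constantly many faces around $u$ is the main technical obstacle; proving monotonicity should hinge on the fact that once $R_{s_1,v_1}$ leaves the shared subpath on the $v_1$-side it cannot return to the $s_1$-side without violating the uniqueness of shortest paths.

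With $z$ in hand, I query $\dist(s_1,z)$ and $\dist(s_2,z)$, which are distinct by the assumed uniqueness of distances. Assume WLOG that $\dist(s_1,z)<\dist(s_2,z)$. Choosing $z$ to be the last vertex of the shared subpath (closest to $v_2$ on $R_{s_2,v_2}$), the two halves meet only at $z$, so the walk $R:=R_{s_1,v_1}[s_1,z]\circ R_{s_2,v_2}[z,v_2]$ is a simple $s_1$-to-$v_2$ path of length $\dist(s_1,z)+\dist(z,v_2)<\dist(s_2,z)+\dist(z,v_2)=\dist(s_2,v_2)$. Its last edge is the final edge of $R_{s_2,v_2}$, which enters $P$ from the left because $v_2\in\Left(P,s_2)$; and because each half is a subpath of a non-swirly shortest path, $R$ avoids both $R_x$ and $R_y$ and is therefore non-swirly. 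Thus $s_1$ wins at $v_2$, and the algorithm returns $(w,t)=(1,2)$; the opposite inequality yields $(w,t)=(2,1)$ by the symmetric argument.
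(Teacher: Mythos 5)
Your endgame—splice the two shortest paths at the last vertex $z$ of their common subpath and award the win to whichever of $s_1,s_2$ is closer to $z$—matches the conclusion the paper reaches, and your length and direction calculations for the spliced path $R$ are correct. But the route you take to get there has real gaps, and the paper in fact deliberately avoids exactly the step you are stuck on: it states explicitly, ``we emphasize that the algorithm does not find $z$ itself at any time.''

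The first gap is that you skip the paper's initial case analysis (try all four pairs $(w,t)$ for a direct win). You jump straight to a Jordan-curve argument for the existence of a crossing between $R_{s_1,v_1}$ and $R_{s_2,v_2}$, but your cycle (built from $R_{s_2,v_2}$, the $F$-arc through $s_1$, $R_x$, and $P[x,v_2]$) has $s_1$ on its boundary rather than strictly on one side, which is not enough to force a crossing. The paper instead deduces from the failure of the four direct checks the four conditions (distances and directions at $v_1,v_2$) that make the cycle $C=R_{s_2,v_1}\circ P[v_1,v_2]\circ R_{s_1,v_2}\circ (\text{$F$-arc})$ well-behaved, and the cross configuration then follows from the cyclic order $s_2,v_1,v_2,s_1$ on $C$. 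Without those preconditions, the paths may not cross at all and your $z$ may not exist.

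The second gap is the one you flag yourself: you need to binary-search along $R_{s_1,v_1}$ for $z$ using a ``which side of $R_{s_2,v_2}$ is $u$ on?'' predicate. That predicate is not one of the operations in \cref{lem:enhanced_mssp}; in particular $\direction(s,v,P)$ is restricted to the fixed separator paths in the decomposition $T$, not to an arbitrary shortest path like $R_{s_2,v_2}$. You would need to both prove that some implementable predicate is monotone along $R_{s_1,v_1}$ and show how to evaluate it, neither of which you do. The paper's resolution is the key difference: it binary-searches for the \emph{preference-flip} vertex $w$ on $R_{s_1,v_1}$, i.e.\ the vertex where $\dist(s_1,\cdot)$ vs.\ $\dist(s_2,\cdot)$ switches sign. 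That predicate is monotone by the triangle inequality and is evaluated with plain $\dist$ and $\ancestor$ queries. Only then does the paper make a \emph{single} structural query (is $w$ left of, right of, or on $R_{s_2,v_2}$ in the tree of $s_2$?) and uses the auxiliary cycle $C_2$ to translate the answer into ``$z$ is closer to $s_1$'' or ``$z$ is closer to $s_2$'' without ever locating $z$. That one-query trick is precisely what your proposal is missing.
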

\begin{proof}
For every $w,t$ we check if $R_{s_w,v_t}$ enters $P$ from the left and whether $\dist(s_w,v_t)$ is smaller than the distance of the other site to $v_t$.
If such a pair is found, we return  this pair.
Otherwise, it must be that: (1) $\dist(s_2,v_1)<\dist(s_1,v_1)$, (2) $R_{s_2,v_1}$ enters $P$ from the right, (3) $\dist(s_1,v_2)<\dist(s_2,v_2)$, and (4) $R_{s_1,v_2}$ enters $P$ from the right.

Consider the cycle $C$ composed of $R_{s_2,v_1}$, $P[v_1,v_2]$  $R_{s_1,v_2}$, and the Jordan curve connecting $s_1$ and $s_2$ embedded in the face $F$.
We first claim that all edges entering $P$ from the left are on the left side of $C$.
To see this, consider making an incision along $P$ (duplicating its vertices and edges).
Since both $R_{s_1,v_2}$ and $R_{s_2,v_1}$ do not cross $P$ and enter $P$ from the right, they both use the right copy of $P$, and do not intersect the left copy of $P$. Hence, the left copy of $P$, and hence also all edges entering $P$ from the left, are on the left side of $C$.

The path $R_{s_1,v_1}$  does not intersect $R_{s_2,v_1}\setminus P$ (since the former enters $P$ from the left and the latter enters $P$ from the right).
Moreover, by uniqueness of shortest paths, $R_{s_1,v_1}$  does not cross $R_{s_1,v_2}$ nor $P$.
Thus, $R_{s_1,v_1}$ does not cross $C$.
Symmetrically, $R_{s_2,v_2}$ does not cross $C$.
Since both $R_{s_1,v_1}$  and  $R_{s_2,v_2}$ start on $C$, do not cross $C$ and enter $P$ from the left, they must be on the left side of $C$.

Since $\dist(s_1,v_2)<\dist(s_2,v_2)$, and $\dist(s_2,v_1)<\dist(s_1,v_1)$ it must be  that $R_{s_1,v_2}\cap R_{s_2,v_1}=\emptyset$.
Hence, in the counterclockwise cyclic order of $C$ the vertices appear as $s_2,v_1,v_2,s_1$.
Therefore, the paths $R_{s_1,v_1}$  and  $R_{s_2,v_2}$  form a cross configuration and must cross each other, such that $R_{s_1,v_1}$ crosses $R_{s_2,v_2}$ from left to right.

For simplicity, we assume that $R_{s_1,v_1}\cap R_{s_2,v_2}$ contains a single vertex $z$.
In the general case, it may be a contiguous subpath and the proof is similar.
Our goal now is to determine whether $z$ is closer to $s_1$ or to $s_2$.
We emphasize that the algorithm does not find $z$ itself at any time.

Consider the path $R_{s_1,v_1}$ and notice that $\dist(s_1,s_1)<\dist(s_2,s_1)$ and $\dist(s_2,v_1)<\dist(s_1,v_1)$.
Since $R_{s_1,v_1}$ is a shortest path, it must be that for some  vertex $w\in R_{s_1,v_1}$  we have that all vertices in $R_{s_1,v_1}[s_1,w)$ are closer to $s_1$ than to $s_2$ and all vertices in $R_{s_1,v_1}[w,v_1]$ are closer to $s_2$ than to $s_1$.
Using $\ancestor(s_1,v_1,d)$ queries of
the enhanced MSSP (in $\Otild(1)$ time per query), the algorithm binary-searches for $w$, checking at every step which of $s_1$ and $s_2$ is closer to the queried vertex.
After finding $w$, the algorithm uses enhanced MSSP
to check whether $w$ appears to the right, to the left, on $R_{s_2,v_2}$
in the shortest path tree rooted at $s_2$.
We note that $w$ cannot be a descendant of $v_2$ in the tree of $s_2$, since every vertex on $R_{s_2,w}$ is closer to $s_2$ than to $s_1$, while $v_2$ is closer to $s_1$ than to $s_2$.
If $w$ is on $R_{s_2,v_2}$ then $z=w$ is closer to $s_2$ than to $s_1$.

We will show that if $w$ is to the right of $R_{s_2,v_2}$ in the tree of $s_2$ then $z$ is closer to $s_1$ than to $s_2$.
Consider the cycle $C_2$ obtained by concatenating $R_{s_2,v_2}, R_{v_2,s_1}$ and the Jordan curve connecting $s_1$ and $s_2$ embedded in the face $F$.
We consider $C_2$ to be oriented from $s_2$ to $v_2$.
Notice that $R_{s_1,v_1}[s_1,z]$ does not cross $C_2$ since $R_{s_1,v_1}$ and $R_{s_2,v_2}$ cross each other only once.
Recall that $R_{s_1,v_1}[s_1,z]$ is to the left of $R_{s_2,v_2}$, and therefore to the left of $C_2$.
Additionally recall $R_{s_1,v_1}[z,v_1]$ is to the right of $R_{s_2,v_2}$, and therefore to the right of $C_2$.
Consider the first vertex $u$ of $R_{s_2,w}$ which is not on $R_{s_2,v_2}$.
Since $w$ is to the right of $R_{s_2,v_2}$, it holds that $u$ is to the right of $C_2$.
Moreover $Q=R_{s_2,w}[u,w]$ is strictly to the right of $C_2$.
This is because $Q$ is disjoint from $R_{s_2,v_2}$ by uniqueness of shortest paths, and is disjoint from $R_{s_1,v_2}$ since $v_2$ is closer to $s_1$ and $w$ is closer to $s_2$.
Thus, $Q$ is disjoint from $C_2$, and therefore $w$ is to the right of $C_2$ and therefore $w\in R_{s_1,v_1}(z,v_1]$ and this means $z$ is closer to $s_1$ than to $s_2$ by definition of $w$.
A similar argument shows that if $w$ is to the left of $R_{s_2,v_2}$ in the shortest paths tree of $s_2$, then $z$ is closer to $s_2$ than to $s_1$.

Thus, the algorithm deduces whether $z$ is closer to $s_1$ or to $s_2$.
If $z$ is closer to $s_1$, the algorithm reports that $s_1$ wins at $v_2$.
This is because the path $R=R_{s_1,v_1}[s_1,z]\circ R_{s_2,v_2}[z,v_2]$ enters $P$ from the left and $\len(R)<\dist(s_2,v_2)$.
Moreover, $R$ is a concatenation of subpaths of two non-swirly paths from vertices on $F_{\leftside}$, and therefore $R$ is a non-swirly path.
Similarly, if $z$ is closer to $s_2$, the algorithm reports that $s_2$ wins at $v_1$.
This is because the path $R=R_{s_2,v_2}[s_2,z]\circ R_{s_1,v_1}[z,v_1]$ enters $P$ from the left and $\len(R)<\dist(s_1,v_1)$, and $R$ is a non-swirly path.

Finally, the running time of the algorithm is indeed $\Otild(1)$, since all distance queries, and directions can be answered in $\Otild(1)$ time per query by the enhanced MSSP of \cref{lem:enhanced_mssp} and the binary search of $w$ increases the running time only by an additional $\Otild(1)$ factor.
\end{proof}

Using \cref{clm:winner_takes_something,clm:no-criss-cross,clm:criss-cross} we introduce a recursive binary search partition algorithm, completing the proof of \cref{lem:out_partition_non-swirly}.
In each recursive step, the algorithm works on a subpath $\bar P=P[a,b]$, and returns a partition of $\bar P$ into a (possibly empty) prefix $\bar P_1$ and a (possibly empty) suffix $\bar P_2$  such that every $v\in\bar P_i$  is a vertex that $(\{s_1,s_2\},\leftside)$-likes $s_i$ (for $i\in\{1,2\}$).

When working on $\bar P = P[a,b]$, the algorithm starts by finding a `left median' vertex of $s_1$ in $P[a,b]$.
Formally, a left median of $s_1$ in $P[a,b]$ is a vertex $v_1 \in P[a,b]$ such that the shortest path between $s_1$ and $v_1$ enters $P$ from the left, and the number of vertices whose shortest path from $s_1$ enters $P$ from the left  in $P[a,v_1]$ differs by at most  $1$ from the number of vertices whose shortest path from $s_1$ enters $P$ from the left in $P[v_1,b]$.
The algorithm finds left medians $v_1$ and $v_2$ for $s_1$ and $s_2$, respectively, in $P[a,b]$ by using a $\countA$ query to the enhanced MSSP, and then applying binary search using $\select$ and $\countA$ queries.

If no left-median exists for $s_1$, the algorithm returns the partition $\bar P_1 = \emptyset$ and $\bar P_2 = \bar P = P[a,b]$.
Similarly, if no left-median exists for $s_2$, the algorithm returns the partition $\bar P_1 = \bar{P}$ and $\bar P_2 = \emptyset$.
Otherwise, depending on the order of $v_1$ and $v_2$ on $P$, the algorithm either applies \cref{clm:no-criss-cross} or \cref{clm:criss-cross} to find $w\in \{ 1,2\}$ and $t\in \{1,2\}$ such that $s_w$ wins at $v_t$.
If $w=1$, the algorithm recursively obtains a partition $(P'_1,P'_2)$ of $P(v_t,b]$, and returns $\bar P_1 = P[a,v_t] \circ P'_1$ and $\bar P_2 = P'_2$.
If $w = 2$, the algorithm recursively obtains a partition $(P'_1,P'_2)$ of $P[a,v_t)$ and returns $\bar P_1 = P'_1$ and $\bar P_2 = P'_2 \circ P[v_t,b]$.

\medskip
\noindent
{\bf Correctness.}
The correctness of the halting condition follows from the definition of an $(\{s_1,s_2\},\leftside)$-partition.
If there is no left-median for $s_1$ (resp. $s_2$), in particular there are no vertices that $s_1$ (resp. $s_2$) reaches from the left on $\bar P$.
Therefore, every vertex on $\bar P$ is a vertex that $(\{s_1,s_2\},\leftside)$-likes $s_2$ (resp. $s_1$) and therefore a partition that sets $P_2 = \bar P$ (resp. $P_1 = \bar{P}$) is valid.
The correctness of the recursive step of the algorithm follows directly from \cref{clm:winner_takes_something} and the correctness of the recursion.

\medskip
\noindent
{\bf Complexity.}
The non-recursive part of the algorithm consists of a polylogarithmic number of queries for the enhanced MSSP, and therefore takes $\Otild(1)$ time.
We claim that the recursion depth is bounded by $O(\log n)$.
This is because in every recursive step the algorithm either reduces the number of vertices that $s_1$ reaches from the left or the number of vertices that $s_2$ reaches from the left - by half.
Since initially each of these numbers is bounded by $|P| \le n$, a halting condition must be satisfied after at most $2 \log n$ recursive calls.
The overall time complexity is therefore $\Otild(1)$ thus completing the proof of \cref{lem:out_partition_non-swirly}.\qed

\medskip
The following Lemma is the general version of \cref{lem:out_partition_non-swirly} in the presence of swirly paths. Its proof is in \cref{app:two_sites}.

\begin{restatable}{lemma}{lemtwositespartition} \label{lem:s1s2partition}
Let $s_1,s_2\in F_{\leftside}\setminus P$ be two sites such that $s_1$ is closer to $s_x$ on $F_{\leftside}$ than $s_2$.
One can compute in $\Otild(1)$ time an $(\{s_1,s_2\},\leftside)$-partition of $P$ into $P_1,P_2$ such that $x\in  P_1$ and $y\in P_2$, unless the partition is trivial.
\end{restatable}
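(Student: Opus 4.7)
The plan is to reduce \cref{lem:s1s2partition} to its non-swirly counterpart \cref{lem:out_partition_non-swirly} by using the $(\{s_x,s_y\},\leftside)$-partition (obtained via \cref{sec:xy-partition}) as a scaffold that localizes the portions of $P$ where swirly paths from $s_1$ or $s_2$ can be active. Compared to the non-swirly case, the obstacle is that the cycle arguments in \cref{clm:winner_takes_something,clm:no-criss-cross,clm:criss-cross} break when the paths from $s_1$ or $s_2$ cross $R_x$ or $R_y$; the goal is to carve $P$ into subpaths on which those obstructions are either absent or irrelevant.

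First I would invoke \cref{sec:xy-partition} to produce the $(\{s_x,s_y\},\leftside)$-partition $(Q_x,Q_y)$ of $P$, with $x\in Q_x$ and $y\in Q_y$, together with the boundary vertex $b$ between them. I would then establish a locality claim: an $x$-swirly left-entering shortest path from $s_1$ or $s_2$ can only reach vertices of $P$ lying in a contiguous subpath adjacent to $x$, and symmetrically for $y$-swirly paths near $y$. The justification is topological: once $R_{s_i,v}$ crosses $R_x$, the cycle formed by its prefix up to the crossing together with $R_x$ and the Jordan arc from $s_x$ to $s_i$ along $F_\leftside$ isolates $s_i$ from every vertex of $P$ outside some prefix of $P[x,\cdot]$. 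The endpoints of these swirly subpaths can be pinned down in $\Otild(1)$ time using $\direction$ together with $\ancestor$ and binary search on $P$ via $\select$ and $\countA$ queries of \cref{lem:enhanced_mssp}.

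After isolating at most two swirly subpaths $P^x \subseteq Q_x$ near $x$ and $P^y \subseteq Q_y$ near $y$, and the middle subpath $\tilde P$ on which no left-entering shortest path from $s_1$ or $s_2$ is swirly, I would apply \cref{lem:out_partition_non-swirly} to $\tilde P$ to obtain a two-part $(\{s_1,s_2\},\leftside)$-partition of it. For the swirly region $P^x$, every vertex $v$ falls into one of two harmless cases: either its true preferred site in $\Vor_{\{s_1,s_2,s_x,s_y\}}$ lies outside $\{s_1,s_2\}$, in which case the relaxed definition of like lets me assign $v$ to $P_1$, or its preferred site is one of $s_1,s_2$ reached by an $x$-swirly path; I argue via a cycle bounded by $R_x$ and $R_{s_i,v}$ that this forces $v$ to like $s_1$ in the $(\{s_1,s_2\},\leftside)$ sense. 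A symmetric argument assigns every vertex of $P^y$ to $P_2$. Concatenating the three sub-partitions yields the required $(P_1,P_2)$ with $x\in P_1$ and $y\in P_2$, and the total runtime is $\Otild(1)$ since each component takes $\Otild(1)$.

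The main obstacle is the locality claim together with the case analysis on the swirly subpaths: one must rule out the scenario in which a vertex $v\in P^x$ is truly preferred by $s_2$ via a non-swirly left-entering path while simultaneously being reached from the left by an $x$-swirly path from $s_1$, so that both sites are live and the assignment to $P_1$ would be wrong. The resolution is to show that the locality claim applies symmetrically to $s_2$, so that any such non-swirly path from $s_2$ would force $v$ to lie outside $P^x$ in the first place. Once this bookkeeping is in place, combined with the observation that the enhanced MSSP detects swirliness of individual shortest paths in $\Otild(1)$ time via $\direction$ and $\ancestor$ queries against $R_x$ and $R_y$, the rest of the proof mirrors the non-swirly binary search and stays within the $\Otild(1)$ budget.
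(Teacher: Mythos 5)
Your architecture mirrors the paper's: start from the $(\{s_x,s_y\},\leftside)$-partition (\cref{lem:out_partition_xy}), carve off swirly prefixes/suffixes, apply the non-swirly two-site routine (\cref{lem:out_partition_non-swirly}) to the middle, and stitch. But two load-bearing steps do not hold up as written. First, the claim that the enhanced MSSP ``detects swirliness of individual shortest paths in $\Otild(1)$ time via $\direction$ and $\ancestor$ queries against $R_x$ and $R_y$'' is exactly what the paper says it does \emph{not} know how to do efficiently. The paper avoids direct detection: for $s_x,s_y$ it characterizes swirliness via the position of $v$ relative to $x$ or $y$ in the shortest paths tree (\cref{lem:rightloseleftswirl}); for general $s$ it splits $\Left(P,s)$ into at most two intervals by tree order (\cref{clm:being-left-to-vtop-is-monotone}, \cref{lem:TopTreeBotTree-Partition}), proves each interval is uniformly of one swirly type (\cref{lem:top-bottom-swirl-together}), and classifies an interval only \emph{indirectly} via containment in $P_x$ or $P_y$ together with distance comparisons to $s_x,s_y$ (\cref{lem:classifyinterval}). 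Also, your orientation is reversed: by \cref{lem:s-x-swirl} and \cref{cor:Px-not-x-swirly} it is the $y$-swirly paths that land in $P_x$ near $x$ (and $x$-swirly paths that land near $y$), not the other way around; this matters because your cycle arguments would be drawn around the wrong one of $R_x,R_y$.

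The more serious gap is in justifying the assignment of the near-$x$ swirly region to $P_1$. You correctly flag the dangerous case -- a vertex $u$ in that region might be the true winner of $s_2$ via a non-swirly left-entering path -- but your resolution (``show that the locality claim applies symmetrically to $s_2$, so that any such non-swirly path from $s_2$ would force $v$ to lie outside $P^x$'') does not follow. The locality claim constrains where \emph{swirly} paths land; it says nothing about non-swirly paths, and $v$ lying in $P^x$ is determined by $s_1$'s swirly reach regardless of whether $s_2$ also has a short non-swirly path nearby. The paper proves this step as a dedicated lemma (\cref{lem:s1-swirl-s2-cannot-win}): for $w=2$ the swirly path makes $s_2$ lose to $s_x$ on the whole prefix (via \cref{lem:swirly-monotone}), so the preferred site falls outside $\{s_1,s_2\}$ and the relaxed definition applies; but for $w=1$ one needs a fresh topological argument that builds two cycles out of $R_{s_1,v}$, $R_x$, $R_y$, and $P$ and shows a hypothetical winning $R_{s_2,u}$ would be forced to cross back into $R_{s_1,v}$ through $R_y$, contradicting that $R_{s_1,v}$ is $y$-swirly. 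That argument is the crux of the two-site swirly case, and nothing in your outline supplies it. (The paper also treats $s_1=s_x$ and $s_2=s_y$ as separate cases since \cref{lem:Partition-P-for-one-site} is stated only for $s\notin\{s_x,s_y\}$; your uniform treatment would need to account for that.)
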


\subsection{Relaxed partition for $H$ (proof of \cref{lem:out_partition})}\label{sec:H-partition}

Using \cref{lem:s1s2partition} we are finally ready to prove \cref{lem:out_partition} on the construction of a relaxed partition for $H$.
By \cref{lem:Hleft-is-enough}, it is enough to compute an $(H_{\leftside},\leftside)$-partition.
If $|H_{\leftside}|=1$ the partition is trivial. If $|H_{\leftside}|=2$ we obtain the partition from \cref{lem:s1s2partition}.
When $|H_{\leftside}|=3$ with $s_1,s_2,s_3$ being the three sites according to their order on $F_{\leftside}$, starting from $s_1 = s_x$, we apply \cref{lem:s1s2partition} on three pairs $(s_1,s_2)$, $(s_2,s_3)$ and $(s_1,s_3)$.
Let $P_i^{i,j}$ denote that part corresponds to $s_i$ in the partition computed for $(s_i,s_j)$.
If $P^{1,2}_2\cap P^{2,3}_2\ne\emptyset$ the algorithm returns $P^{1,2}_1,P^{1,2}_2\cap P^{2,3}_2,P^{2,3}_2$.
Otherwise, if $P^{1,2}_2\cap P^{2,3}_2=\emptyset$  the algorithm returns $P^{1,3}_1,P^{1,3}_3$.
Clearly, the algorithm takes $\Otild(1)$ time.

\medskip
\noindent
{\bf Correctness.}
We first consider the case where $P^{1,2}_2\cap P^{2,3}_2=\emptyset$.
Let $v\in P^{1,3}_1$ and assume to the contrary that $v$ does not $(\{s_1,s_2,s_3\},\leftside)$-like $s_1$.
It must be that there exists $c\in\{s_1,s_2,s_3,s_x,s_y\}$ such that $v\in\Vor_{\{s_1,s_2,s_3,s_x,s_y\}}(c)$ and $v\in\Left(P,c)$.
since $v$ is a vertex that $(\{s_1,s_3\},\leftside)$-likes $s_1$, it must be that $c=s_2$.
However, this means that $v$ does not $(\{s_1,s_2\},\leftside)$-like $s_1$ and also $v$ does not $(\{s_2,s_3\},\leftside)$-like $s_3$.
Hence, $v\in P^{1,2}_2\cap P^{2,3}_2$, a contradiction.
Thus, $v$ indeed $(\{s_1,s_2,s_3\},\leftside)$-likes $s_1$.
The proof for vertices in $P^{1,3}_3$ is similar.

We next consider the case $P^{1,2}_2\cap P^{2,3}_2\ne \emptyset$.
For this case we first prove that every $v\in P_2=P^{1,2}_2\cap P^{2,3}_2$ must $(\{s_1,s_2,s_3\},\leftside)$-like $s_2$.
Assume to the contrary that there exists some $v\in P_2$ such that $v\in\Vor_{\{s_1,s_2,s_3,s_x,s_y\}}(c)$ and $v\in\Left(P,c)$ and $c\in\{s_1,s_2,s_3\}$ and $c\ne s_2$.
W.l.o.g. we can assume $c=s_1$ (the case $c=s_3$ is symmetric).
However, in this case $v$ does not $(\{s_1,s_2\},\leftside)$-like $s_2$, contradicting $v\in P^{1,2}_2$.

Finally, we prove that every $v\in P^{1,2}_1$ must $(\{s_1,s_2,s_3\},\leftside)$-like $s_1$ (the proof for $P^{2,3}_3$ is symmetric).
Notice that $P^{1,2}_1\subset P^{2,3}_2$ since $P^{1,2}_2\cap P^{2,3}_2\ne \emptyset$.
Assume to the contrary there exists some $v\in P^{1,2}_1$ such that $v\in\Vor_{\{s_1,s_2,s_3,s_x,s_y\}}(c)$ and $v\in\Left(P,c)$ and $c\in\{s_1,s_2,s_3\}$ and $c\ne s_1$.
It cannot be that $c=s_2$, because $v\in P^{1,2}_1$.
Assume $c=s_3$, then we have that $v$ does not $(\{s_2,s_3\},\leftside)$-like $s_2$, contradicting $v\in P^{1,2}_1\subset P^{2,3}_2$.

\medskip
\noindent
{\bf Complexity.} Finally, the running time of the algorithm is $\Otild(1)$ since all distance and direction queries are answered in $\Otild(1)$ time per query by the enhanced MSSP data structure of \cref{lem:enhanced_mssp}
and the binary search of $w$ increases the running time only by additional $\Otild(1)$ time.
This concludes the proof of \cref{lem:out_partition}.

\begin{remark}\label{remark:linear_time_partition}
    We note that the proof of \cref{lem:out_partition} is written for $|H|=3$.
    However, the algorithm described in the proof can be easily generalized to an  algorithm that constructs a partition for a set of $k$ sites from a partition of $k-1$ sites.
    A standard amortization argument shows that the total running time for constructing a partition of $k$ sites is $\Otild(k)$.
\end{remark}

\section{From a Relaxed Partition to a Trichromatic Face}\label{sec:partition_to_trichromatic}
In this section we prove the following lemma, stating that one can find a trichromatic face in $\Otild(1)$ time.

\begin{lemma}\label{lem:trichromatic}
    Given a planar graph $G$, and a face $F$, one can preprocess $G$ in $\Otild(|G|)$ time such that given a subset $H\subseteq F$ of 3 sites, and additive weights $w: H \rightarrow \R^+$, one can compute the trichromatic face of $\VD(H,w)$ in time $\Otild(1)$.
\end{lemma}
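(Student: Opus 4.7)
The plan is to build, in $\tilde O(|G|)$ preprocessing time, a complete recursive shortest-path-separator decomposition tree $T$ of $G$ together with the enhanced MSSP data structure of \cref{lem:enhanced_mssp}. Given a query consisting of a set $H\subseteq F$ of three sites and additive weights $w$, I would locate the (unique, if it exists) trichromatic face of $\VD(H,w)$ by descending $T$ from the root, spending $\tilde O(1)$ time at each level. Since the separators are balanced and $T$ has depth $O(\log |G|)$, this yields total query time $\tilde O(1)$.

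At each internal node of $T$ the separator is a cycle $Q$ formed by two shortest paths $P$ and $P'$ and a single edge. The key task is to decide which side of $Q$ contains the trichromatic face. I would invoke \cref{lem:out_partition} and its $\rightside$-symmetric counterpart on both $P$ and $P'$ to obtain, in $\tilde O(1)$ time, the $(H,\leftside)$- and $(H,\rightside)$-relaxed partitions of each path. Together these two partitions of a shortest path assign to every vertex at most two candidate sites in $H$, one of which is guaranteed to be the vertex's true preferred site in $\VD(H,w)$. Since $|H|=3$, the candidate labels change at only $O(1)$ transition points along $Q$, and with a constant number of additional distance queries to the enhanced MSSP I can resolve the ambiguity at these transitions and recover the true cyclic sequence of Voronoi cells of $\VD(H,w)$ encountered while walking around $Q$.

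From this cyclic sequence I deduce the side of $Q$ containing the trichromatic face. If fewer than three distinct sites appear on $Q$, then the trichromatic face lies on the side whose interior contains the missing site(s), and this side can be identified by locating each $h\in H$ in the decomposition in $\tilde O(1)$ time. If all three sites appear on $Q$, the cyclic pattern of cells determines the side: because the three bisectors of $\VD(H,w)$ meet at a single point and divide the plane into three angular sectors, the cyclic order of cells along $Q$ is compatible with the trichromatic face lying on exactly one of the two sides of $Q$, and that side is read off directly from the pattern. I then recurse on the chosen side. At the leaves of $T$ (pieces of constant size) the trichromatic face is computed directly.

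The main obstacle is the step of turning the relaxed partitions into a decision about which side of $Q$ contains the trichromatic face. Since the relaxed partitions give only candidate sites, care is required both to resolve the $O(1)$ ambiguities along $Q$ using the enhanced MSSP and to handle degenerate configurations, for example when a Voronoi cell is empty, when the trichromatic face lies on $Q$ itself, or when a bisector is tangent to $Q$ rather than crossing it transversally. This step is the natural analogue, for general undirected planar graphs, of the binary-search paradigm used for alignment graphs in \cite{CharalampopoulosGM21}; once it is in place, the rest of the argument is a standard balanced recursion on the shortest-path-separator decomposition.
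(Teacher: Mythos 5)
Your preprocessing, descent strategy, and use of \cref{lem:out_partition} to get relaxed $(H,\leftside)$- and $(H,\rightside)$-partitions of the two separator paths all match the paper. The gap is in the step you yourself flag as the obstacle: you claim that, after resolving ambiguities at the $O(1)$ interval endpoints, you can ``recover the true cyclic sequence of Voronoi cells of $\VD(H,w)$ encountered while walking around $Q$'' and read off the correct side from that pattern. This is not achievable in $\Otild(1)$ time. The relaxed partition breaks $Q$ into $O(1)$ intervals, and within one interval $I$ the vertices lie in at most two cells, so only one bisector can cross $I$ --- but that single bisector can still cross $I$ a non-constant (even $\Theta(n)$) number of times. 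Thus the actual sequence of cells along $Q$ can have $\Theta(n)$ alternations, and the ``cyclic pattern'' you want to inspect is not of bounded size; the paper points out this fragmentation explicitly. Your fallback case (``fewer than three sites appear on $Q$'') also does not cleanly separate from the general one for the same reason, and the remark about locating $h\in H$ on one side of $Q$ is murky since the sites live on the outer face $F$, which is never enclosed by $Q$.

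What closes the gap in the paper is a parity argument via the Jordan Curve Theorem: the trichromatic face $\hat f$ is enclosed by $Q$ if and only if each of the three bisectors meeting at $\hat f$ crosses $Q$ an \emph{odd} number of times (each bisector starts at $F$, which is outside $Q$, and ends at $\hat f$). The relaxed partition is exactly enough to compute these parities in $\Otild(1)$ time: in each of the $O(1)$ intervals $I$, only one bisector can cross, and the parity of its crossings of $I$ is even iff the two endpoints of $I$ lie in the same Voronoi cell, which is checked with $O(1)$ enhanced-MSSP distance queries. Summing parities over the intervals gives the crossing parity of $Q$ by each bisector, and hence the side on which to recurse. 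Your proposal has all the ingredients except this parity reduction, and without it the decision at each node of $T$ cannot be made in $\Otild(1)$ time.
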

\begin{proof}
Recall that a trichromatic VD (a VD with three sites) has at most one trichromatic face $\hat f$ (apart from the face containing the three sites). We show how to identify $\hat f$ recursively, starting from the root $G$ of the recursive decomposition tree $T$. At a step of the recursion involving a subgraph $\X$, we determine whether $\hat f$ is enclosed by the cycle separator $Q$ of $\X$ or not, and recurse on the appropriate child of $\X$ in $T$ until we get to a subgraph of constant size that contains $\hat f$ (in which we identify $\hat f$ trivially using $O(1)$ distance queries from the sites using the MSSP data structure in $O(\log n)$ time).

We obtain a relaxed partition of $Q$ using \cref{lem:out_partition}. We next  explain how to use the relaxed partition to deduce whether $\hat f$ is enclosed by $Q$ or not.
The following immediate consequence of the Jordan curve theorem implies that it suffices to compute the parity of the number of crossings of the three bisectors forming the trichromatic VD.
The challenge is that each bisector may cross $Q$ many times (even $\Theta(n)$ times), so we cannot afford to actually count all of the crossings in order to compute the parity.

\begin{claim}
    The trichromatic face $\hat f$ is enclosed by $Q$ if and only if each of the 3 bisectors that meet at $\hat f$ crosses $Q$ an odd number of times.
\end{claim}
\begin{proof}
    In the trichromatic VD, each of the three bisectors originates at the face $F$ and terminates at $\hat f$. The cycle separator $Q$ partitions the faces of $G$ into two sets, and by definition of $F$ as the infinite face, and of enclosure, $F$ is not enclosed by $Q$. By the Jordan Curve theorem, any path in the plane starts at a point not enclosed by $Q$ and ends at a point enclosed by $Q$ if and only if it crosses $Q$ an odd number of times.
\end{proof}

The relaxed partition of $Q$ consists of $O(1)$ intervals for each of the two paths forming $Q$, and each of the two sides of these paths.
The $O(1)$ endpoints of all these intervals partition $Q$ into $O(1)$ intervals.
Consider such an interval $I$ of $Q$. The vertices of $I$ are all assigned by the relaxed partition a single site from each side. Hence, the vertices of $I$ belong to at most two Voronoi cells in VD. Therefore, the interval $I$ can only be crossed by the bisector between these two cells.
Moreover, the bisector crosses $I$ an even number of times if and only if both endpoints of  $I$ belong to the same cell. We can determine whether this is the case or not by using $O(1)$ MSSP queries on the endpoints of $I$.

Thus, we can determine the parity of the number of crossings of $Q$ by each bisector by summing the parities contributed by each of the $O(1)$ intervals, and deduce if $\hat f$ is enclosed by $Q$ in $O(\log n)$ time.
\end{proof}

\section{Computing the Voronoi Diagram}\label{sec:computingVD}

In this section, we describe an algorithm that, given access to the mechanism for computing trichromatic faces provided by
\cref{lem:trichromatic}, computes $\VD^*(F)$ in $\Otild(|F|)$ time. Thus, we establish the main theorem of this paper (note that we use $F$ to denote both the set of sites and the face to which they belong):

\begin{theorem}\label{thm:main}
    Given a planar graph $\X$, and a face $F$, one can preprocess $\X$ in $\Otild(|\X|)$ time such that given  additive weights $\omega: F \rightarrow \R^+$, one can compute $\VD^*(F)$ in $\Otild(|F|)$ time.
\end{theorem}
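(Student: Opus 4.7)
The plan is to invoke the divide-and-conquer construction of \cite{DBLP:journals/siamcomp/GawrychowskiKMS21}, using \cref{lem:trichromatic} as an $\Otild(1)$-time oracle for the primitive operation that was previously the bottleneck. The preprocessing amounts to: (i) constructing the recursive decomposition tree $T$ of $\X$ using shortest path separators in $\Otild(|\X|)$ time, (ii) building the standard MSSP for the sites of $F$, and (iii) installing the enhanced MSSP data structure of \cref{lem:enhanced_mssp}, together with the preprocessing needed for \cref{lem:trichromatic}. All of this costs $\Otild(|\X|)$ time and does not depend on $\omega$, so it can be done once at the start.

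At query time, given $\omega$, I would recursively build $\VD^*(F)$ by splitting $F$ into two contiguous arcs $F_1, F_2$ of roughly equal size along the face, recursively computing $\VD^*(F_1)$ and $\VD^*(F_2)$ with the inherited weights, and then merging them into $\VD^*(F)$. The merge is the heart of the algorithm: in $\VD(F)$, the union of Voronoi cells of sites in $F_1$ and the union of cells of sites in $F_2$ are separated in the dual by a single simple path (a tree with two leaves, since $F_1, F_2$ are each contiguous on $F$), whose internal nodes are exactly the new trichromatic faces not present in the two sub-diagrams. These new trichromatic faces, together with the ones inherited from $\VD^*(F_i)$ that survive the restriction, make up $\VD^*(F)$.

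To perform the merge I would trace this separating path in the dual, starting from its two endpoints on $F$ (which are easy to identify by comparing distances from $O(1)$ candidate sites using the enhanced MSSP). At each step, we are at a current dual edge whose two incident cells correspond to an $F_1$-site $s_1$ and an $F_2$-site $s_2$; the next trichromatic face is the meeting point of $\beta^*(s_1, s_2)$ with some $\beta^*(s_1, s_3)$ or $\beta^*(s_2, s_3)$, where $s_3$ is the third site that ``claims'' the next face along the bisector. The sub-diagrams $\VD^*(F_1)$ and $\VD^*(F_2)$ provide, in $O(1)$ time, the list of $O(1)$ candidate sites $s_3$ (the neighbors of $s_1$ in $\VD^*(F_1)$ and of $s_2$ in $\VD^*(F_2)$ along the current portion of the boundary). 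For each candidate triple $\{s_1, s_2, s_3\}$ we invoke \cref{lem:trichromatic} in $\Otild(1)$ time to locate its trichromatic face and compare; the correct triple is identified in $\Otild(1)$ time per step.

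The step I expect to be the main technical obstacle is verifying that the tracing process terminates in $O(|F|)$ steps overall across all recursive levels, and that the bookkeeping needed to maintain, in constant amortized time per step, the $O(1)$ candidates for the next trichromatic face from $\VD^*(F_1)$ and $\VD^*(F_2)$, can be supported by the natural dual-tree representation of these sub-diagrams. Granting this, each recursive merge runs in $\Otild(|F_1| + |F_2|)$ time since the number of trichromatic faces on the separating path is $O(|F_1| + |F_2|)$, so the recurrence $T(k) = 2T(k/2) + \Otild(k)$ yields $T(|F|) = \Otild(|F|)$, establishing \cref{thm:main}.
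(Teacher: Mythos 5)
Your high-level plan matches the paper's: split the sites into two contiguous arcs, recurse, and merge by tracing the bisector $\beta^*$ between the two groups, using \cref{lem:trichromatic} as the $\Otild(1)$ oracle that finds the next trichromatic face. The preprocessing you list is exactly what the paper does. So the skeleton is right.

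The gap is in the merge step, precisely at the point you flag as ``the main technical obstacle'' and then grant yourself. You claim that at each step of the trace, the sub-diagrams $\VD^*(F_1)$ and $\VD^*(F_2)$ supply $O(1)$ candidates $s_3$ by looking at ``the neighbors of $s_1$ in $\VD^*(F_1)$ along the current portion of the boundary.'' But the boundary of the cell of $s_1$ in $\VD^*(F_1)$ is a path of potentially $\Theta(|F_1|)$ Voronoi vertices, so $s_1$ can have $\Theta(|F_1|)$ neighbors; without additional structure there is no $O(1)$ candidate set, and even maintaining a ``current position'' on the cell boundary is ill-defined because $\beta^*(G,R)$ does not simply walk monotonically along that boundary — it can cut a cell's boundary into several arcs, some of which survive on the green side and some of which do not. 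The paper resolves this with an explicit preprocessing pass before the trace: it performs a \emph{deletion process} on $\VD^*(G)$ (one point-location in $\VD^*(R)$ per node), marking which Voronoi vertices fall on the red side, and then identifies the \emph{dangling edges} — edges with exactly one deleted endpoint, which are exactly where $\beta^*(G,R)$ crosses $\VD^*(G)$. Lemma~\ref{lem:merge_bisector_dangling_edge_components} shows each surviving component of $\VD^*(G)$ has at most one dangling edge, and Lemma~\ref{lem:unique_connected_components} shows each site is associated with at most two dangling edges ($g_1$ and $g_{k/2}$ with exactly one). These structural facts are what give the unique, efficiently retrievable candidate at each step of the trace; your proposal does not establish them or anything playing their role, and they are not routine ``bookkeeping'' — the proof of Lemma~\ref{lem:merge_bisector_dangling_edge_components} is a genuine Jordan-curve argument.

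A second, smaller omission: the recursive call on the green sites must be made on a graph where those sites are exactly the vertices of a single non-triangular face. The paper achieves this by inserting an artificial infinite-weight edge between $g_1$ and $g_{k/2}$ and triangulating the leftover region, then deleting the resulting artificial Voronoi vertices before the merge. Without this, the recursive invocation of the construction (and hence the structural properties of $\VD^*(G)$ used in the merge) is not well-formed. You should add this to make the recursion legal.
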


\noindent
{\bf Representing the Diagram.}
We use the definitions of the dual Voronoi diagram $\VD^*(F)$ from~\cite{ourJACM}.
As was done there, we assume that the graph we work with is triangulated, except for the single face $F$, whose vertices are exactly the set of sites of the diagram. We also assume that each site induces a non empty Voronoi cell (including at least the site itself).
Thus, $\VD^*(F)$ is a degree-3 tree with $O(|F|)$ nodes whose leaves are the copies of the dual vertex $F^*$ (corresponding to the face $F$).
Following~\cite{DBLP:journals/siamcomp/GawrychowskiKMS21}, we use the Doubly-Connected Edge List (DCEL) data structure for representing planar maps to represent the tree $\VD^*(F)$.

\medskip
\noindent
{\bf The Divide-and-Conquer Mechanism.}
Denote $|F| = k$.
We describe a divide-and-conquer algorithm for constructing $\VD^*(F)$ in $\Otild(k)$ time.
 If $k < 3$ then $\VD^*(F)$ contains no trichromatic vertices and its representation is trivial.
If $k = 3$, then $\VD^*(F)$ consists of either no trichromatic faces or just the single trichromatic face $\hat f$ obtained using \cref{sec:partition_to_trichromatic}.
When $k>3$ we partition the set $S$ of sites into two contiguous subsets along the face $F$ of (roughly) $k/2$ sites each.
For simplicity, we assume that each subset has size exactly $k/2$.
We call the sites $G = g_1, \dots, g_{k/2}$ in one subset the green sites, listed in counterclockwise order along $F$.
Similarly, we call the other subset $R = r_1, \dots, r_{k/2}$ the red sites, listed in clockwise order along $F$.
Note that the ordering is such that $g_1$ and $r_1$ are neighboring sites on the face  $F$.
We recursively compute $\VD^*(G)$ and $\VD^*(R)$, the Voronoi diagram of $G$ and of $R$, respectively.
We now describe how to merge these two diagrams into $\VD^*(F)$ in $\Otild(k)$ time.
The idea is similar to the stitching algorithm used in~\cite{DBLP:journals/siamcomp/GawrychowskiKMS21}. The main differences are that unlike~\cite{DBLP:journals/siamcomp/GawrychowskiKMS21} we have not precomputed the bisectors, but on the other hand, we utilize the point location mechanism of $\VD^*(G)$ and $\VD^*(R)$, which was not done in~\cite{DBLP:journals/siamcomp/GawrychowskiKMS21}.
In a nutshell, consider a super green vertex $G$ connected to all green sites with edges whose lengths correspond to the additive weight to each green site. Similarly, consider a super red vertex $R$.
Now consider the bisector $\beta^*(G,R)$.
$\VD^*(F)$ is obtained by cutting both $\VD^*(G)$ and $\VD^*(R)$ along $\beta^*(G,R)$, "glueing" the green side (according to $\beta^*(G,R)$) of $\VD^*(G)$ with the red side of $\VD^*(R)$.

\begin{figure}[htb]
     \centering
     \includegraphics[width=0.4\textwidth]{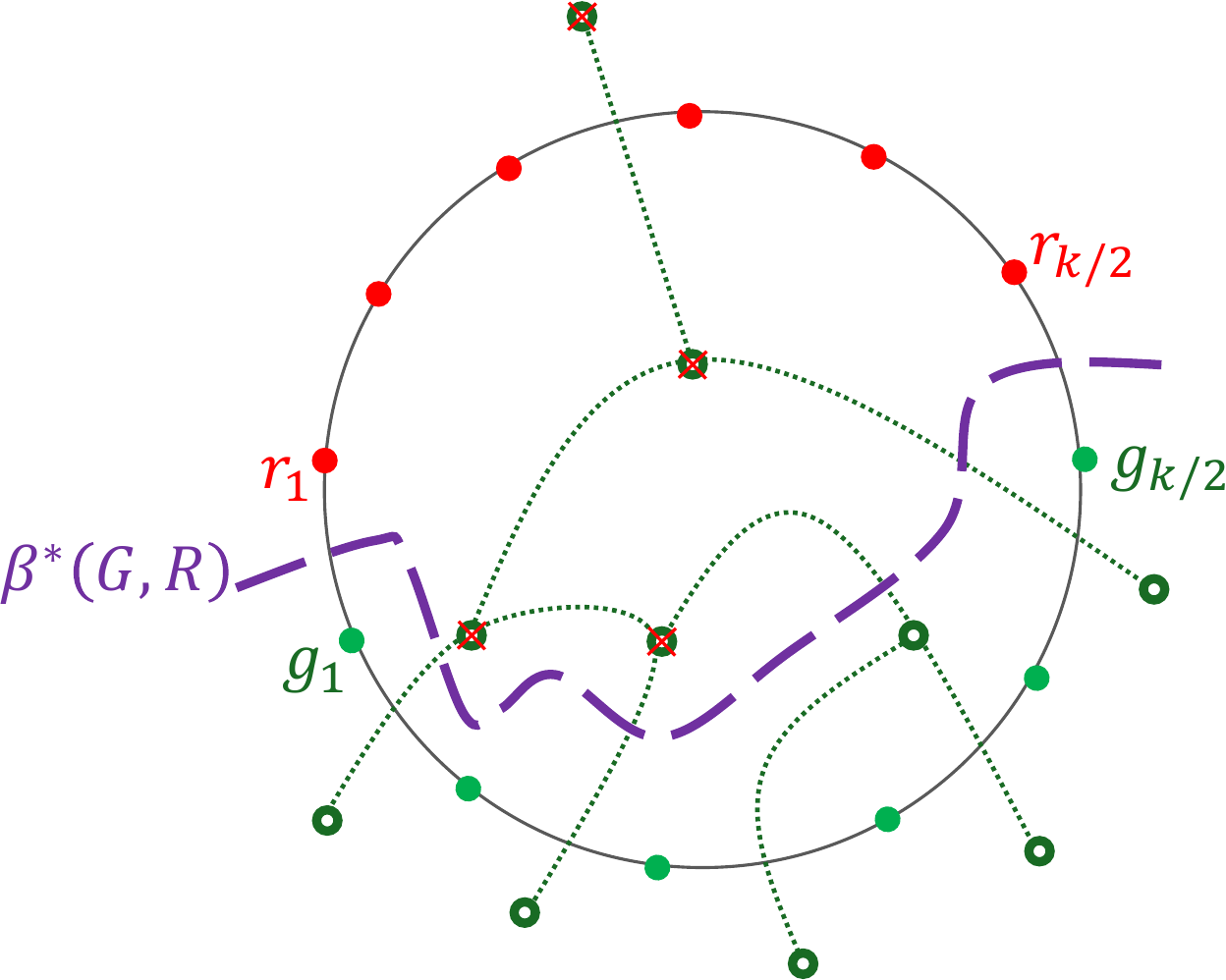}
     \hspace{0.5in}\includegraphics[width=0.4\textwidth]{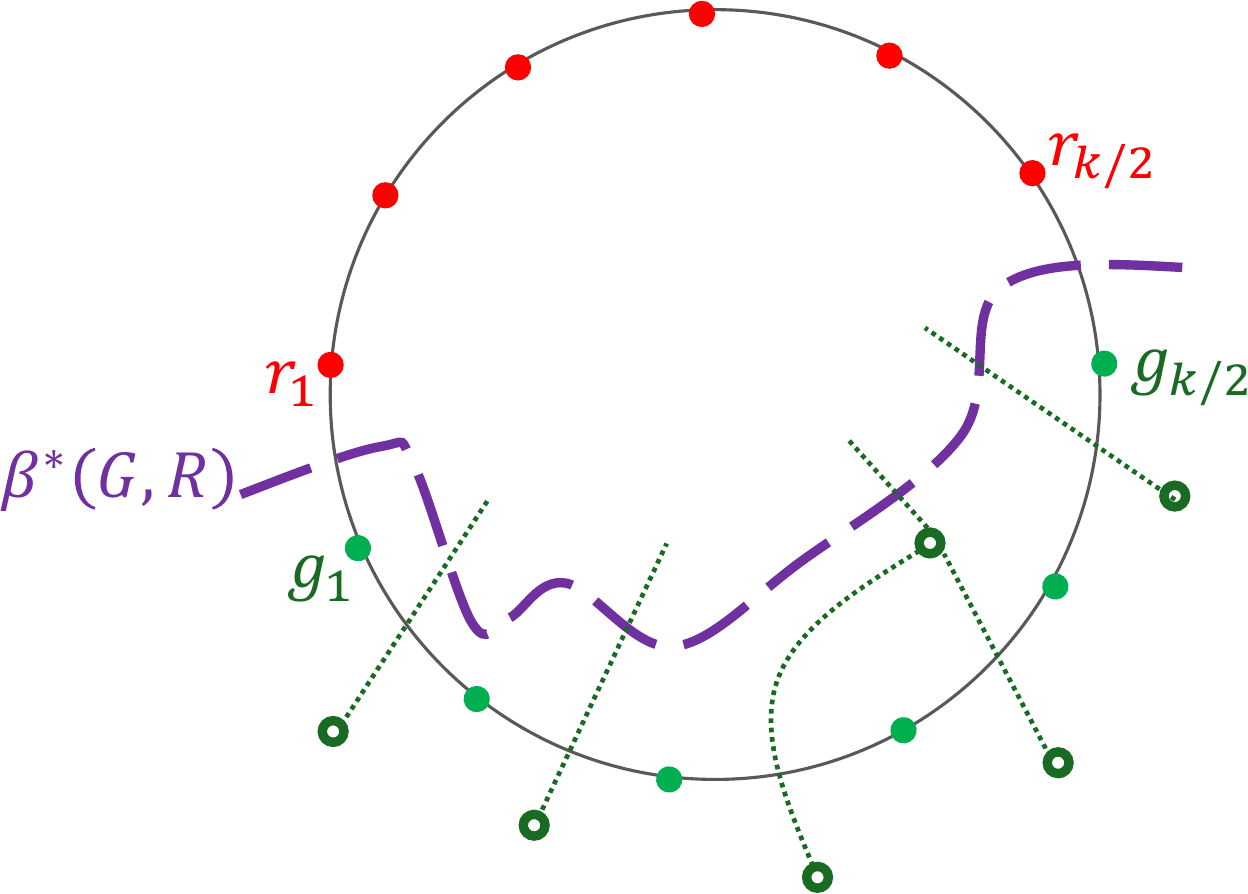}
     \caption{A schematic demonstration of the Voronoi vertex deletion process caused by the interaction of the $\beta^*(G,R)$ bisector with the green diagram $\VD^*(G)$. Left: the $\VD^*(G)$ diagram of the green sites, represented in dotted green lines, and the bisector $\beta^*(G,R)$ represented in a dashed purple line. The Voronoi (dual) vertices are represented in hollow green circles. The Vertices deleted in the process are crossed in red. Right: the remaining connected components and their respective dangling Voronoi edges after the deletion process.}
     \label{fig:Voronoid_Diagram_Calculation_Drawing_Deletion_Process}
 \end{figure}

This intuitive process can be performed efficiently using the following procedure.
To cut $\VD^*(G)$ along $\beta^*(G,R)$, we go over the $O(k/2)$ nodes of $\VD^*(G)$. Each of these nodes corresponds to a trichromatic face $f$ (or to an edge of the hole $F$ if the node is a leaf of $\VD^*(G)$).
For each vertex $v$ of $f$ (there are 3 or 2 such vertices depending on whether we handle an internal node or a leaf of $\VD^*(G)$), let $g_i$ be the green site closest (in the sense of additive distance) to $v$ (the identity of $g_i$ is stored explicitly in the representation of $\VD^*(G)$).
We perform a point location query for $v$ in $\VD^*(R)$, obtaining the red site $r_j$ closest (in the sense of additive distance) to $v$.
We compare the distance from $r_j$ to $v$ with the distance from $g_i$ to $v$ (these distances are available through the MSSP data structure for $F$). If the distance from $r_j$ is smaller, we know that $f$ is not a trichromatic face in $\VD^*(F)$, so we delete the corresponding node from $\VD^*(G)$.\footnote{To be precise, $f$ might be a trichromatic face of $\VD^*(F)$, but it is not an all-green trichromatic face in $\VD^*(F)$, so it is not contributed to $\VD^*(F)$ by $\VD^*(G)$.}

Some Voronoi edges of $\VD^*(G)$ have both their endpoints deleted by this process. These edges are entirely on the red side of $\beta^*(G,R)$, and do not participate in $\VD^*(F)$. Other edges have both their endpoints not deleted. These edges are entirely on the green side of $\beta^*(G,R)$, and are part of $\VD^*(F)$.
We call the Voronoi edges with one endpoint deleted and the other not deleted {\em dangling} edges.  The bisector $\beta^*(G,R)$ intersects $\VD^*(G)$ at the dangling edges.
We associate each dangling edge with the two green sites whose Voronoi cells (in $\VD^*(G)$) are on either side of the dangling edge.
The following lemma proves that there is at most a single dangling edge in each of the components of $\VD^*(G)$ obtained by the above deletion process.

\begin{lemma}\label{lem:merge_bisector_dangling_edge_components}
When the deletion process described above terminates, each surviving connected component of $\VD^*(G)$ contains at most a single dangling edge.

\end{lemma}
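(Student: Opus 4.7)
The plan is to reduce the claim to showing that the set $D$ of deleted nodes forms a connected subtree of $\VD^*(G)$, and then prove this connectivity by a topological/planarity argument. The reduction is a standard tree fact: in any tree, removing a connected subtree $D$ leaves a forest in which each surviving component is adjacent to $D$ through exactly one boundary edge, yielding precisely one dangling edge per component. Conversely, if $D$ split into two disjoint pieces $D_1, D_2$, the unique tree-path between them would traverse a single surviving component adjacent to both, contributing at least two dangling edges. So the lemma is equivalent to connectivity of $D$.

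To argue connectivity I would work in the planar embedding of the augmented graph obtained by adding the super-sites $G$ and $R$. The bisector $\beta^*(G,R)$ is a simple closed curve bounding the ``red region'' $\bar R$, i.e.\ the set of vertices closer to some red site than to any green site; this is the Voronoi cell of the super-red site in the two-site diagram on $\{G,R\}$. Since additively weighted Voronoi cells in planar graphs with unique shortest paths are connected, $\bar R$ is simply connected, and a node of $\VD^*(G)$ is deleted precisely when the corresponding primal face has at least one vertex inside $\bar R$.

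The structural step that makes the problem topological is that each edge of $\VD^*(G)$ is either wholly inside $\bar R$, wholly outside $\bar R$, or dangling, i.e., $\beta^*(G,R)$ crosses each edge of $\VD^*(G)$ at most once. An edge of $\VD^*(G)$ is a maximal contracted subpath of a green--green bisector $\beta(g_i,g_j)$, and every intersection with $\beta^*(G,R)$ along this subpath is a point equidistant from $g_i$, $g_j$, and some red site. The uniqueness of shortest paths together with the fact that the edge consists of bichromatic dual vertices should force at most one such crossing per edge, which justifies the authors' description of surviving edges as being ``entirely on the green side'' and deleted edges as being ``entirely on the red side''.

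The main obstacle is then to pass from this local crossing bound to the global statement that $\VD^*(G) \cap \bar R$ is a single connected subtree rather than a disjoint union of subtrees; in general a plane-embedded tree can meet a simply connected region in several components (think of a star whose arms dip in and out of a disc). The resolution must exploit the specific structure of our setting: all sites lie on the single face $F$, the red sites form a contiguous arc of $F$, and the cyclic planar order in which $\beta^*(G,R)$ crosses the dangling edges of $\VD^*(G)$ matches the cyclic order of these edges around $\bar R$. Combined with the fact that the deleted portion of $\VD^*(G)$ must be anchored at the red arc of $F$, these constraints force $D$ to be a single connected subtree. Turning this topological picture into a rigorous argument is the principal technical difficulty of the proof.
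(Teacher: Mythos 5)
Your reduction to the connectivity of the deleted set $D$ is clean and logically sound, but you stop precisely at the step that constitutes the proof: you explicitly flag ``Turning this topological picture into a rigorous argument is the principal technical difficulty of the proof,'' and you raise the star-shaped counterexample yourself, so as written the proposal is a reformulation of the statement rather than a proof of it. The local observation that $\beta^*(G,R)$ crosses each Voronoi edge of $\VD^*(G)$ at most once --- which the paper also asserts when classifying edges as entirely red, entirely green, or dangling --- is not what drives the lemma; as your own star example shows, even granting that local bound, an embedded tree can still meet a simply connected region in several components, so the global connectivity of $D$ requires a separate argument that the proposal does not supply.

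The paper closes exactly this gap with a direct cycle argument that does not route through the intermediate claim that $D$ is connected (although it implies it). Suppose for contradiction that a surviving component contains two dangling edges $e_1,e_2$. Their surviving endpoints are not leaves: a leaf has degree one, so if its other endpoint were deleted the component would be a singleton. Take the dual faces $f_1,f_2$ where $\beta^*(G,R)$ meets $e_1,e_2$, and form the closed curve $C$ from the tree path in $\VD^*(G)$ between $f_1$ and $f_2$ together with the arc of $\beta^*(G,R)$ between them. The tree path passes through no leaf (a path between two non-leaves in a tree never does), hence avoids $F$; and $\beta^*(G,R)$ touches $F$ only at its two endpoints, because the green and red sites are each contiguous along $F$ and every site has a nonempty cell. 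So $C$ encloses no site, yet it encloses some primal vertex $v$, which lies in $\Vor(g)$ for some green site $g$. The shortest $g$-to-$v$ path must enter the region bounded by $C$, but it cannot cross a bisector of $\VD^*(G)$ (that would place a vertex of the path outside the cell of $g$ in $\VD^*(G)$) and cannot cross $\beta^*(G,R)$ (that would place a vertex of the path in the red cell of the two-site diagram). This contradiction is the rigorous replacement for your informal picture of ``the deleted portion is anchored at the red arc of $F$ and therefore connected''; you would need something of this kind to turn your outline into a proof.
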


\begin{proof}
Assume towards contradiction that some connected component of $\VD^*(G)$ contains two edges $e_1, e_2$ crossed by $\beta^*(G,R)$.
Note that none of the endpoints of $e_1$ and $e_2$ that were not deleted by the process is a leaf (or else the connected component would consist of just that leaf).
Let $f_1$ (resp., $f_2$) be the dual vertex (primal face) in the intersection of the bisector corresponding to $e_1$ (resp., $e_2$) and $\beta^*(G,R)$.
Consider the cycle $C$ formed by the unique path in $\VD^*(G)$ between $f_1$ and $f_2$, and the portion of $\beta^*(G,R)$ between $f_1$ and $f_2$.
See \cref{fig:Voronoid_Diagram_Calculation_Drawing_Proof_First_Lemma} (left) for an illustration.
Observe that the cycle $C$ encloses no green sites because the path in $\VD^*(G)$ between $f_1$ and $f_2$ contains no leaves of $\VD^*(G)$, so it is disjoint from $F$, and because  the bisector $\beta^*(G,R)$ is disjoint from $F$ except for its first and last edge by our assumption that the sites are contiguous along $F$ and that every site is in its own Voronoi cell.
A contradiction now arises because the cycle $C$ encloses some primal vertex $v$ that belongs to a Voronoi cell of some green site $g_i$, but the shortest path from $g_i$ to $v$ cannot cross into $C$; It cannot cross any bisector of $\VD^*(G)$  because one endpoint of such a crossing edge does not belong to the cell of $g_i$ in $\VD^*(G)$.
It cannot cross $\beta^*(G,R)$  because one endpoint of such a crossing edge does not belong to the green cell of $\VD^*(\{G,R\})$.
\end{proof}

\begin{figure}[htb]
     \centering
    \includegraphics[width=0.4\textwidth, trim=0 -3cm 0 0]{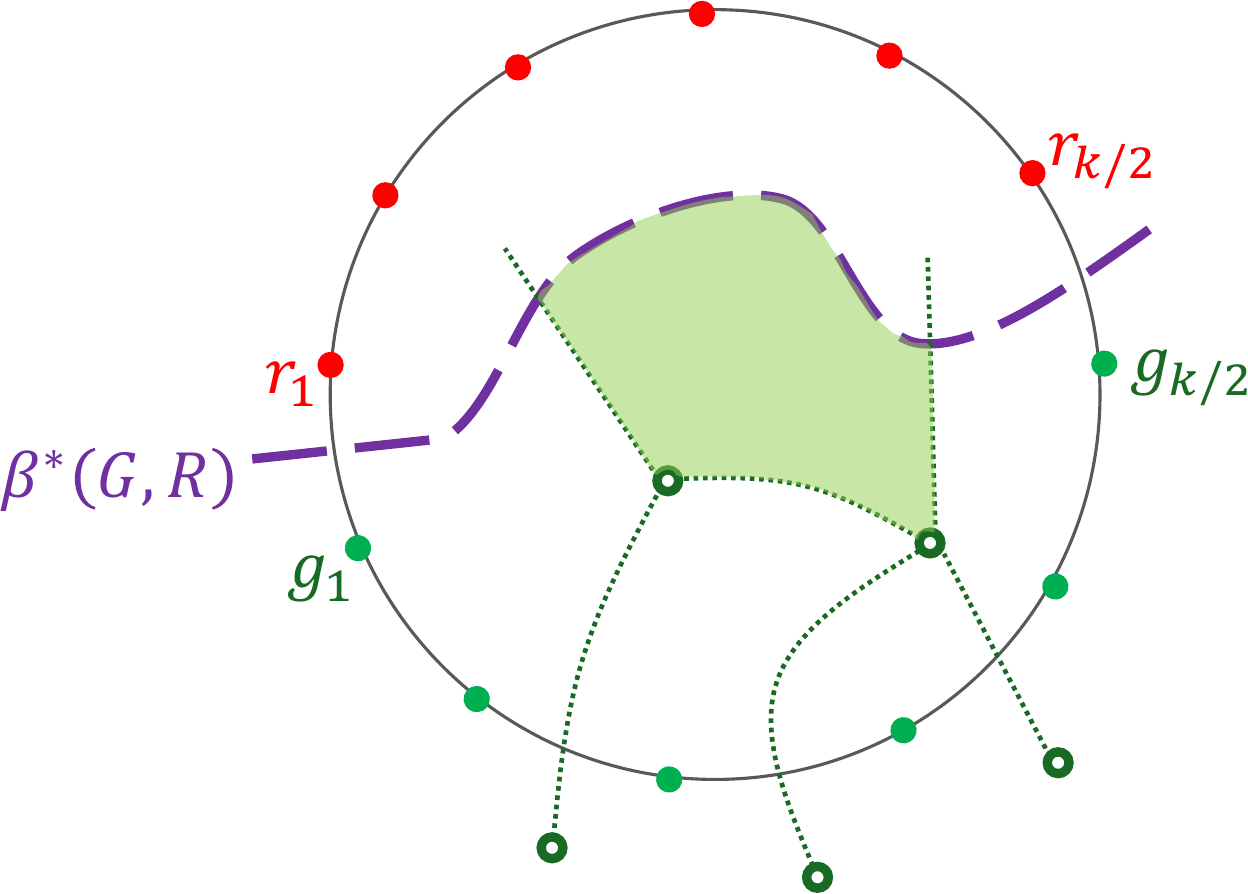}
     \hspace{0.4in}\includegraphics[width=0.45\textwidth]{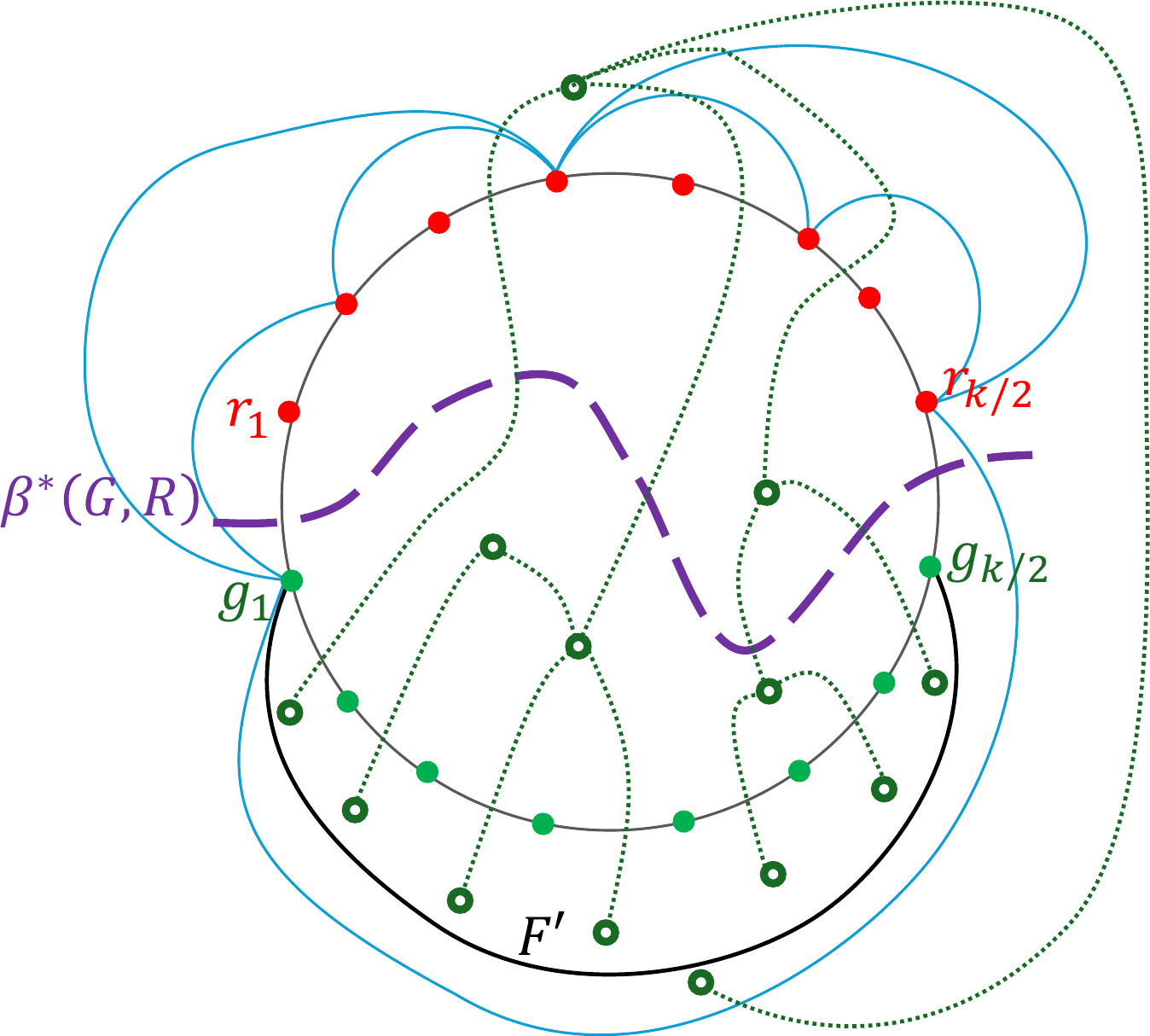}
     \caption{Left: The cycle $C$ (enclosing the shaded green area) for the contradiction in the proof of Lemma~\ref{lem:merge_bisector_dangling_edge_components}. Right: Enforcing the assumptions before the recursive call that computes $\VD^*(G)$. The bold black edge guarantees that the green sites are exactly the sites of a face (the face $F'$). The blue artificial edges guarantee triangulation.}
     \label{fig:Voronoid_Diagram_Calculation_Drawing_Proof_First_Lemma}
 \end{figure}

Having found the components of $\VD^*(G)$ (and by a similar process of $\VD^*(R)$) that form $\VD^*(F)$, we trace $\beta^*(G,R)$, and stitch the components of $\VD^*(G)$ and $\VD^*(R)$ at new trichromatic vertices that we identify along the way.
The first endpoint of $\beta^*(G,R)$ (which is a leaf of $\VD^*(F)$) is a copy of $F^*$ that lies at the end of a dual of the edge of $F$ separating between the sites $g_i=g_1$ and $r_j = r_1$.
The next trichromatic vertex along $\beta^*(G,R)$ occurs when $\beta^*(G,R)$ intersects  the boundary of the Voronoi cell of either $g_i$ or of $r_j$.

Recall that each dangling edge with the two green sites whose Voronoi cells (in $\VD^*(G)$) are on either side of the dangling edge. We shall prove in Lemma~\ref{lem:unique_connected_components},
that the sites $g_1$ and $g_{k/2}$ are associated with exactly one dangling edge, and all other sites are associated with either exactly two or exactly zero dangling edges.
To identify the next trichromatic vertex of $\VD^*(F)$ along $\beta^*(G,R)$, we inspect the dangling edge associated with $g_i$. When $g_i=g_1$, there is only one associated dangling edge. In general, there are two associated dangling edges, but only one was not yet handled by the stitching process.
 The dangling edge that was not yet handled represents a bisector between two green sites. One of them must be $g_i$ (since the intersection is a trichromatic face on the boundary of the Voronoi cell of $g_i$).
Denote the other one by $g_{i'}$. Similarly, a possible candidate for the next trichromatic vertex of $\VD^*(F)$ along $\beta^*(G,R)$ may come from the yet unhandled dangling edge of a connected component of $\VD^*(R)$ that is associated with $r_j$, which represents a bisector between $r_j$ and some other red site $r_{j'}$.

We use \cref{lem:trichromatic} to  find the two possible candidates: the trichromatic face $f_g$ of $(r_j, g_i, g_{i'})$ and the trichromatic face $f_r$ of $(r_j, r_{j'}, g_i)$. Only one of them is a true trichromatic vertex of $\VD^*(F)$, which can be decided by using the same point location process we had used in the deletion process above for each of its incident primal vertices.
Suppose w.l.o.g. that we identified that the next trichromatic face is the face $f_g$ of $(r_j, g_i, g_{i'})$ (the procedure for $f_r$ is symmetric).
We connect in $\VD^*(S)$ the previous trichromatic face on $\beta^*(G,R)$ with $f_g$ via a new Voronoi edge, and make $f_g$ the new endpoint of the dangling edge associated with $g_i$.

Next, we infer the identity of the edge of the $\beta^*(G,R)$ bisector leaving $f_g$ via which the traversal of $\beta^*(G,R)$ continues. Since  $f_g$ was the new trichromatic face, then $\beta^*(G,R)$ just crossed in $\VD^*(G)$ from the cell of $g_i$ to the cell of $g_{i'}$, so we repeat the process of finding the next trichromatic face along $\beta^*(G,R)$ with sites $r_j$ and $g_i'$.  The process terminates when it has handled all of the dangling edges in all the components of $\VD^*(G)$ and $\VD^*(R)$.

To complete the correctness argument it remains to prove the bound on the association between sites and dangling edges (\cref{lem:unique_connected_components}). However, before doing that, we need to elaborate on a technical issue we had glossed over in the description of the recursive approach.
We had assumed that in any $\VD^*$ in the recursion, the sites are exactly the vertices of some face $F$ of the graph $\X$  in which $\VD^*$ is computed, and that $F$ is the only face of $\X$ that is not a triangle.
To satisfy this requirement, before making the recursive call that computes $\VD^*(G)$, we add to $\X$ an artificial infinite-length edge connecting $g_1$ and $g_{k/2}$. This artificial edge is embedded in the face $F$, splitting it into two new faces $F'$ and $F''$, such that the vertices of $F'$ are exactly the green sites.
We triangulate $F''$ with infinite length edges.
Let $\X'$ denote the resulting graph.
See \cref{fig:Voronoid_Diagram_Calculation_Drawing_Proof_First_Lemma} (right).
Note that $\X'$ now satisfies the assumptions so we can invoke the construction algorithm recursively on $\X'$ and obtain $\VD^*(G)$.
In $\VD^*(G)$, the boundary of the Voronoi cell of each $g_i$ forms a simple path between the two leaves (copies of $F'$ corresponding to dual edges of $F'$ between $g_{i-1}g_i$ and $g_i g_{i+1}$) in $\VD^*(G)$, which we associate with $g_i$.
Note that for sites $g_i$ for $1<i<k/2$, both associated leaves are real edges of $P$, whereas for $g_1$ and $g_{k/2}$ one leaf is real, but the other is dual to an artificial edge of $\X'$.

Since $\VD^*(G)$ may contain (dual) artificial vertices (i.e., faces $\X'$ that are not faces of $\X$), at the beginning of the deletion process we delete all artificial vertices of $\VD^*(G)$, and then apply the deletion process described above.
Observe that exactly one of the leaves corresponding to $g_1$ and $g_{k/2}$ is deleted by the deletion process, and none of the two leaves associated with the other sites is deleted. This is because each $G_i$ is in its own Voronoi cell in $\VD^*(F)$.

\begin{lemma}\label{lem:unique_connected_components}
When the deletion process of $\VD^*(G)$ described above terminates, sites $g_1$ and $g_{k/2}$ are associated with exactly one dangling edge, and all other sites are associated with either two or zero dangling edges.
\end{lemma}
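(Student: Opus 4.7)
My plan is to combine \cref{lem:merge_bisector_dangling_edge_components} with a simple counting argument along the boundary path $P_i$ of the Voronoi cell of $g_i$ in $\VD^*(G)$. All dangling edges associated with $g_i$ lie on $P_i$, the path in $\VD^*(G)$ that runs between the two leaves dual to the edges of $F'$ incident to $g_i$. I will walk along $P_i$, mark each Voronoi vertex as either kept or deleted by the process, and let $m$ be the number of maximal contiguous kept sub-paths of $P_i$. The dangling edges on $P_i$ are exactly the edges whose two endpoints have opposite kept/deleted status, so their number is determined by $m$ together with the status of the two leaves at the ends of $P_i$.

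Next I will determine the kept/deleted status of the two leaves of $P_i$. A real leaf dual to a real edge $g_a g_b$ of $F$ has its two primal vertices equal to the sites $g_a$ and $g_b$; by the assumption that every site lies in its own Voronoi cell of $\VD^*(F)$, the closest site of each of $g_a, g_b$ is green, so the leaf is kept. For $1<i<k/2$, both leaves of $P_i$ are dual to the real edges $g_{i-1}g_i$ and $g_i g_{i+1}$ of $F$ and are therefore kept; the $m$ kept sub-paths are separated by exactly $m-1$ interior deleted blocks, each contributing two dangling edges (one at each of its two kept-to-deleted transitions), so the total number of dangling edges on $P_i$ equals $2(m-1)$. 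For $i\in\{1,k/2\}$, one leaf of $P_i$ is the artificial leaf dual to $g_1 g_{k/2}$ and is deleted at the start of the process; the other leaf is real and kept. Hence one endpoint of $P_i$ is kept and the other deleted, which yields $m-1$ interior deleted blocks plus one boundary deleted block ending at the artificial leaf, contributing a total of $2(m-1)+1 = 2m-1$ dangling edges on $P_i$.

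Third, I will use \cref{lem:merge_bisector_dangling_edge_components} to upper-bound the number of dangling edges on $P_i$. Because $\VD^*(G)$ is a tree, any two Voronoi vertices lying in distinct maximal kept sub-paths of $P_i$ are separated on the unique tree path between them by the deleted block that lies between those sub-paths on $P_i$, so they lie in distinct surviving connected components of the post-deletion forest. Thus the $m$ kept sub-paths of $P_i$ occupy $m$ distinct surviving components, and by \cref{lem:merge_bisector_dangling_edge_components} each of those components contains at most one dangling edge. Summing over these components yields the upper bound of $m$ dangling edges on $P_i$.

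Finally, I will combine the exact count with the upper bound. For $1<i<k/2$ the inequality $2(m-1)\le m$ forces $m\le 2$, so the number of dangling edges associated with $g_i$ is either $0$ (when $m=1$) or $2$ (when $m=2$). For $i\in\{1,k/2\}$ the inequality $2m-1\le m$ forces $m\le 1$; since the real leaf of $P_i$ is kept we have $m\ge 1$, hence $m=1$ and the number of dangling edges equals exactly one. I do not foresee a significant obstacle: the whole argument is essentially bookkeeping along $P_i$ combined with a direct application of \cref{lem:merge_bisector_dangling_edge_components}, and the only delicate point is observing that kept sub-paths on either side of an interior deleted block cannot be reconnected outside $P_i$ because $\VD^*(G)$ is a tree.
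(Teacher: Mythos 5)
Your proposal is correct and takes essentially the same approach as the paper: both rest on Lemma~\ref{lem:merge_bisector_dangling_edge_components} to bound dangling edges per surviving component, combined with the kept/deleted status of the two leaves of the cell boundary path $P_i$. The paper argues directly that the deleted portion of $P_i$ must be a single contiguous subpath (two deleted blocks would sandwich a kept block with two dangling edges), while you arrive at the same conclusion via the inequality $2(m-1)\le m$ (resp.\ $2m-1\le m$); this is a slightly more explicit bookkeeping of the identical idea, and both give the stated counts.
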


\begin{proof}
The deletion process deletes a single contiguous subpath from the boundary of each Voronoi cell. This is because otherwise, there would be a resulting connected component that contains two dangling edges, contradicting \cref{lem:merge_bisector_dangling_edge_components}.
Thus, since the dangling edges associated with site $g_i$ are the dangling edges on the boundary of the Voronoi cell of $g_i$, each site is associated with at most two dangling edges.
Since for $g_1$ and $g_{k/2}$ exactly one of the corresponding leaves are deleted, the process results in a single dangling edge associated with each of these two sites.
For all the other sites, none of the two corresponding leaves are deleted, hence there are two associated dangling edges with  each site other than $g_1$ and $G|_{k/2}$.
\end{proof}

\bibliography{bib}
\bibliographystyle{abbrv}

\appendix
\section{The Enhanced MSSP Data Structure (proof of \cref{lem:enhanced_mssp})}\label{sec:enhanced_mssp}

In this section we prove \cref{lem:enhanced_mssp} by extending the seminal MSSP data structure\cite{Klein02,CabelloCE13}.
Let $s_1,\ldots,s_{|f|}$ be the vertices of the face $f$ in cyclic order.
The MSSP is initialized with the shortest paths tree $T_1$ rooted at $s_1$ (computed in $O(n\log n)$ time using Dijkstra). Then, $T_2$ is computed from $T_1$ using {\em pivots}. A pivot is the process of replacing an edge of $T_1$ with an edge not in $T_1$.
Then, $T_3$ is computed from $T_2$ and so on until $T_{|f|}$ is computed.
Klein showed that: (1) over the entire process of computing $T_1,\ldots,T_{|f|}$ there are only $O(n)$ pivots, (2) every pivot can be found and executed in $O(\log n)$ time by maintaining the dual tree of the current $T_k$ in a dynamic tree data structure \cite{SleatorT83}.
Finally, by using the persistence technique of \cite{DriscollSST89}, all the $T_k$'s can be recorded in $(n \log n)$ space so as to permit $\dist(s,v)$ queries.
We slightly modify the MSSP data structure, such that the dynamic trees would be implemented with top-trees of Alstrup et al.~\cite{AHLT05}.
This retains all features of the original implementation while additionally supporting $\ancestor(s,v,d)$ queries in  $O(\log n)$ time per query.

In order to implement $\direction(s,v,P)$ queries, we additionally maintain a persistent~\cite{DriscollSST89} range data structure~\cite{Willard85,Chazelle88} $Left_P$  for every shortest path $P$ in the decomposition tree.
The data structure $Left_P$ stores, throughout the running of MSSP, the set of vertices (ordered by their order on $P$) that the current source vertex $s_i$ reaches from the left.
    Once $Left_P$ is maintained properly, $\countA(s,P,i,j)$ queries can be supported in $\Otild(1)$ time by accessing $Left_P$ at the time in which the stored MSSP tree is the shortest path tree rooted at $s$.
    Moreover, $\select$ queries can be supported via binary search that utilizes $\countA$ queries.

    Initially, $Left_P$ is obtained in $O(|P|)$ time by checking the parent of every vertex $v\in P$ in the shortest paths tree of $s_1$ (if the parent is on $P$, then $v$ inherits the direction from its parent).
    In order to maintain $Left_P$, we store an additional range data structure $Cross(u)$ for every vertex $u\in G$.
    For a vertex $u$, let $e_1,e_2, \ldots e_d$ be the edges incident to $u$ in clockwise order.
    For every separator path  $P$ such that $u\in P$, there is a cyclic interval of the edges of $u$ that are on the left of $P$, and a cyclic interval that is on the right.
    The data structure $Cross(u)$ stores the left and right cyclic intervals of all paths $P$ containing $u$ such that given two edges $e_i$ and $e_j$ incoming to $u$, it returns the set of paths $P$ such that $e_i$ and $e_j$ are on different sides of $P$.
    Since each vertex $u$ of $G$ has constant degree, implementing  the $Cross(u)$ data structures is trivial and requires $\Otild(|P|)$ time for each $P$.
    This sums up to $\Otild(n)$.

To maintain $Left_P$ during the MSSP execution, while obtaining $T_{k}$ from $T_{k-1}$, whenever a pivot is performed, we apply the following:
    Let $u$ be the vertex such that the pivot changes $u$'s incoming edge from $e_i$ to $e_j$.
    It is possible that the Left/Right status of $u$ has been changed with respect to (possibly several) paths $P$ containing $u$.
    Notice that if such a status change occurs for $u$, it also occurs for all descendants of $u$ in $T_{k}$ that lie on $P$ (since $P$ is a shortest path, and due to the uniqueness of shortest paths, those descendants all lie on a single subpath of $P$ that contains $u$).
   We need to find the set of paths $P$ for which the Left/Right status of $u$ changes.
    We do this by querying $Cross(u)$ to obtain all paths $P$ such that $e_i$ and $e_j$ are on different sides of $P$.
    If $e_i$ or $e_j$ are on $P$, the new/old type of $u$ is determined by its first ancestor in $T_{k}$/$T_{k-1}$ that is not on $P$ (which can be found by binary search on $P$ and the tree edges of $T_{k}$/$T_{k-1}$).
    If the type of $u$ changes, we remove/insert $u$ from $Left_P$ (depending on the new type).
   We also run the following procedure in order to update the Left/Right status of all descendants of $u$ in $P$, if necessary.
    Starting from $u$, we traverse $P$ twice, once in each direction of $P$.
    As long as a descendant of $u$ is met, we update the traversed vertex with the new type of $u$.
    We halt when we reach a vertex of $P$ that is not an ancestor of $u$.

    \medskip
\noindent
{\bf Correctness:}
    We claim that all $Left_P$ data structures are updated correctly after every pivot.
    Clearly, if both $e_i$ and $e_j$ are left or right edges, the left/right status of the vertex $u$ does not change, so only the set of paths returned from $Cross(u)$ needs to be checked.
    This set is updated naively - both $u$ and all of its decedents on $P$ are updated one by one.
    Due to the uniqueness of shortest paths, the decedents of $u$ on $P$ form a consecutive subpath of $P$.
    Therefore, the algorithm did not miss any descendants of $u$ on $P$ as a result of halting when encountering a non-descendant in each direction.
    Finally, vertices that are not decedents of $u$ did not have their shortest path change as a result of the pivot, and therefore their Left/Right status did not change with respect to any path $P$.

    \medskip
\noindent
{\bf Running time.}
    The algorithm queries a $Cross(u)$ data structure per pivot, which sums up to $\Otild(n)$ across the run of MSSP.
    As for the updates to $Left_P$, we claim that every vertex $u$ is accessed in this way $O(1)$ times per path $P$ containing it.
    Let us consider the event in which a vertex $u$ is accessed because $e_j$ is not of the same type as $e_i$ with respect to the path $P$.
    We distinguish between two types of vertex access events: an event in which $u$ changes its Left/Right status with respect $P$, and an event in which it does not.
    If $u$ does not change its status, we necessarily have that either the old edge $e_i$ or the new edge $e_j$ is on the path $P$ (as otherwise one is a right edge and the other is a left edge, leading to a status change of $u$ with respect to $P$).
    It immediately follows from the fact that every edge participates in $O(1)$ pivots that this event occurs at most once per $u\in P$, and therefore the number of occurrences of this event is proportional to the total length of all paths $P$, which is $\Otild(n)$.
   If on the other hand $u$ does change its status, then by the following claim this can happen at most two times throughout the running of the algorithm.

    \begin{claim}\label{clm:cyclicleftright}
        Let $P$ be a separator path and let $u$ be a vertex of $P$.
        The set of source vertices $s_k$ such that $s_k$ reaches $u$ from the left with respect to $P$ form a cyclic interval on the face $f$.
    \end{claim}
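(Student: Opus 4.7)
The plan is to transfer the question to the shortest-path tree rooted at $u$ and conclude via a standard planar interleaving-chords argument. I would first fix the shortest-path tree $T_u$ of $G$ rooted at $u$; since $G$ is undirected and distances are unique, $T_u$ is well-defined and each $R_{s_k,u}$ coincides with the $s_k$-to-$u$ path in $T_u$. Whether $s_k$ reaches $u$ from the left depends only on the last edge of $R_{s_k,u}$, i.e., on which child $c$ of $u$ in $T_u$ has $s_k$ in its subtree $V_c$. Since $u$ is internal to the extended path $P$, the two $P$-edges at $u$ split the remaining edges at $u$ into two cyclically contiguous arcs in the clockwise order around $u$; intersecting with the $T_u$-edges at $u$, the ``left children'' of $u$ therefore form a cyclically contiguous block among all children of $u$.

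Given this setup, the claim reduces to the following planar lemma: if $c_1,c_2,\ldots,c_m$ are the children of $u$ in $T_u$ in clockwise order around $u$, then the sets $V_{c_1}\cap f,\ldots,V_{c_m}\cap f$ partition the sources on $f$ into cyclically contiguous arcs that appear in the same cyclic order along $\partial f$. Applied to the contiguous block of left children, the lemma immediately yields that the sources reaching $u$ from the left form a single cyclic interval on $f$.

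The core of the argument --- and what I expect to be the main obstacle --- is proving this planar lemma. I would argue by contradiction via the interleaving-chords principle. Assume sources $s_a,s_b,s_c,s_d$ appear in this cyclic order on $\partial f$ with $s_a,s_c\in V_{c_1}$ and $s_b,s_d\in V_{c_2}$ for distinct children $c_1\neq c_2$. Let $Q_1$ be the $T_u$-path from $s_a$ to $s_c$ through their LCA, which lies in $V_{c_1}$ because $V_{c_1}$ is a component of $T_u-u$; define $Q_2\subseteq V_{c_2}$ analogously. Since $V_{c_1}$ and $V_{c_2}$ are vertex-disjoint subtrees, $Q_1$ and $Q_2$ share no vertex. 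Because no edge of $G$ crosses the open face $f$, both paths lie in the closed topological disk complementary to $f$, whose boundary is the cycle $\partial f$, and their endpoints interleave on this boundary. By the Jordan curve theorem, two interleaving chords in a topological disk must intersect, and in a plane graph this forces a common vertex --- contradicting vertex-disjointness. Hence the planar lemma, and with it the claim, is established.
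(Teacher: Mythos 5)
Your reduction to the children of $u$ in $T_u$ has a gap. Since $P$ is a shortest path and shortest paths are unique, $P\subseteq T_u$, so both $P$-neighbours of $u$ are among $u$'s children in $T_u$. For such a $P$-child $c$, the tree edge $(c,u)$ lies \emph{on} $P$ and is neither a left- nor a right-entering edge; whether a source $s_k\in V_c$ reaches $u$ from the left is then decided not by the last edge of $R_{s_k,u}$, but by the last edge of $R_{s_k,u}$ that is \emph{not} on $P$ (equivalently, the arc at which $R_{s_k,u}$ first enters $P$), and that arc sits at some vertex of $P$ other than $u$. A single subtree $V_c$ of a $P$-child generically contains sources of both classifications (some enter $P$ on the left, some on the right, farther along $P$), so ``the set of sources reaching $u$ from the left'' is \emph{not} a union of subtrees hanging off $u$, and your planar lemma about the children of $u$ cannot by itself deliver the claim. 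To repair this you would have to work with the components of $T_u$ after deleting the $P$-edges (each such component attaches to $P$ at one vertex $p_i$ via a left- or right-emanating edge, and these are the objects that actually carry a well-defined classification), or cut the plane along $P$ so that $u$ splits into $u^L,u^R$ and the classification becomes ``which copy of $u$ does $T_u$ attach you to.''

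There is also a secondary issue: the interleaving-chords argument you run is for two fixed distinct children $c_1\neq c_2$ and shows that the arcs $V_{c_i}\cap f$ are pairwise non-interleaving, but not that they appear on $\partial f$ in the same cyclic order as the $c_i$ appear around $u$ --- yet that ordering statement is precisely what you invoke when concluding that the union over a contiguous \emph{block} of left-children is a single interval. (Running the chords argument directly on the block requires the two connecting paths to avoid $u$, which needs a small extra step since their LCAs in $T_u$ may both equal $u$.)

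For reference, the paper's proof sidesteps $T_u$ entirely: given two left-reaching sources $s_i,s_j$, it forms the cycle $C=R_i\cup R_j\cup(\text{an arc of }f)$, observes that all right-entering edges of $P$ lie on one fixed side of $C$, and shows that any source $s_k$ on the arc of $f$ on the other side must also reach $u$ from the left, since otherwise $R_k$ would have to cross $R_i$ or $R_j$ and then, by uniqueness, merge with it. That argument handles paths that run along $P$ before reaching $u$ automatically, because it never decomposes by the last edge at $u$.
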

    \begin{proof}
        Let $s_i$ and $s_j$ be two sites such that $s_i$ and $s_j$ reaches $u$ from the left.
        We prove that either all vertices in the cyclic interval $f[s_i,s_j]$ reach $s$ from the left, or all vertices in the cyclic interval $f[s_j,s_i]$ reach $s$ from the left, which leads to the claim.
        Consider the cycle $C$ formed by (1) the shortest path $R_i$ from $s_i$ to $u$, (2) the shortest path $R_j$ from $s_j$ to $u$,  and (3) the Jordan Curve connecting $s_i$ and $s_j$ embedded on $f$.
        Clearly, all right edges of $P$ are on one side of the cycle $C$, as $R_i$ and $R_j$ enter $P$ from the left and do not cross $P$ due to uniqueness of shortest paths.
        Let us denote the side of the cycle that contains the right edges of $P$ as the `out' side, and the other as the `in' side.
        Consider the case in which $f(s_i,s_j)$ is on the `in' side of $C$ (the other case is symmetric).
        We claim that every vertex in $s_k \in f(s_i,s_j)$  reaches $u$ from the left.
        Assume to the contrary that there is a vertex $s_k \in f(s_i,s_j)$ that reaches $u$ from the right.
        Since the shortest path $R_k$ from $s_k$ to $u$ starts on the `in' side of $C$, and enters $P$ from the `out' side of $C$, it must cross the $C$ before reaching a vertex of $P$.
        In particular, it must either cross $R_i$ or $R_j$ (say, $R_i$) at some vertex $z$.
        Then, according to the uniqueness of shortest paths, we have that $R_k[z,u]=R_i[z,u]$, which means that $s_k$ reaches $u$ from the left, a contradiction.
    \end{proof}

    To summarize, it follows from \cref{clm:cyclicleftright} that $u$ changes its Left/Right status with respect to $P$ at most twice throughout the algorithm (namely, when the boundaries of the cyclic interval indicated by \cref{clm:cyclicleftright} are processed).
    Therefore, the number of times the algorithm processes a vertex that changes its status is bounded by the total size of all paths in the decomposition, which is $O(n)$.
    Since a vertex is processed in $\Otild
    (1)$ time, the total running time is bounded by $\Otild(n)$ as required.

\section{Proof of \cref{lem:Hleft-is-enough}}\label{app:missing_proof}

    Assume to the contrary that an $(H_{\leftside},\leftside)$-partition is not an $(H,\leftside)$-partition.
    In particular, it must be that there exists some $v\in P$ such that $v$ $(H_{\leftside},\leftside)$-likes some site $s_{\leftside}\in H_{\leftside}$ but does not $(H,\leftside)$-like $s_{\leftside}$.
    This means that for some $c\in H$ we have  that (1) $v\in\Vor_{H\cup\{s_x,s_y\}}(c)$,  (2) $c\ne s_{\leftside}$, and (3)  $v\in\Left(P,c)$.

    We first claim that $c\in H\setminus H_{\leftside}$.
    Assume to the contrary that $c\in H_{\leftside}$, and recall that $v\in\Left(P,c)$. Therefore, it must be that $v\in P_c$ in the $(H_{\leftside},\leftside)$-partition.
    Since $P_c\cap P_{s_{\leftside}}=\emptyset$, we obtain a contradiction.
    Thus, $c\in H\setminus H_{\leftside}$.

    Consider the cycle $C$ obtained by concatenating $R_y, P, R_x$  and the Jordan curve connecting $s_x$ and $s_y$ embedded in the face $F$.
    We think of $C$ as an oriented cycle whose orientation is consistent with that of $P$ so as to define left and right properly.
    Notice that $R_x$ and $R_y$ do not cross each other. Therefore, $C$ is non-self crossing.
    Since $c\in H\setminus H_{\leftside}$, $c$ is to the right of $C$. Therefore, $R_{c,v}$ must cross $C$ (recall that $v\in\Left(P,c)$).
    By uniqueness of shortest paths $R_{c,v}$ does not cross $P$. Therefore, it must cross either $R_x$ or $R_y$.
    Contradicting $v\in \Vor_H(c)$.

\section{Partition of Two Sites with Swirly Paths}\label{app:two_sites}
In this section we prove \cref{lem:s1s2partition}.
We first (in \cref{sec:xy-partition}) prove the lemma for the special case where $s_1=s_x$ and $s_2=s_y$.
Then (in \cref{sec:two-partition}), we utilize the partition obtained for $s_x,s_y$ to compute a partition for any pair $s_1,s_2\in F_{\leftside}$.

\subsection{Partition for $s_x$ and $s_y$}\label{sec:xy-partition}

To prove \cref{lem:s1s2partition} we begin with the simpler case of  $s_1=s_x$ and $s_2=s_y$.
We start by proving the following structure regarding $s_x$ and $s_y$.

The following lemma shows that for $s\in\{s_x,s_y\}$ we can recognize swirly and non-swirly paths to vertices in $\Left(P,s)$ by comparing their position in the shortest paths tree of $s$ compared to $x$ or $y$.
\begin{lemma}\label{lem:rightloseleftswirl}
    Let $v \in \Left(P,s_y)$ be a vertex,
    if $v$ is to the right of $y$ in the shortest paths tree of $s_y$, then $\dist(s_y,v) > \dist(s_x,v)$ and $v\in \Left(P,s_x)$.
    Otherwise, $s_y$ reaches $v$ via a non-swirly path.

    Similarly, for a vertex $v\in\Left(P,s_x)$, if $v$ is to the left of $x$ in the shortest paths tree of $s_x$, then $\dist(s_x,v) > \dist(s_y,v)$ and $v\in \Left(P,s_y)$.
    Otherwise, $s_x$ reaches $v$ via a non-swirly path.
\end{lemma}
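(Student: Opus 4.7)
The plan is to prove the lemma via a topological argument centered on the cycle $C = R_y \circ P[y,x] \circ R_x^{-1} \circ \gamma$, where $\gamma$ is a Jordan curve inside the face $F$ from $s_x$ to $s_y$. I would orient $C$ so that its interior coincides with the left-of-$P$ side. Under the trimming assumptions $R_x \cap P = \{x\}$ and $R_y \cap P = \{y\}$, the cycle $C$ is simple; since $v \in \Left(P,s_y)$, the final edge of $R_{s_y,v}$ enters $v$ from the interior of $C$, and by uniqueness of shortest paths from $s_y$ this path cannot cross $R_y$.

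For the forward direction, let $z$ be the last shared vertex of $R_y$ and $R_{s_y,v}$. Since $v$ is right of $y$ in the shortest paths tree of $s_y$, the subpath $R_{s_y,v}[z,v]$ leaves $z$ into the exterior of $C$. To terminate at $v$ from the interior side, it must cross back through the boundary of $C$; crossings through $R_y$ (same shortest-paths tree) and $\gamma$ (interior of a face) are impossible, so the path crosses either $R_x$ or $P$. In the principal case where $R_{s_y,v}$ crosses $R_x$ at some vertex $w$, I would perform a swap: the paths $\pi_1 = R_{s_y,v}[s_y,w] \circ R_x[w,x]$ from $s_y$ to $x$ and $\pi_2 = R_x[s_x,w] \circ R_{s_y,v}[w,v]$ from $s_x$ to $v$ have combined length exactly $\dist(s_y,v) + \dist(s_x,x)$. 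Using $\dist(s_y,x) \le \len(\pi_1)$ and $\dist(s_x,v) \le \len(\pi_2)$, together with the fact that $x \in \Vor_H(s_x)$ gives $\dist(s_y,x) > \dist(s_x,x)$ in the relevant weighted sense, the desired inequality $\dist(s_y,v) > \dist(s_x,v)$ follows by rearrangement.

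For the claim $v \in \Left(P,s_x)$, I would first note that $\pi_2$ already exhibits a left-entering $s_x$-to-$v$ path of length strictly less than $\dist(s_y,v)$. To rule out $R_{s_x,v}$ entering $P$ from the right, I plan to invoke the symmetric version of the lemma applied at $s_x$: having just shown $\dist(s_x,v) < \dist(s_y,v)$, the vertex $v$ cannot lie to the right of $x$ in the shortest paths tree of $s_x$; the symmetric ``otherwise'' clause then implies $R_{s_x,v}$ is non-swirly, and repeating the cycle $C$ argument forces $R_{s_x,v}$ to remain inside $C$ and therefore enter $P$ at $v$ from the left.

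For the reverse direction, assume $v$ is not to the right of $y$ in the shortest paths tree of $s_y$. If $v=y$ then $R_{s_y,v} = R_y$ is trivially non-swirly; otherwise $R_{s_y,v}[z,v]$ starts in the interior of $C$, and the same ingredients that ruled out crossings of $R_y$ and $\gamma$ show that any crossing of $R_x$ would force $R_{s_y,v}$ to exit and re-enter $C$ through $R_x$, and a short cycle argument shows this would force $v$ to lie to the right of $y$, a contradiction. The main obstacle I anticipate is the secondary forward case in which $R_{s_y,v}$ re-enters $C$ through $P$ at an interior point rather than crossing $R_x$; resolving it requires a delicate analysis of shortest paths that touch $P$ transversally, using the enhanced MSSP of \cref{lem:enhanced_mssp} to reason about the direction of sub-paths, and is where I expect the bulk of the technical work to lie.
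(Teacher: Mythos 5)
Your overall topological setup---the cycle $C$ built from $R_y$, $P$, $R_x$, and a Jordan arc through the face $F$, together with Jordan-curve-style parity reasoning about where $R_{s_y,v}$ can and cannot cross---is the same as the paper's. The ``swap'' derivation of $\dist^\omega(s_y,v) > \dist^\omega(s_x,v)$ also works (though it is more elaborate than necessary: once you have the crossing vertex $w\in R_x$, $w\in\Vor_H(s_x)$ already gives $\dist^\omega(s_x,w)<\dist^\omega(s_y,w)$, and a single triangle inequality finishes, since $w$ lies on $R_{s_y,v}$). However, two of your steps have genuine gaps.

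First, the case you flag as ``the secondary forward case''---where the path $Q=R_{s_y,v}[w,v]$ first re-meets $C$ at a point $u$ of $P$ rather than of $R_x$---is not an open technical obstacle requiring enhanced-MSSP directional reasoning; it is excluded by a one-line uniqueness argument. If $u\in P\setminus\{v\}$, then the shortest $u$-to-$v$ subpath $Q[u,v]$ must coincide with $P[u,v]$, so the final edge of $R_{s_y,v}$ into $v$ lies on $P$, contradicting $v\in\Left(P,s_y)$ (and $u=v$ would mean $Q$ approaches $v$ from the right of $C$, i.e.\ from the right of $P$, again a contradiction). By not closing this case you have left the forward distance claim unproved.

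Second, your argument for $v\in\Left(P,s_x)$ is circular. The symmetric clause you invoke has the hypothesis $v\in\Left(P,s_x)$, which is exactly the conclusion you are trying to establish; noting that $\pi_2$ is a left-entering $s_x$-to-$v$ walk does not help, because $\pi_2$ need not be the shortest $s_x$-to-$v$ path, so it tells you nothing about the direction of $R_{s_x,v}$. The paper instead introduces a second cycle $C_2 = R_y\circ P[y,v]\circ R_{v,s_y}$, observes that $s_x$ is strictly on the left of $C_2$ while all right-entering arcs of $P$ lie on its right, and shows that if $R_{s_x,v}$ entered from the right it would have to cross $C_2$, but uniqueness of shortest paths forces any such crossing to land on $R_y$, which would yield $\dist^\omega(s_y,v)<\dist^\omega(s_x,v)$---contradicting the inequality just established. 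You need an argument of this direct, non-self-referential kind here.

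Finally, your treatment of the ``otherwise'' direction is only sketched: saying a crossing of $R_x$ ``would force $v$ to lie to the right of $y$'' is not the cleanest contradiction (and uniqueness already forbids $Q$ from re-entering $C$ through $R_x$). The paper's version---after crossing $R_x$ the suffix lies on the right of $C$, cannot recross $R_x$, $R_y$, or $P$, and therefore must enter $P$ from the right, contradicting $v\in\Left(P,s_y)$---is the argument you want to make precise.
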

\begin{proof}

\begin{figure}[ht]
    \centering
    \begin{subfigure}[t]{0.3\textwidth}
        \centering
        \includegraphics[width=\textwidth]{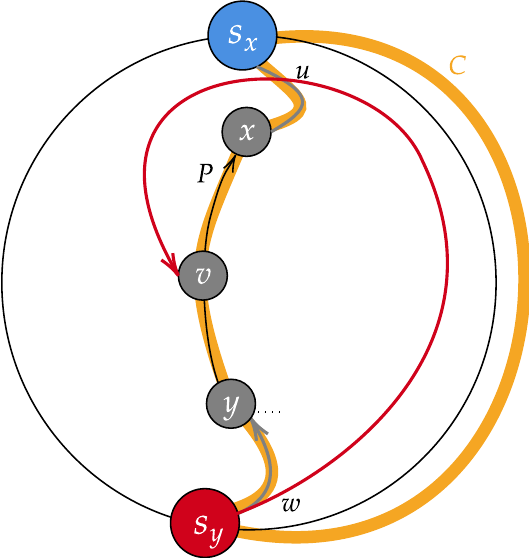}
        \caption{$C$}
    \end{subfigure}
    \hspace{1in}
    \begin{subfigure}[t]{0.3\textwidth}
        \centering
        \includegraphics[width=\textwidth]{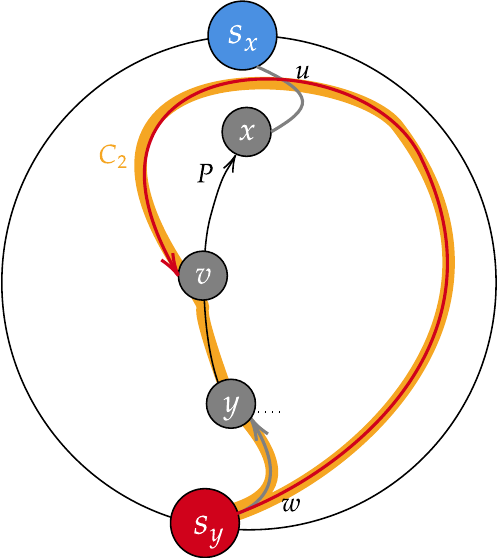}
        \caption{$C_2$}
    \end{subfigure}
    \caption{The cycles for the proof of \cref{lem:rightloseleftswirl}.}
    \label{fig:XY-left-swirl}
\end{figure}
    We only prove the first statement, the second statement is symmetric.
    Consider the cycle $C$ obtained by concatenating $R_y$, $P$,$R_x$ and the Jordan curve connecting $s_x$ and $s_y$ embedded in the face $F$.
    Notice that $R_x\cap R_y=\emptyset$ (since we have $\dist(s_x,z)<\dist(s_y,z)$ and $\dist(s_x,t)>\dist(s_y,t)$ for every $z\in R_x$, $t\in R_y$) and therefore $C$ is a non-self-crossing cycle.
    We think of $C$ as an oriented consistently with $P$ so as to define left and right properly.
    Notice that all edges entering $P\setminus\{x,y\}$ from the left are on the left side of $C$.
    Let $(w',w)$ be the first edge on $R_{s_y,v}$ that is not on $R_y$.
    We start by proving the first claim, where $v$ is to the right of $y$ in the shortest paths tree of $s_y$.
    We distinguish between two cases.
    If $w'\in P$, then $R_{s_y,v}[w',v]=P[w',v]$ and in this case $R_y$ reaches $P$ from the right and therefore $R_{s_y,v}$ reaches $V$ from the right, a contradiction.
    If $w'\notin P$, $w$ is on the right side of $C$.
    Let $Q=R_{s_2,v}[w,v]$.
    Since $v\in C$, $Q$ must intersect $C$.
    Let $u$ be the first vertex in $Q$ that is on $C$.
    Clearly, $u\notin R_y$ by uniqueness of shortest paths.
    Moreover, $u\notin P\setminus(R_x\cup R_y)$ since $R_{s_y,v}$ enters $P$ from the left.
    It follows that $u\in R_x$ and therefore $\dist(s_x,v)<\dist(s_y,v)$ as required.

    We next prove that $v\in \Left(P,s_x)$.
    Consider the cycle $C_2$ composed of the concatenation of $R_y$, $P[y,v]$ and $R_{v,s_y}$.
    We think of $C_2$ as an oriented cycle whose orientation is consistent with that of $P$ so as to define left and right properly.
    Notice that $C_2$ has no self crossing.
    Notice that $s_x$ is on the left side of $C_2$  and every right edge of $P$ is on the right side of $C_2$.
    Assume by contradiction that $v\notin\Left(P,s_x)$, then $R_{s_x,v}$ must cross the $C_2$.
    By uniqueness of shortest paths, $R_{s_x,v}$ does not cross $P$ nor $R_{s_y,v}$.
    Thus, $R_{s_x,v}$ must cross $R_y$.
    Let $z$ be a vertex on $R_y\cap R_{s_x,v}$.
    Since $z\in R_y$ we have $\dist(s_y,z)<\dist(s_x,z)$ and therefore $\dist(s_y,v)\le\dist(s_y,z)+\dist(z,v)<\dist(s_x,z)+\dist(z,v)=\dist(s_x,v)$ a contradiction to what we already proved.
    Therefore it must be that $v\in \Left(s_x,v)$.

    Regarding the second claim, if $v$ is not to the right of $y$ in the shortest path  tree of $s_y$ then it must be either to the left, an ancestor of $y$, or a decedent of $y$.
    Clearly, the case of $v$ being an ancestor of $y$, then $R_{s_y,v}$ is a subpath of $R_y$, that is disjoint from $R_x$ and therefore it is non-swirly.
    In addition, $v$ being a decedent of $y$ means that $R_{s_y,v}=R_y\circ P[y,v]$ which crosses neither $R_x$ nor $R_y$.
    Finally, consider the case where $v$ is to the left of $y$ in the shortest path tree of $s_y$.
    If $w'\in P$ then $R_{s_y,v}=R_y[s_y,w]\circ P[w,v]$ which crosses neither $R_x$ nor $R_y$.
    Otherwise, $w$ must be in the left side of $C$.
    By uniqueness of shortest paths, $Q=R[w,v]$ cannot cross $R_y$ or $P$.
    Assume to the contrary that $Q$ crosses $R_x$, and let $t$ be the first vertex on $Q$ after crossing $R_x$.
    Notice that $t$ is on the right side of $C$.
    The path $Q'=Q[t,v]$ cannot intersect with $R_x$ or $R_y$ by uniqueness of shortest paths.
    Therefore $Q'$ must enters $P$ from the right, a contradiction.
\end{proof}

\begin{lemma}\label{lem:sxsyswirlypartition}
    Given $s_x$ and $s_y$, there is an algorithm that outputs in $\Otild(1)$-time a partition of $P$ into two (possibly empty) parts $P=P^x_1\circ P^x_2$  such that:
    \begin{enumerate}
        \item Every vertex $v\in\Left(P^x_1,s_x)$ is reached from $s_x$ via a non-swirly path.
        \item For every vertex $v\in \Left(P^x_2,s_x)$ we have $\dist(s_x,v)>\dist(s_y,v)$.
    \end{enumerate}
    Similarly, the algorithm outputs a partition of $P$ into two (possibly empty) parts $P=P^y_1\circ P^y_2$  such that:
    \begin{enumerate}
        \item For every vertex $v\in \Left(P^y_1,s_y)$ we have  $\dist(s_y,v)>\dist(s_x,v)$.
        \item Every vertex $v\in \Left(P^y_2,s_y)$ is reached from $s_y$ via a non-swirly path.
    \end{enumerate}
\end{lemma}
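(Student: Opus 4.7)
The plan is to reduce the partition computation to a binary search on $P$ using Lemma~\ref{lem:rightloseleftswirl} to characterize swirliness, and to justify the search by a structural monotonicity claim. I focus on $s_x$; the $s_y$ case is entirely symmetric. By Lemma~\ref{lem:rightloseleftswirl}, a vertex $v\in\Left(P,s_x)$ is reached via a swirly path iff $v$ lies to the left of $x$ in the shortest-path tree $T_{s_x}$, and in this case also $\dist(s_x,v)>\dist(s_y,v)$. Hence it suffices to split $P$ at any point separating the swirly from the non-swirly vertices of $\Left(P,s_x)$: the non-swirly side becomes $P^x_1$ and the swirly side becomes $P^x_2$.

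The key step is the following structural claim: if $v_1\in\Left(P,s_x)$ is non-swirly and $v_2\in\Left(P,s_x)$ lies strictly between $v_1$ and $x$ on $P$, then $v_2$ is also non-swirly; equivalently, the swirly vertices of $\Left(P,s_x)$ form a contiguous segment of $P$ on the $y$-side of the non-swirly ones. To prove this I consider the cycle $C=R_{s_x,v_1}\circ P[v_1,x]\circ R_x^{-1}$. The cycle is simple by uniqueness of shortest paths and non-swirliness of $R_{s_x,v_1}$, which ensures $R_{s_x,v_1}\cap R_x$ is at most a shared prefix at $s_x$. Let $D$ be the bounded region of $C$ lying on the left side of $P[v_1,x]$. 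Since $v_2\in\Left(P,s_x)$, uniqueness of shortest paths forces $R_{s_x,v_2}\cap P=\{v_2\}$ (otherwise $R_{s_x,v_2}$ would coincide with a subpath of $P$ just before $v_2$, contradicting the left-entry at $v_2$); combined with the inability of $R_{s_x,v_2}$ to cross $R_{s_x,v_1}$ or $R_x$, we obtain $R_{s_x,v_2}\subseteq\overline{D}$. On the other hand $R_y$ is disjoint from $R_x$ (distinct sources), from $R_{s_x,v_1}$ (non-swirly), and from $P\setminus\{y\}$ (trimming), while $s_y,y\notin D$; hence $R_y\cap\overline{D}=\emptyset$. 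Thus $R_{s_x,v_2}\cap R_y=\emptyset$, so $R_{s_x,v_2}$ crosses neither $R_x$ nor $R_y$ and is non-swirly.

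Given the structural claim, I locate the split point by binary search on $\Left(P,s_x)$: at each step I use $\countA$ and $\select$ (Lemma~\ref{lem:enhanced_mssp}) to retrieve the median vertex $v$ of the current candidate range, and test whether $v$ is to the left of $x$ in $T_{s_x}$ by using $\ancestor(s_x,\cdot,\cdot)$ to locate the lowest common ancestor of $v$ and $x$ and comparing the cyclic positions of its two children (toward $v$ and toward $x$) in the planar embedding. By Lemma~\ref{lem:rightloseleftswirl} this test equals the swirliness test, and the structural claim guarantees the answer localizes the split point to one side of $v$, halving the candidate range. After $O(\log n)$ iterations the split is determined, each iteration costing $\tilde O(1)$ by Lemma~\ref{lem:enhanced_mssp}.

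The main obstacle is the structural claim: it requires a careful planar topology argument that simultaneously invokes the non-swirliness of $R_{s_x,v_1}$ (to keep $R_y$ outside $D$), uniqueness of shortest paths (to confine $R_{s_x,v_2}$ to $\overline{D}$), and the trimming assumption $R_y\cap P=\{y\}$ (to ensure $R_y$ cannot sneak into $D$ via $P$). Once this geometry is in hand, the algorithm is a routine exploitation of the enhanced MSSP.
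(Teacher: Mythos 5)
Your plan mirrors the paper's: binary search over $\Left(P,s_x)$ using the $\countA$/$\select$/$\ancestor$ queries, deciding at each probed vertex whether it lies to the left of $x$ in $T_{s_x}$, and then invoking Lemma~\ref{lem:rightloseleftswirl} to convert tree positions into the lemma's two conditions. However, there is a genuine gap in how you justify the binary search. You assert that, by Lemma~\ref{lem:rightloseleftswirl}, ``$v$ is reached via a swirly path \emph{iff} $v$ lies to the left of $x$ in $T_{s_x}$''. The lemma only gives one direction: if $v$ is \emph{not} to the left of $x$ then the path is non-swirly (equivalently, swirly $\Rightarrow$ to the left). The converse, ``to the left $\Rightarrow$ swirly'', is not stated by the lemma, and you do not prove it. This matters because your structural claim establishes monotonicity of the \emph{swirliness} predicate along $P$, whereas the quantity you actually test in the binary search is the \emph{tree position} predicate. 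If the two predicates were not equivalent, there could exist a vertex in $\Left(P,s_x)$ that is to the left of $x$ yet non-swirly; at such a vertex your swirliness-monotonicity argument gives no information about the vertices on the $y$-side, and the binary search decision at the probed median is unjustified. You would either have to prove the missing direction of the equivalence, or (as the paper does in its Claim~\ref{clm:xypartitionexsist}) prove monotonicity of the tree-position predicate directly, which is what the binary search actually requires. The paper's proof keeps these cleanly separated: Claim~\ref{clm:xypartitionexsist} shows that ``not to the right of $y$ in $T_{s_y}$'' is a contiguous segment of $\Left(P,s_y)$ (this drives the binary search), and Lemma~\ref{lem:rightloseleftswirl} is applied \emph{afterward} to translate the two sides of the split into ``non-swirly'' and ``$\dist(s_y,v)>\dist(s_x,v)$''.

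There are also some local technical issues in your structural-claim proof that deserve attention even if the main logic were repaired. The cycle $C=R_{s_x,v_1}\circ P[v_1,x]\circ R_x^{-1}$ is generally not simple, since $R_{s_x,v_1}$ and $R_x$ share a nontrivial prefix in $T_{s_x}$ (this has nothing to do with non-swirliness, contrary to what you write); one must pass to the sub-cycle after the branching vertex. The assertion $R_{s_x,v_2}\cap P=\{v_2\}$ is not forced by ``left entry'': a shortest path can touch $P$ earlier and follow $P$ to $v_2$, and under the paper's convention the entry direction is determined at the first touch. Finally, non-swirliness gives ``does not cross $R_y$'', not ``disjoint from $R_y$''; the Jordan-curve containment argument should be phrased in terms of non-crossing rather than disjointness. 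That said, once the tree-position formulation is adopted, your cycle is essentially the same cycle the paper uses in Claim~\ref{clm:xypartitionexsist}, just read in the symmetric ($s_x$) orientation, so the geometric core of the argument is recoverable.
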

\begin{figure}[ht]
    \centering
    \begin{subfigure}[t]{0.3\textwidth}
        \centering
        \includegraphics[width=\textwidth]{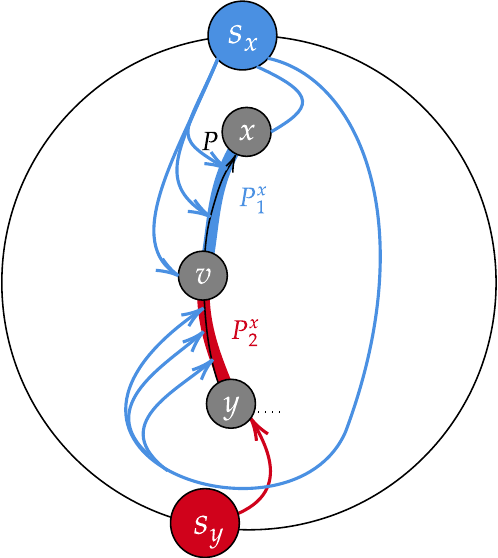}
        \caption{Partition of $P$ to $P^x_1$ and $P^x_2$.}
    \end{subfigure}
    \hspace{1in}
    \begin{subfigure}[t]{0.3\textwidth}
        \centering
        \includegraphics[width=\textwidth]{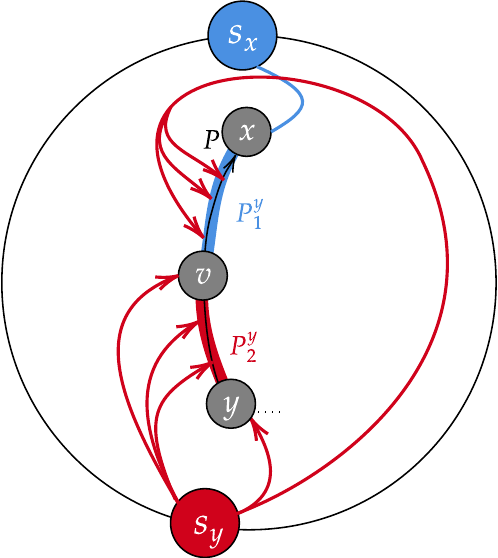}
        \caption{Partition of $P$ to $P^y_1$ and $P^y_2$.}
        \label{fig:Py12}
    \end{subfigure}
    \caption{Figures for the statement of  \cref{lem:sxsyswirlypartition}.}
    \label{fig:Px12Py12}
\end{figure}
\begin{proof}
    We prove the second claim of the lemma, the proof of the first claim is similar.

    The algorithm works in a binary search fashion to find a partition $P=P^y_1 \circ P^y_2$ such that every vertex in $\Left(P^y_1,s_y)$ is to the right of $y$ in the shortest paths tree of $s_y$, and every vertex in $\Left(P^y_2,s_y)$ is not to the right of $y$ in the shortest paths tree of $s_y$.
    We first claim that such a partition exists (\cref{clm:xypartitionexsist}), and then we show that this partition satisfies the conditions of the lemma.

    The following claim shows that the sought partition exists (see \cref{fig:Py12}).
\begin{claim}\label{clm:xypartitionexsist}
    There is a vertex $v\in P$ such that every $u\in \Left(P[y,v],s_y)$ is not to the right of $y$ in the shortest path tree of $s_y$, and every vertex in $\Left(P(v,x],s_y)$ is to the right of $y$ in the shortest paths tree of $s_y$.
\end{claim}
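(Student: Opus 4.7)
The plan is to prove a monotonicity statement along $P$: among vertices of $\Left(P, s_y)$, being to the right of $y$ in the shortest-paths tree of $s_y$ is monotone in the $y$-to-$x$ direction along $P$. That is, whenever $v_1, v_2 \in \Left(P, s_y)$ with $v_1$ strictly closer to $y$ on $P$ than $v_2$, if $v_1$ is to the right of $y$ in the shortest-paths tree of $s_y$ then so is $v_2$. Given this, the partition point $v$ can be defined as the predecessor on $P$ of the first vertex of $\Left(P, s_y)$ that is to the right of $y$ in the shortest-paths tree of $s_y$, or $v = x$ if no such vertex exists.

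I would prove the monotonicity by contradiction. Assume $v_1$ is to the right of $y$ but $v_2$ is not. By \cref{lem:rightloseleftswirl}, $R_{s_y, v_1}$ is swirly (it crosses $R_x$), while $R_{s_y, v_2}$ is non-swirly (disjoint from $R_x \cup R_y$ in its interior). Let $w$ be the lowest common ancestor of $v_1$ and $v_2$ in the shortest-paths tree of $s_y$, and form the simple cycle
$$C = R_{s_y, v_1}[w, v_1] \circ P[v_1, v_2] \circ R_{s_y, v_2}[v_2, w].$$
By uniqueness of shortest paths, $C$ is non-self-crossing. Because both paths $R_{s_y, v_1}$ and $R_{s_y, v_2}$ enter $P$ from the left, the cycle $C$ bounds a region strictly on the left side of $P$.

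A crossing of $R_x$ with $R_{s_y, v_1}$ must occur on the diverged portion $R_{s_y, v_1}[w, v_1]$, since otherwise it would lie on the shared prefix with $R_{s_y, v_2}$, making the latter swirly too. Moreover, $R_x$ cannot cross the other two pieces of $C$: by the trimming of $P$ we have $R_x \cap P = \{x\}$ and $x \notin P[v_1, v_2]$, and $R_{s_y, v_2}$ is non-swirly. So $R_x$ crosses $C$ an odd number of times, contradicting the fact that both endpoints of $R_x$ lie outside the bounded region enclosed by $C$: $s_x$ is on the outer face $F$ (hence outside every bounded cycle of the graph), and $x$ is on $P$ past $v_2$ (hence also outside $C$).

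The main obstacle is the careful topological treatment of intersections between $R_x$ and $R_{s_y, v_1}$: by uniqueness of shortest paths from distinct sources, they can share an entire contiguous subpath rather than meet at a single transverse crossing. Verifying that such a shared subpath still contributes an odd parity to the topological crossings of $C$ (the subpath effectively acts as a single in-out crossing because $R_x$ must enter the left side of $C$ at one endpoint of the subpath and exit on the other side), together with handling degenerate cases such as $w = s_y$ or $v_2 = x$, is where the technical bookkeeping of the full proof is concentrated.
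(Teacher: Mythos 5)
Your approach differs from the paper's. You prove the underlying monotonicity statement directly (if a vertex of $\Left(P,s_y)$ closer to $y$ is to the right of $y$ in the shortest-paths tree of $s_y$, so is every later one), by forming the cycle $C = R_{s_y,v_1}[w,v_1]\circ P[v_1,v_2]\circ R_{s_y,v_2}[v_2,w]$ at the LCA $w$ and running a Jordan-curve parity count of the crossings of $R_x$ with $C$. The paper instead picks $v$ as the last not-to-the-right vertex of $\Left(P,s_y)$, builds the cycle $R_y\circ P[y,v]\circ R_{s_y,v}$ (which has $R_y$ as an anchor, so the whole cycle lies on one side of the big cycle $R_y\circ P\circ R_x\circ\text{Jordan}$), and derives a contradiction by chasing where the path $R_{s_y,u}$ to a hypothetical counterexample vertex $u$ can first leave $R_y$ and where it can re-enter the cycle.

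There is a genuine gap in your parity argument. After establishing (correctly, once one adds the argument that $R_{s_y,v_1}$ actually \emph{crosses} $R_x$, which \cref{lem:rightloseleftswirl} only implies via its proof and not its statement) that $R_x$ crosses $C$ exactly once, you need both endpoints of $R_x$ to be outside $C$. Your argument that $s_x$ is outside $C$ is fine, but the claim that $x$ is outside $C$ is unsupported: the reason you give, that $x$ lies on $P$ past $v_2$, only says $x\notin C$, not that $x$ is exterior to $C$. Your companion claim that ``$C$ bounds a region strictly on the left side of $P$'' is false in the very situation you are in: $R_{s_y,v_1}$ is swirly, so $R_{s_y,v_1}[w,v_1]$ crosses $R_x$ and hence leaves the left side of $P$ (the inside of the big cycle $R_y\circ P\circ R_x\circ\text{Jordan}$). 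In fact, your own parity count gives the opposite conclusion: one crossing of $R_x$ with $C$ together with $s_x$ outside forces $x$ to be \emph{inside} $C$. That is not by itself a contradiction---the portion of $R_x$ past the crossing and the subpath $P(v_2,x]$ can both lie inside $C$ consistently---so the argument does not close. To make progress you would need an independent topological reason why $x$ must be outside $C$, or, as the paper does, replace $C$ by a cycle anchored at $R_y$ and derive the contradiction from the behavior of the shortest path to the offending vertex rather than from $R_x$.
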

\begin{proof}
    Let $v$ be the last vertex on $\Left(P,s_y)$ that is not to the right of $y$ in the shortest paths tree of $s_y$.
    Notice that $v$ is well defined, as $y\in P$ is not to the right of $y$.
    By definition of $v$, every vertex $u\in \Left(P(v,x],s_y)$ is to the right of $y$ in the shortest paths tree of $s_y$.
    It remains to show that every vertex in $\Left(P[y,v],s_y)$ satisfies the claim.
    Assume to the contrary that there is a vertex $u\in \Left(P[y,v],s_y)$ that is to the right of $y$ in the shortest paths tree of $s_y$.
\begin{figure}[ht]
    \centering
    \includegraphics[width=0.3\textwidth]{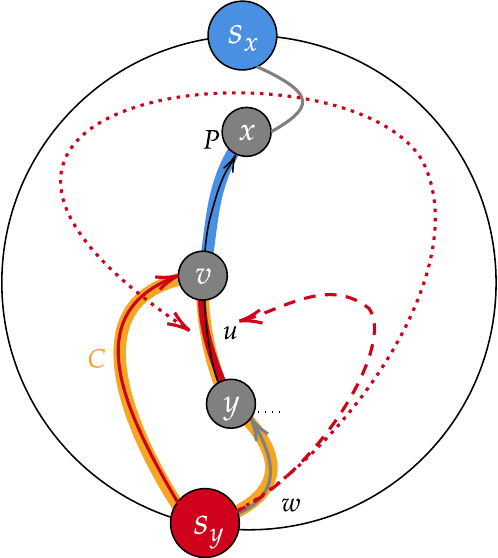}
    \caption{The (orange) cycle $C$ in the proof of \cref{clm:xypartitionexsist}. The dotted red path is impossible since it implies that $u$ is not to the right of $y$ in the shortest paths tree of $s_y$. The dashed red path is impossible since it implies that $u$ is reached from the right.}
   \label{fig:xypartitionexsist}
\end{figure}

    Consider the cycle $C$ that consists of the concatenation of $R_y$, $P[y,v]$, and $R_{v,s_y}$ in that order.
    We consider $C$ to be oriented in a way that is consistent with $R_y$ (from $s_y$ to $y$) and with $P$.
    Let $(w',w)$ be the first edge of $R_{s_y,u}$ that is not in $R_y$.
    If $w' \in P$, uniqueness of shortest paths implies that $R_{s_y,u}[w',u]=P[w,u]$ and $R_y[w',y]=P[w',y]$.
    Since $u$ is to the right of $y$, we get that $R_y[s_y,w'] = R_{s_y,u}[s_y,w']$ must enter $P$ from the right, a contradiction to $u$ being reached from the left by $s_y$.
    We therefore have that $w' \notin P$, and therefore $w$ is strictly on the right side of $C$.
    The path $R=R_{s_y,u}[w,u]$ must intersect the cycle $C$.
    Due to uniqueness of shortest paths, $R$ cannot intersect $R_y$.
    If $R$ intersects $R_{s_y,v} \setminus R_y$, uniqueness of shortest paths implies that $R_{s_y,u}$ agrees with $R_{s_y,v}$ at least on the first edge that is in $R_{s_y,v}$ and not on $R_y$.
    This leads to a contradiction, as it indicates that $u$ is not to the right of $y$ (in particular, $u$ has the same tree relationship with $y$ as $v$).
    The only remaining option is for $R$ to intersect $P \setminus (R_y \cup R_{s_y,v})$, but all the left edges of $P \setminus (R_y \cup R_{s_y,v})$ are on the left side of $C$, so this leads to $u$ being reached from the right, a contradiction.
\end{proof}

With \cref{lem:rightloseleftswirl,clm:xypartitionexsist}, we are finally ready to describe the algorithm.
The algorithm simply applies a binary search to find the vertex $v$ specified in \cref{clm:xypartitionexsist}.
Using $\countA$ and $\select$ queries of the enhanced MSSP, the algorithm finds the median vertex $u\in \Left(P,s_y)$, and checks if $u$ is to the right of $y$ in the shortest paths tree of $s_y$.
According to the answer, the algorithm decides if $u$ is before or after the sought vertex $v$ in $P$ and proceeds to the relevant half of $P$.
The recursion halts when there is only a single vertex in $\Left(P',s_y)$.
Clearly, the recursion has logarithmic depth, and each recursive call is implemented in $\Otild(1)$ time.
Upon finding $v$, the algorithm returns the partition $P^y_1 = P[y,v]$ and $P^y_2=P(v,x]$.
It follows directly from \cref{clm:xypartitionexsist,lem:rightloseleftswirl} that this partition satisfies the conditions of the lemma.
\end{proof}

\begin{lemma}\label{lem:out_partition_xy}
    There is an $\Otild(1)$ algorithm that given $s_x$ and $s_y$ outputs an $(\{ s_x,s_y \},\leftside)$-partition of $P$.
\end{lemma}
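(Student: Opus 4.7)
The plan is to invoke \cref{lem:sxsyswirlypartition} to localize swirly behavior and reduce to the non-swirly setting of \cref{lem:out_partition_non-swirly}, then extend the resulting sub-partition by exploiting the ``losing'' structure on the complementary subpath.

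First, I apply \cref{lem:sxsyswirlypartition} to obtain the partitions $P = P^x_1 \circ P^x_2$ and $P = P^y_1 \circ P^y_2$, with breakpoints $v_x$ and $v_y$. Let $I = P^x_1 \cap P^y_2$; on $I$ both $s_x$ and $s_y$ reach left-reached vertices via non-swirly paths, so \cref{lem:out_partition_non-swirly} applies and in $\Otild(1)$ time yields a sub-partition $I = \hat{P}_1 \circ \hat{P}_2$, where $\hat{P}_1$ contains the $x$-end of $I$ and its vertices $(\{s_x,s_y\},\leftside)$-like $s_x$, while $\hat{P}_2$ vertices like $s_y$.

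Next, I characterize preferences on $P \setminus I$, which is a subpath containing $y$. On a subpath where one site is non-swirly and the other is losing, every vertex likes the non-swirly site: if the losing site left-reaches the vertex, the losing inequality places the vertex in the Voronoi cell of the non-swirly site; if it right-reaches, the vertex vacuously likes the non-swirly site. On the common losing subpath $P^x_2 \cap P^y_1$, at most one of $s_x, s_y$ can left-reach each vertex (otherwise both losing inequalities would be violated simultaneously), and a short case analysis shows every vertex here likes both sites and can be placed in either part.

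Finally, I assemble the partition via a case analysis on whether $v_y < v_x$ or $v_x \le v_y$. In the former case, the $s_y$-pinned subpath $(v_y, v_x)$ lies between the ambiguous common losing region (near $y$) and $\hat{P}_2$ (the $s_y$ part of the sub-partition on $I = [v_x, x]$), so setting $P_1 = \hat{P}_1$ and $P_2 = P \setminus \hat{P}_1$ yields a contiguous valid partition. In the latter case, the $s_x$-pinned subpath $[v_x, v_y]$ sits between the common losing region and $I = (v_y, x]$, and the main obstacle is to stitch this pinned region with $\hat{P}_2$ contiguously. To resolve this, I observe that the argument of \cref{clm:winner_takes_something} in fact needs only distance monotonicity for the losing site's shortest path, not that this path is itself non-swirly; this generalization yields a dichotomy: either every vertex of $\hat{P}_2$ also likes $s_x$ (so $P_1 = P[v_x, x]$, $P_2 = P[y, v_x)$ is valid), or every vertex of $[v_x, v_y]$ also likes $s_y$ (so $P_1 = \hat{P}_1$, $P_2 = P \setminus \hat{P}_1$ is valid). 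A constant number of enhanced MSSP queries — checking whether $s_y$ wins at the left-median of $s_y$ inside $\hat{P}_2$ — determines which option to output, giving an overall $\Otild(1)$-time algorithm.
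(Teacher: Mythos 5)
Your opening move is exactly the paper's: apply \cref{lem:sxsyswirlypartition}, identify the region where both sites are non-swirly, and hand it to \cref{lem:out_partition_non-swirly}. The second paragraph's observation---that on a stretch where one site is non-swirly and the other is losing, every vertex $(\{s_x,s_y\},\leftside)$-likes the non-swirly site---is also the key fact the paper relies on. So you have all the right ingredients. The problem is in how you assemble them, and it traces to a misreading of the interval structure produced by \cref{lem:sxsyswirlypartition}.

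The lemma guarantees that $P^x_1$ (non-swirly for $s_x$) is the end of $P$ containing $x$, $P^x_2$ (losing for $s_x$) is the end containing $y$, $P^y_1$ (losing for $s_y$) is the end containing $x$, and $P^y_2$ (non-swirly for $s_y$) is the end containing $y$. (The return statement inside the paper's proof of that lemma, $P^y_1 = P[y,v]$, $P^y_2 = P(v,x]$, is a typo that swaps the two names relative to the lemma statement; the statement and the subsequent use are the authoritative version.) Consequently $I = P^x_1 \cap P^y_2$ is an \emph{interior} interval of $P$, not a piece touching $x$, and $P\setminus I$ is in general \emph{two} subpaths: a prefix $P^y_1$ containing $x$, and a suffix $P^x_2$ containing $y$. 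Your claim that ``$P\setminus I$ is a subpath containing $y$'' and the later identification ``$I = [v_x,x]$'' are both false, and they propagate into the case analysis: in the case where $I$ is nonempty, the ``common losing'' region $P^x_2\cap P^y_1$ is actually empty, contrary to your description, and the output you propose, $P_1 = \hat P_1$, $P_2 = P\setminus\hat P_1$, is wrong. Here $P\setminus\hat P_1$ would contain the prefix $P^y_1$, whose vertices like $s_x$, not $s_y$; so this is not a valid $(\{s_x,s_y\},\leftside)$-partition.

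Once the interval picture is corrected, the assembly is immediate and requires neither a case split on $v_x$ versus $v_y$ nor any generalization of \cref{clm:winner_takes_something}: every vertex of $P^y_1$ likes $s_x$ (your own second-paragraph argument), every vertex of $P^x_2$ likes $s_y$ (symmetrically), and \cref{lem:out_partition_non-swirly} already gives a contiguous split $\hat P_1 \circ \hat P_2$ of the middle interval with the $s_x$-part adjacent to $P^y_1$ and the $s_y$-part adjacent to $P^x_2$. The answer is simply $P_1 = P^y_1 \circ \hat P_1$ and $P_2 = \hat P_2 \circ P^x_2$. The degenerate situation where the two non-swirly intervals do not overlap (so $I$ is empty and a common-losing gap exists) is handled by the observation you already made---no vertex there is left-reached by either site, so it can be assigned arbitrarily---but it does not need a generalization of \cref{clm:winner_takes_something}; the stitch point can be placed anywhere in the gap.
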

\begin{proof}
    The algorithm starts by applying \cref{lem:sxsyswirlypartition} to both $s_x$ and $s_y$, obtaining partitions $P= P^x_1 \circ P^x_2 = P^y_1 \circ P^y_2$.
    Let $a,b$ be the unique vertices such that $P^y_1=P[x,a)$ and $P^x_2 = P(b,y]$.
    Notice that every vertex in $P[a,b]=P^y_2 \cap P^x_1$ that is reached from the left by $s_x$ or by $s_y$, is reached from the left via a non-swirly path.
    Therefore, \cref{lem:out_partition_non-swirly} can be applied to $P[a,b]$.
    The algorithm applies \cref{lem:out_partition_non-swirly} to $P[a,b]$, obtaining a partition $P[a,b] = \hat P_1 \circ \hat P_2$ and returns $P_x = P^y_1 \circ \hat P_2$
    and $P_y = \hat P_2 \circ P^x_2$.
    Clearly, the algorithm takes $\Otild(1)$ time.
    The correctness follows from the correctness of \cref{lem:out_partition_non-swirly} and from the fact that every vertex $v\in P^y_1$ that $s_y$ reaches from the left has $\dist(s_x,v) < \dist(s_y,v)$ and therefore $(\{s_x,s_y\},\leftside)$-likes $s_x$ and every vertex $v\in P^x_2$ that $s_x$ reaches from the left has $\dist(s_y,v) < \dist(s_x,b)$ and therefore $(\{s_x,s_y\},\leftside)$-likes $s_y$, as required.
\end{proof}

\subsection{Partition for $s_1,s_2\in F_{\leftside}$}\label{sec:two-partition}

In this section we show how to compute the relaxed partition for any pair of vertices $s_1,s_2\in F_{\leftside}$ in $\Otild(1)$ time, thus proving \cref{lem:s1s2partition}.
The main idea is to recognize, separately for each $s\in \{s_1,s_2\}$, one of two cases: Either that $s$ is not the site of any vertex in $\Left(P,s)$ which means we can assign all $P$ to the other site.
Or, recognize a partition of $\Left(P,s)$ into up to two intervals, where one interval contains only vertices $v$ with $R_{s,v}$ being non-swirly, and the other interval with all vertices $v$ with $R_{s,v}$  being swirly.
To use these partition we exploit the partition of $P$ with respect to $\{s_x,s_y\}$ (\cref{sec:xy-partition}).
We then show that when considering both $s_1$ and $s_2$, we can identify a prefix and a suffix of $P$ that contains all swirly paths and are assigned to $s_1$ and $s_2$.
Between the prefix and suffix there is an interval that does not contain any swirly path from $s_1$ or $s_2$, and we use \cref{sec:ns-partition} to partition it.
By combining the prefix-suffix with the partition for the middle part, we obtain the desired partition for $P$ with respect to $\{s_1,s_2\}$.

Let $P_x=P^{left}_x$ (resp. $P_y$) be the prefix (resp. suffix) of $P$ computed by \cref{lem:out_partition_xy}.
We note that by definition of $(\{s_x,s_y\},\leftside)$-like, for every $v\in\Left(P,s_x)\cap\Left(P,s_y)$ we have $v\in P_x$ if and only if $\dist(s_x,v)<\dist(s_y,v)$.
The following lemma shows that for $s_x$, all the shortest paths to vertices of $\Left(P_x,s_x)$ are non-swirly.

\begin{lemma}\label{lem:sx-to-Px-non-swirly}
    If $v\in \Left(P_x,s_x)$ then $R_{s_x,v}$ is a non-swirly path.
    Similarly, if $v\in \Left(P_y,s_y)$ then $R_{s_y,v}$ is a non-swirly path.
\end{lemma}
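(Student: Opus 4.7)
The plan is to reduce the claim to a containment statement: I will show that $P_x \subseteq P^x_1$ and $P_y \subseteq P^y_2$, where $P^x_1$ is the non-swirly part of the partition for $s_x$ and $P^y_2$ is the non-swirly part of the partition for $s_y$ guaranteed by \cref{lem:sxsyswirlypartition}. Once this containment is established, the lemma is immediate: if $v \in \Left(P_x, s_x)$, then $v \in P_x \subseteq P^x_1$, and since whether $R_{s_x,v}$ enters $P$ from the left at $v$ is a local property (determined by the incoming edge of $R_{s_x,v}$ together with the two $P$-edges incident to $v$), we also have $v \in \Left(P^x_1, s_x)$. The defining property of $P^x_1$ from \cref{lem:sxsyswirlypartition} then forces $R_{s_x,v}$ to be non-swirly. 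The second statement follows symmetrically.

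To verify $P_x \subseteq P^x_1$, I would unpack the construction of $P_x$ from the proof of \cref{lem:out_partition_xy}: the algorithm first produces the partitions $P = P^x_1 \circ P^x_2 = P^y_1 \circ P^y_2$ of \cref{lem:sxsyswirlypartition}, identifies the vertices $a,b$ with $P^y_1 = P[x,a)$ and $P^x_2 = P(b,y]$, invokes \cref{lem:out_partition_non-swirly} on the middle region $P[a,b] = P^y_2 \cap P^x_1$ (where no swirly paths from $s_x$ or $s_y$ arise), obtaining $P[a,b] = \hat P_1 \circ \hat P_2$ with $\hat P_1$ the part associated to $s_x$, and returns $P_x = P^y_1 \circ \hat P_1$. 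Because $a$ lies strictly between $x$ and $b$ on $P$ by construction, we get $P^y_1 = P[x,a) \subseteq P[x,b] = P^x_1$ and $\hat P_1 \subseteq P[a,b] \subseteq P^x_1$; concatenating gives $P_x \subseteq P^x_1$. The containment $P_y \subseteq P^y_2$ follows in exactly the same way, using $P^x_2 = P(b,y] \subseteq P[a,y] = P^y_2$ and $\hat P_2 \subseteq P[a,b] \subseteq P^y_2$.

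Since almost all of the geometric content is already captured by \cref{lem:sxsyswirlypartition}, the proof is essentially a bookkeeping step over the construction in \cref{lem:out_partition_xy}. The only mild subtlety is translating ``$v$ enters $P_x$ from the left'' to ``$v$ enters $P^x_1$ from the left''; this holds at every interior vertex because the direction of entry depends only on edges local to $v$, and the endpoint case ($v=x$) is covered by the endpoint-extension convention from the preliminaries. No new cycle or Jordan-curve argument is needed beyond what is already established.
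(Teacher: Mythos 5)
Your proof takes a genuinely different route than the paper, and is sound in outline but leans on an assertion that you state without justification. The paper's proof never touches the internals of the construction: it works only from the \emph{output} guarantee of \cref{lem:out_partition_xy} — that every vertex of $P_x$ $(\{s_x,s_y\},\leftside)$-likes $s_x$ — together with \cref{lem:rightloseleftswirl}, splitting on whether $v$ is to the left of $x$ in the shortest-path tree of $s_x$. If it is not, \cref{lem:rightloseleftswirl} directly gives that $R_{s_x,v}$ is non-swirly. If it is, \cref{lem:rightloseleftswirl} yields $v\in\Left(P,s_y)$ and $\dist(s_x,v)>\dist(s_y,v)$, so the only site $v$ $(\{s_x,s_y\},\leftside)$-likes is $s_y$, contradicting $v\in P_x$. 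This is the cleaner argument because it depends only on what $P_x$ \emph{is} (a part of a relaxed partition) and not on how \cref{lem:out_partition_xy} built it.

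You instead unwind the construction of \cref{lem:out_partition_xy} to get the containment $P_x\subseteq P^x_1$ and then cite property~1 of \cref{lem:sxsyswirlypartition}. The weak point is the sentence ``because $a$ lies strictly between $x$ and $b$ on $P$ by construction.'' The inequality $a\le b$ is not proved anywhere; \cref{lem:out_partition_xy} tacitly presupposes it when it writes $P[a,b]=P^y_2\cap P^x_1$, and if it failed, $P_x=P[x,a)\circ\hat P_1$ and $P_y=\hat P_2\circ P(b,y]$ would overlap and not form a partition. You can recover $a\le b$ from the \emph{conclusion} of \cref{lem:out_partition_xy} (that its output is a valid partition), so your argument is not wrong, but it is circular-looking unless you make that dependence explicit — or prove $a\le b$ directly, e.g.\ by using \cref{lem:rightloseleftswirl} to show a vertex cannot be simultaneously $s_y$-swirly and $s_x$-swirly. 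Two smaller notes: the ``local property'' digression is unnecessary — here $\Left(P',s)$ just denotes $P'\cap\Left(P,s)$ for a subpath $P'$, so $\Left(P_x,s_x)\subseteq\Left(P^x_1,s_x)$ is immediate from $P_x\subseteq P^x_1$ — and you have silently (and correctly) repaired the paper's typo by reading $P_x=P^y_1\circ\hat P_1$ in place of the written $P_x=P^y_1\circ\hat P_2$.
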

\begin{proof}
    We prove the first claim, the proof of the second claim is symmetric.
    If $v$ appears to the right of $x$ in the shortest paths tree of $s_x$, then the lemma follows from \cref{lem:rightloseleftswirl}.
    If $v$ appears to the left of $x$ in the shortest paths tree of $s_x$, then by \cref{lem:rightloseleftswirl} we have that $v\in\Left(P.s_y)$ and $\dist(s_x,v)>\dist(s_y,v)$, contradicting the assumption that $v\in P_x$.
\end{proof}

Recall that a path $R_{s,v}$ from $s\in F_{\leftside}$ to $v\in \Left(P,s)$ is swirly if it crosses either $R_x$ or $R_y$.
The following lemma states that a swirly path must cross $R_x$ or $R_y$ from left to right, and must intersect the second path (among $R_x$ and $R_y$).

\begin{lemma}\label{lem:swirl-cross-and-intersect}
    Let $s\in F_{\leftside}\setminus (P\cup\{s_x,s_y\})$, and let $v\in\Left(P,s)$.
    If $R_{s,v}$ is an $x$-swirly path, then $R_{s,v}$ crosses $R_x$ from left to right (when considering $R_x$ as being oriented from $x$ to $s_x$).
    Moreover, $R_{s,x}$ must intersects $R_y$.

    Similarly, if $R_{s,v}$ is a $y$-swirly path, then $R_{s,v}$ crosses $R_y$ from left to right (when considering $R_y$ as being oriented from $s_y$ to $y$).
    Moreover, $R_{s,y}$ must intersects $R_x$.
\end{lemma}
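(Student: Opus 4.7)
I will prove the $x$-swirly case; the $y$-swirly case is symmetric. The topological scaffold will be the cycle $C = R_y \circ P \circ R_x \circ J$ from the proof of \cref{lem:Hleft-is-enough}, where $J$ is a Jordan arc embedded in $F$ joining $s_y$ to $s_x$ on the $F_{\leftside}$ side. I orient $C$ consistently with $P$ (from $y$ to $x$); in this orientation $R_x$ is traversed from $x$ to $s_x$, the interior of $C$ sits on the left of every boundary edge, and since $F_{\leftside}$ sits on the left boundary of $C$, the vertex $s$ lies in the interior of $C$ and therefore on the left of $R_x$.

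The first conclusion is then immediate from two facts: the $x$-swirly hypothesis forces $R_{s,v}$ to cross $R_x$ at least once, and two shortest paths cross transversally at most once (a second crossing would permit strictly shortening one via a subpath of the other, contradicting uniqueness of shortest paths). Since $s$ lies on the left of $R_x$, the unique transversal crossing at some vertex $z_x$ is necessarily from left to right.

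Before attacking the Moreover, I establish an intermediate structural fact that $R_{s,v}$ must also cross $R_y$ at some vertex $z_y$ strictly after $z_x$ along $R_{s,v}$. Indeed, once $R_{s,v}$ crosses $R_x$, it lies strictly on the right/exterior of $C$; to then reach $v\in P$ while entering $P$ from the left (the interior side of $C$), the path must re-enter the interior of $C$. Re-entry via $R_x$ is forbidden by uniqueness, re-entry via $P$ would contribute a right entry at the crossing point (contradicting $v\in\Left(P,s)$), and $J$ is not an edge of the graph; the only remaining possibility is a crossing of $R_y$.

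The Moreover—that $R_{s,x}$ intersects $R_y$—is the technical crux and I plan to prove it by contradiction. Assume $R_{s,x}\cap R_y=\emptyset$. I will form the closed curve $\Gamma$ obtained by concatenating $R_{s,x}$, the reverse of $R_x$ from $x$ back to $s_x$, and the sub-arc of $J$ from $s_x$ back to $s$; by uniqueness of shortest paths together with the disjointness hypothesis this curve can be taken to be non-self-crossing. Since $R_y\cap R_x=\emptyset$ and $R_y\cap R_{s,x}=\emptyset$ by assumption, the vertex $z_y\in R_y$ lies either entirely inside or entirely outside $\Gamma$; using the fact that $R_{s,v}[s,z_y]$ starts at $s\in\Gamma$ and crosses $R_x\subset\Gamma$ at $z_x$ before reaching $z_y$, I will show that in either case one can splice $R_{s,v}[z_y,v]$ together with an appropriate initial portion of $R_{s,x}$ and a sub-path of $P$ to exhibit a strictly shorter $s$-to-$v$ walk than $R_{s,v}$, contradicting minimality. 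The main obstacle is handling all topological configurations cleanly, particularly the case in which $R_{s,x}$ is itself swirly (so $\Gamma$ may cross $R_x$ nontrivially and require a modified enclosing curve); this is where the bulk of the case analysis will lie, and I expect it to mirror the cycle-and-shortcut style already used in the proofs of \cref{clm:no-criss-cross,clm:criss-cross}.
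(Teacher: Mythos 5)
Your first two steps are correct and follow essentially the same route as the paper: the paper also works with the cycle $C$ formed by $R_y$, $P$, $R_x$ and a Jordan arc embedded in $F$, observes that $s$ lies on the left side of $C$ so that the first crossing of $C$ by $R_{s,v}$ is from left to right, rules out $P$ (uniqueness of shortest paths) and $R_y$ (the $x$-swirly hypothesis) as the location of that first crossing, and then argues that the suffix of $R_{s,v}$ after the crossing cannot touch $R_x$ again and cannot enter $P$ from the left while lying on the right side of $C$, hence must meet $R_y$. Your ``intermediate structural fact'' is therefore exactly the paper's ``Moreover'', proved the same way.

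The problem is your third part. The clause ``$R_{s,x}$ must intersect $R_y$'' in the statement is a typo for ``$R_{s,v}$ must intersect $R_y$'' (and symmetrically ``$R_{s,y}$'' should read ``$R_{s,v}$''): the paper's own proof establishes only that $R_{s,v}[w,v]\cap R_y\neq\emptyset$, and every downstream use of the lemma invokes the intersection of $R_{s,v}$ with the other root path --- e.g., taking $z\in R_{s,v}\cap R_x$ for a $y$-swirly path in \cref{lem:swirly-monotone}, or the crossing of $R_x$ by the $y$-swirly path $R_{s_2,v}$ in \cref{lem:s1-swirl-s2-cannot-win}. So the claim you single out as ``the technical crux'' is not part of the intended lemma, and in any case you have not proved it: what you give is a plan (``I will show\ldots'', ``this is where the bulk of the case analysis will lie'') with the hardest configurations explicitly deferred, and it is far from clear that the literal claim about $R_{s,x}$ is even true. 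Drop that part; with the typo read as intended, your argument is complete and coincides with the paper's.
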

\begin{figure}[ht]
    \centering
    \includegraphics[width=0.3\textwidth]{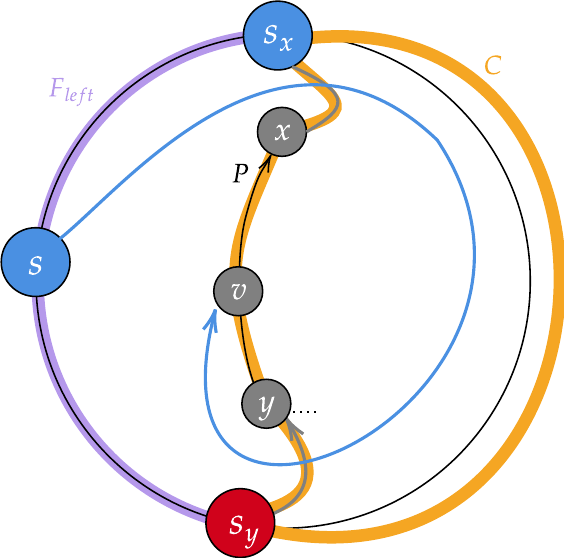}
    \caption{The cycle $C$ in the proof of \cref{lem:swirl-cross-and-intersect}.}
    \label{fig:swirl-cross-and-intersect}
\end{figure}
\begin{proof}
    We prove the first statement, the proof of the second statement is symmetric.
    Consider the cycle $C$ obtained by concatenating $R_y$, $P$,$R_x$ and the Jordan curve connecting $s_x$ and $s_y$ embedded in the face $F$.
    Notice that $R_x\cap R_y=\emptyset$ (since each one of them belongs to a different Voronoi cell) and therefore $C$ is a non-self-crossing cycle.
    We think of $C$ as being oriented consistently with $P$ so as to define left and right properly.
    Notice that all edges entering $P\setminus\{x,y\}$ from the left are on the left side of $C$.
    Since $s\in F_{\leftside}\setminus (P\cup\{s_x,s_y\})$ it must be on the left side of $C$.
    Thus, when a path starting at $s$ first crosses $C$ it must cross $C$ from left to right.
    By uniqueness of shortest paths $R_{s,v}$ does not cross $P$, and since $R_{s,v}$ is an $x$-swirly path, the first time $R_{s,v}$ crosses $C$ it must cross $R_x$.
    Thus, $R_{s,v}$ crosses $R_x$ from left to right.

    Let $w$ be the first vertex of $R_{s,v}$ strictly in the right side of $C$ ($w$ exists since $R_{s,v}$ crosses $C$).
    Since $R_{s,v}$ first crosses $R_x$, we have that $R_{s,v}[s,w)\cap R_x\ne \emptyset$.
    Therefore, by uniqueness of shortest paths, $R_{s,v}[w,v]$ cannot intersect $R_x$.
    All the edges entering $P$ on the right side of $C$ enter $P$ from the right.
    Therefore, $R_{s,v}[w,v]\cap R_y$ must be non-empty, since otherwise $v\notin \Left(P,s)$.
\end{proof}

The following lemma and corollary show that a swirly path $R_{s,v}$ is $y$-swirly if and only if $v\in P_x$.

\begin{lemma}\label{lem:s-x-swirl}
    Let $s\in F_{\leftside}\setminus\{s_x,s_y\}$ and let $v\in \Left(P,s)$ such that $R_{s,v}$ is an $x$-swirly path.
    Then, $v$ appears to the left of $x$ in the shortest paths tree of $s_x$, $v\in\Left(P,s_x)$, $v\in \Left(P,s_y)$ and $\dist(s,v)>\dist(s_x,v)\ge\dist(s_y,v)$.

    Similarly, if $R_{s,v}$ is a $y$-swirly path, then, $v$ appears to the right of $y$ in the shortest paths tree of $s_y$, $v\in\Left(P,s_y)$, $v\in \Left(P,s_x)$ and $\dist(s,v)>\dist(s_y,v)\ge\dist(s_x,v)$. .
\end{lemma}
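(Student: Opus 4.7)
The plan is to analyze the $x$-swirly path $R_{s,v}$ through a sequence of cycle arguments, in the spirit of Lemma~\ref{lem:rightloseleftswirl} and Lemma~\ref{lem:swirl-cross-and-intersect}. First I would apply Lemma~\ref{lem:swirl-cross-and-intersect} to pin down two anchor vertices on $R_{s,v}$: the first vertex $u \in R_{s,v} \cap R_x$, at which $R_{s,v}$ crosses $R_x$ from left to right, and a vertex $w \in R_{s,v} \cap R_y$ that appears strictly after $u$. The outer inequalities $\dist(s,v) > \dist(s_x,v)$ and $\dist(s,v) > \dist(s_y,v)$ then follow from the Voronoi preferences at $x$ and $y$. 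Since $u \in R_x$, we have $\dist(s_x,u) + \dist(u,x) = \dist(s_x,x)$, which combined with $\weight(s_x)+\dist(s_x,x) < \weight(s)+\dist(s,x)$ (from $x \in \Vor_H(s_x)$) and the triangle inequality $\dist(s,u)+\dist(u,x) \ge \dist(s,x)$ yields $\weight(s_x)+\dist(s_x,u) < \weight(s)+\dist(s,u)$; because $u$ lies on $R_{s,v}$, the first outer inequality follows via $\dist(s_x,v) \le \dist(s_x,u)+\dist(u,v)$ and $\dist(s,v) = \dist(s,u)+\dist(u,v)$. A symmetric argument at $w$ using $y \in \Vor_H(s_y)$ yields $\weight(s_y)+\dist(s_y,w) < \weight(s_x)+\dist(s_x,w)$ and the second outer inequality.

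For the structural claims I would argue on the cycle $C$ formed by $R_y$, $P$, $R_x$, and a Jordan curve in $F$ connecting $s_y$ and $s_x$, oriented consistently with $P$. After the crossing at $u$, the subpath $R_{s,v}[u,w]$ lies on the right side of $C$, and since $R_{s,v}$ must end at $v$ with a left edge of $P$, the intersection at $w$ must in fact be a crossing of $R_y$ from right to left (otherwise $R_{s,v}$ would have to cross $R_x$ or $R_y$ again, contradicting uniqueness of shortest paths). To show $v \in \Left(P,s_x)$ and that $v$ appears to the left of $x$ in $T_{s_x}$, I would assume the contrary and examine the cycle bounded by $R_{s_x,v}$, $P[v,x]$ and $R_x$; a Jordan-curve argument combined with the Voronoi preference at a vertex enclosed in this cycle yields a contradiction, mirroring the strategy of Lemma~\ref{lem:rightloseleftswirl}. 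An analogous construction using the candidate $s_y$-to-$v$ path $R_y[s_y,w] \cdot R_{s,v}[w,v]$, which enters $P$ from the left at $v$, establishes $v \in \Left(P,s_y)$.

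The main obstacle I anticipate is the inner inequality $\dist(s_x,v) \ge \dist(s_y,v)$. My plan is to show that the shortest $s_x$-to-$v$ path $R_{s_x,v}$ must itself cross $R_y$ at some vertex $w'$: since $R_{s_x,v}$ enters $P$ from the left at $v$, and the portion of $P$ near $v$ is separated from the Jordan-curve side of $C$ by $R_y$ (the very reason $R_{s,v}$ had to swirl in the first place), the shortest path from $s_x$ must traverse $R_y$ as well. Once such a crossing $w' \in R_y \cap R_{s_x,v}$ is in hand, the same argument used for $w$ gives $\weight(s_y)+\dist(s_y,w') < \weight(s_x)+\dist(s_x,w')$, and since $w' \in R_{s_x,v}$,
\[
\weight(s_x)+\dist(s_x,v) = \weight(s_x)+\dist(s_x,w')+\dist(w',v) > \weight(s_y)+\dist(s_y,w')+\dist(w',v) \ge \weight(s_y)+\dist(s_y,v),
\]
giving the inner inequality. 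The delicate point, requiring the most careful cycle reasoning, is establishing the existence of the crossing $w'$ and ruling out shortcuts that would let $R_{s_x,v}$ reach $v$ while avoiding $R_y$.
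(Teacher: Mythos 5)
Your overall strategy (cycle arguments around $R_x$, $R_y$, the crossing points of $R_{s,v}$, and triangle inequalities at those crossings) is in the same spirit as the paper, but there is a real gap at the crux of the structural claims, and you miss the paper's organizing trick. The paper first fixes $z \in R_{s,v}\cap R_x$ and then explicitly introduces the vertex $w$ defined as the \emph{last vertex on $R_{s_x,v}$ that lies on $R_x$}. This is the key device: $R_{s_x,v}$ and $R_x$ both emanate from $s_x$ and may share a prefix, so any cycle that tries to use both of them directly --- such as your proposed cycle bounded by $R_{s_x,v}$, $P[v,x]$ and $R_x$ --- is not a simple closed curve and the Jordan-curve argument does not go through as stated. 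The paper avoids this by building its cycles $C$ and $C_2$ out of $R_{s,v}[z,\cdot]$, $R_x$, $P$ and the Jordan arc in $F$ (never out of $R_{s_x,v}$ itself), and then reasoning about where $w$ and the edge $(w,w')$ sit relative to those cycles, with a separate trivial case $w=z$.

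The second significant departure is how you reach $v\in\Left(P,s_y)$ and $\dist(s_x,v)\ge\dist(s_y,v)$. The paper establishes exactly two facts --- $v$ appears to the left of $x$ in the shortest-path tree of $s_x$, and $v\in\Left(P,s_x)$ --- and then simply invokes Lemma~\ref{lem:rightloseleftswirl}, which already encapsulates the ``$R_{s_x,v}$ must cross $R_y$'' phenomenon. You instead propose to prove that crossing directly, and you acknowledge yourself that ``establishing the existence of the crossing $w'$ and ruling out shortcuts'' is the delicate missing step; in effect you are re-deriving Lemma~\ref{lem:rightloseleftswirl} from scratch without its hypotheses in hand, and the sketch given does not close that gap. (A minor additional slip: the inequality $\weight(s_y)+\dist(s_y,w)<\weight(s_x)+\dist(s_x,w)$ you state for the ``second outer inequality'' compares $s_y$ with $s_x$ at a vertex of $R_{s,v}$ and does not by itself bound $\dist(s,v)$ against $\dist(s_y,v)$; the comparison should be against $s$, though this follows anyway once the inner inequality is in place.) To repair the proof you should introduce the vertex $w$ on $R_{s_x,v}\cap R_x$, handle the $w=z$ case separately, build the cycles from $R_{s,v}$ rather than $R_{s_x,v}$, and then cite Lemma~\ref{lem:rightloseleftswirl} for the $s_y$ conclusions.
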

\begin{proof}
We prove the first statement, the proof of the second statement is symmetric.
By \cref{lem:swirl-cross-and-intersect}, $R_{s,v}$ crosses $R_x$ from right to left (when considering $R_x$ oriented from $s_x$ to $x$).
For simplicity of the proof, we assume that $R_{s,v}\cap R_x$ contains exactly one vertex, $z$.
Let $w$ be the last (farthest from $s_x$) vertex on $R_{s_x,v}$ that is on $R_x$.
We first note that if $w=z$ then $R_{s_x,v}[w,v]=R_{s,v}[z,v]$ and therefore the lemma holds.
Moreover, by uniqueness of shortest paths, it cannot be the case that $w$ appears after $z$ on $R_x$.
Thus, it remains to consider the case where $w$ appears strictly before $z$ on $R_x$.
For this case, we prove the two claims separately.

\begin{figure}[ht]
    \centering
    \begin{subfigure}[t]{0.3\textwidth}
        \centering
        \includegraphics[width=\textwidth]{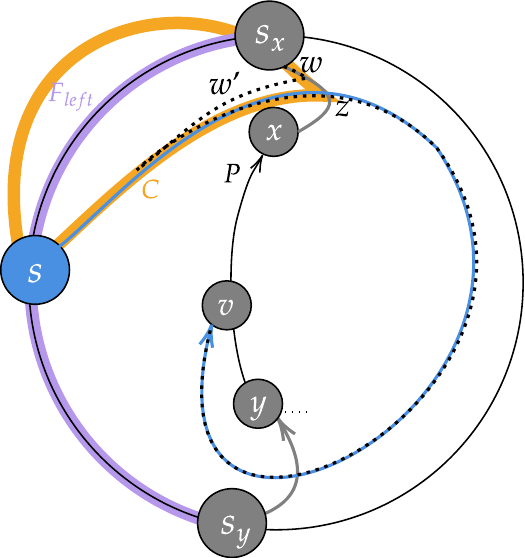}
        \caption{The cycle $C$. The dotted black path is $R_{s_x,v}$.}
        \label{fig:Swirl-S-SX-CA}
    \end{subfigure}
    \hspace{1in}
    \begin{subfigure}[t]{0.3\textwidth}
        \centering
        \includegraphics[width=\textwidth]{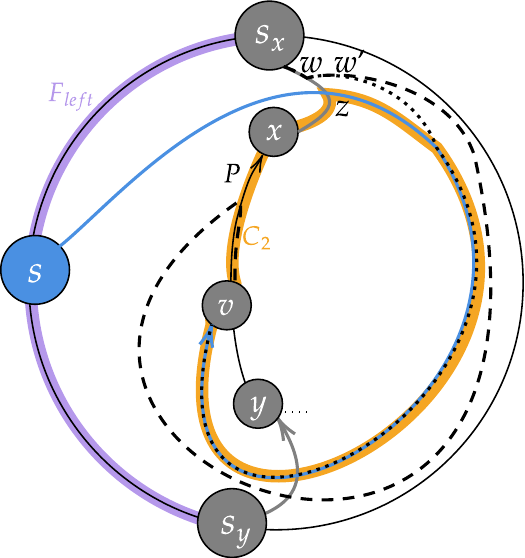}
        \caption{The cycle $C_2$. The dotted and the dashed black paths are the two options for $R_{s_x,v}$.}
        \label{fig:Swirl-S-SX-C2A}
    \end{subfigure}
    \caption{Figures for the proof of  \cref{lem:s-x-swirl}.  }
    \label{fig:s-x-swirl}
\end{figure}

If $v$ appears to the left of $x$ in the shortest paths tree of $s_x$.
Let $w'$ be the following vertex of $w$ on $R_{s_x,v}$.
Assume by contradiction that $v$ appears to the right of $x$.
This means that $R_{s_x,v}$ exits $R_x$ to the right.
Consider the cycle $C$ composed of the concatenation of $R_x[s_x,z]$, $R_{v,s}[z,s]$ and the Jordan curve connecting $s$ and $s_x$ embedded in the face $F$ (see \cref{fig:Swirl-S-SX-CA}).
We think of $C$ as oriented consistently with $R_x$ (from $s_x$ to $z$).
Clearly, $w'$ is in the right side of $C$.
On the other hand, $v$ is in the left side $C$.
To see this, consider the path $R_{s,v}[z,v]$.
Notice that by uniqueness of shortest paths $R_{s_x,v}[w',v]$ cannot cross $R_x$.
Assume $R_{s_x,v}[w',v]$ intersects with $R_{s,v}[s,z]$, then by uniqueness of shortest paths, it must reach $z$, thus crosses $R_x$, a contradiction.

If $v\in\Left(P,s_x)$.
Recall that we consider the case where $w$ is strictly before $z$ on $R_x$ and $w'$ is to the left of $R_x$.
Consider the cycle $C_2$ composed of the concatenation of $P[v,x]$, $R_x[x,z]$, $R_{s,v}[z,v]$ (see \cref{fig:Swirl-S-SX-C2A}).
We think of $C$ as an oriented cycle whose orientation is consistent with $P$.
Notice that all the right-entering arcs of $P\setminus (R_x\cap R_{s,v})$  are in the right side of $C_2$ and that $w$ is on the left side of $C_2$.
Since $v$ is on the cycle and $w$ is strictly to the left of the cycle, $R_{s_x,v}[w',v]$ must intersect $C_2$ at some point, let $u$ be the first vertex of $R_{s_x,v}[w',v]$ on $C_2$.
Clearly, $u\notin R_x[z,x]$, by uniqueness of shortest paths.
If $u\in R_{s,v}$ then by uniqueness of shortest paths $s_x$ reaches $v$ from the left.
Otherwise, if $u\in P\setminus (R_x\cup R_{s,v})$, then $R_{s_x,v}$ enters $P$ in $u$ from the left.
Since $v\in\Left(P,s_x)$ and $v$ appears to the left of $x$ in the shortest paths tree of $s_x$, by \cref{lem:rightloseleftswirl} we have that $v\in\Left(P,s_y)$, as required.
Moreover, by \cref{lem:rightloseleftswirl} we also have $\dist(s_x,v)>\dist(s_y,v)$.
Finally, by the triangle inequality we have $\dist(s,v)=\dist(s,z)+\dist(z,v)>\dist(s_x,z)+\dist(z,v)\ge \dist(s_x,v)$, as required.
\qedhere

\end{proof}

The following is a direct consequence of \cref{lem:rightloseleftswirl,lem:s-x-swirl}.

\begin{corollary}\label{cor:Px-not-x-swirly}
Let $s\in F_{\leftside}\setminus\{s_x,s_y\}$ and let $v\in \Left(P_x,s)$.
If $R_{s,v}$ is a swirly path, then it is a $y$-swirly path.
Furthermore, if $v\in \Left(P_x,s_x)$ then $R_{s_x,v}$ is a non-swirly path.

Similarly for $v\in \Left(P_y,s)$, if $R_{s,v}$ is a swirly path, then it is an $x$-swirly path.
Furthermore, if $v\in \Left(P_y,s_y)$ then $R_{s_y,v}$ is a non-swirly path.
\end{corollary}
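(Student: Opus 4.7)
The plan is to derive the corollary as a direct consequence of Lemmas~\ref{lem:rightloseleftswirl} and~\ref{lem:s-x-swirl}, using as the bridge the characterization of $P_x$-membership stated just before Lemma~\ref{lem:sx-to-Px-non-swirly}: for every $v\in\Left(P,s_x)\cap\Left(P,s_y)$, we have $v\in P_x$ iff $\dist(s_x,v)<\dist(s_y,v)$. The $P_y$ case is fully symmetric, exchanging the roles of $x$ and $y$, so I would only spell out the $P_x$ case.

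For the first claim, I would argue by contradiction: assume $v\in\Left(P_x,s)$ with $R_{s,v}$ swirly but not $y$-swirly, i.e., $x$-swirly. Apply Lemma~\ref{lem:s-x-swirl} to conclude that $v\in\Left(P,s_x)\cap\Left(P,s_y)$ and $\dist(s_y,v)\le\dist(s_x,v)$. Combined with the $P_x$ characterization above, this contradicts $v\in P_x$, which forces $\dist(s_x,v)<\dist(s_y,v)$. Hence any swirly $R_{s,v}$ with $v\in\Left(P_x,s)$ must be $y$-swirly.

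For the second claim, note that it is literally Lemma~\ref{lem:sx-to-Px-non-swirly}, so I could simply cite it. If I wanted a self-contained derivation from the two lemmas named, I would again argue by contradiction: if $v\in\Left(P_x,s_x)$ and $R_{s_x,v}$ is swirly, then the second part of Lemma~\ref{lem:rightloseleftswirl} (whose "otherwise" clause is non-swirly) forces $v$ to lie to the left of $x$ in the SP-tree of $s_x$; then the first part of the same lemma gives $v\in\Left(P,s_y)$ and $\dist(s_x,v)>\dist(s_y,v)$, again contradicting the $P_x$ characterization.

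There is essentially no obstacle here: both parts are short deductions, and the only thing to be careful about is citing the correct clauses of Lemmas~\ref{lem:rightloseleftswirl} and~\ref{lem:s-x-swirl} (each has a direction statement \emph{and} an inequality between $\dist(s_x,\cdot)$ and $\dist(s_y,\cdot)$), and remembering that the equivalence "$v\in P_x \iff \dist(s_x,v)<\dist(s_y,v)$" applies only when $v\in\Left(P,s_x)\cap\Left(P,s_y)$ — which is exactly the situation both lemmas place us in. The symmetric statements for $P_y$ then follow by the mirror-image argument without any new ideas.
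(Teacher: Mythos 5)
Your proposal is correct and matches the paper exactly: the paper states the corollary is ``a direct consequence of Lemmas~\ref{lem:rightloseleftswirl} and~\ref{lem:s-x-swirl}'' without further detail, and your argument simply spells out that implication. The key points you get right: (i) that swirly but not $y$-swirly means $x$-swirly, so Lemma~\ref{lem:s-x-swirl} applies and its conclusions $v\in\Left(P,s_x)\cap\Left(P,s_y)$ together with $\dist(s_x,v)\geq\dist(s_y,v)$ contradict the $P_x$-characterization $v\in P_x \iff \dist(s_x,v)<\dist(s_y,v)$; and (ii) that the second claim is literally Lemma~\ref{lem:sx-to-Px-non-swirly}, whose proof in the paper is in fact the same two-clause contrapositive reading of Lemma~\ref{lem:rightloseleftswirl} that you describe.
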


In the following Lemmas (\cref{clm:being-left-to-vtop-is-monotone,lem:top-bottom-swirl-together,lem:classifyinterval,lem:TopTreeBotTree-Partition,lem:Partition-P-for-one-site}) we show that for any $s\in F_{\leftside}\setminus\{s_x,s_y\}$ we can either deduce that $s$ is not the site of any $v\in\Left(P,s)$ or a partition of $\Left(P,s)$ into an interval of non-swirly paths, and an interval of swirly paths.

\begin{lemma}\label{clm:being-left-to-vtop-is-monotone}
    Let $s\in F_{\leftside}\setminus P$ and $v_{top},v_{bottom}\in \Left(P,s)$ such that $v_{top}$ is closer to $x$ than $v_{bottom}$ and in the shortest paths tree of $s$, $v_{bottom}$ appears not to the left of $v_{top}$.
    Then, for every $u\in \Left(P[v_{top},v_{bottom}],s)$ we have that $u$ appears not to the left of $v_{top}$ in the shortest paths tree of $s$.
\end{lemma}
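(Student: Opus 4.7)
\textbf{Proof proposal for \cref{clm:being-left-to-vtop-is-monotone}.}
The plan is to argue by contradiction: suppose some $u\in \Left(P[v_{top},v_{bottom}],s)$ appears strictly to the left of $v_{top}$ in the shortest paths tree $T_s$. I will first dispose of the degenerate cases in which $v_{top}$ and $v_{bottom}$ have an ancestor-descendant relation in $T_s$, and then do the main work using a Jordan curve argument.

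First I would handle the two degenerate cases. If $v_{bottom}$ is a proper ancestor of $v_{top}$ in $T_s$, then $R_{s,v_{top}}$ passes through $v_{bottom}$; since both lie on $P$, uniqueness of shortest paths forces $R_{s,v_{top}}[v_{bottom},v_{top}]=P[v_{bottom},v_{top}]$, so every $u\in P[v_{top},v_{bottom}]$ lies on $R_{s,v_{top}}$ and is therefore an ancestor of $v_{top}$, i.e.\ not to the left of $v_{top}$. The symmetric argument handles the case that $v_{bottom}$ is a proper descendant of $v_{top}$. Both conclusions are consistent with the lemma, so in each case no violating $u$ exists.

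For the main case, $v_{bottom}$ is strictly to the right of $v_{top}$ in $T_s$ and neither is an ancestor of the other. Let $z$ be the LCA of $v_{top}$ and $v_{bottom}$ in $T_s$. I would form the simple closed curve
\[
C\;=\;R_{s,v_{top}}[z,v_{top}]\;\circ\;P[v_{top},v_{bottom}]\;\circ\;R_{s,v_{bottom}}[z,v_{bottom}]^{-1},
\]
whose three portions are pairwise disjoint except at the shared endpoints, by uniqueness of shortest paths. Orienting $C$ consistently with this traversal, the assumption that the edge of $T_s$ at $z$ toward $v_{bottom}$ lies clockwise of the edge toward $v_{top}$ identifies the interior of $C$ as the region lying locally to the right of $R_{s,v_{top}}$ (from $z$ toward $v_{top}$). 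Crucially, because $R_{s,v_{top}}$ and $R_{s,v_{bottom}}$ both enter $P$ from the left at their respective endpoints, this interior is consistently the left side of $P[v_{top},v_{bottom}]$ in the planar embedding; a straightforward local check at $v_{top}$ and $v_{bottom}$ confirms the orientation match.

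Finally I would derive the contradiction. Since $u$ is strictly to the left of $v_{top}$ in $T_s$, the path $R_{s,u}$ shares a (possibly trivial) prefix with $R_{s,v_{top}}$ and then departs on the left side at some vertex $z_u$; immediately after $z_u$, the path $R_{s,u}$ lies in the \emph{exterior} of $C$. Uniqueness of shortest paths prevents $R_{s,u}$ from crossing $R_{s,v_{top}}$, $R_{s,v_{bottom}}$, or $P$ transversally, and in particular $R_{s,u}$ can touch $C$ again only by entering $P[v_{top},v_{bottom}]$ and then following it to $u$. But approaching $P[v_{top},v_{bottom}]$ from the exterior of $C$ means approaching from the right side of $P$, contradicting $u\in \Left(P,s)$, which asserts that $R_{s,u}$ enters $P$ from the left. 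The main obstacle will be Step 3, namely turning the tree-theoretic ``left of $v_{top}$'' and the path-theoretic ``enters $P$ from the left'' into matching statements about the exterior of $C$; this requires pinning down the orientation conventions at $z$, at $v_{top}$, and at $v_{bottom}$, after which the rest of the argument is a standard Jordan-curve plus shortest-path-uniqueness deduction.
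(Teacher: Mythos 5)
Your proposal follows essentially the same approach as the paper: form a cycle bounded by $R_{s,v_{top}}$, $R_{s,v_{bottom}}$, and $P[v_{top},v_{bottom}]$, orient it, and argue via uniqueness of shortest paths and a Jordan-curve crossing argument that a vertex $u$ appearing strictly to the left of $v_{top}$ would force $R_{s,u}$ to reach $u$ from the wrong side of $P$, contradicting $u\in\Left(P,s)$.

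The minor differences are points in your favor. First, the paper's cycle $C$ is declared to run through $s$ via $R_{s,v_{bottom}}$ and $R_{v_{top},s}$ in full, so the common prefix of those two tree paths (from $s$ to their LCA) is traversed twice; the paper's remark that "$C$ has no self crossing" really refers to the simple sub-cycle after the LCA, which is exactly the cycle you define. Your version, starting at the LCA $z$, is a genuine simple closed curve from the outset. Second, you explicitly dispose of the cases in which $v_{top}$ and $v_{bottom}$ are in an ancestor--descendant relation in $T_s$; in those cases the paper's $C$ degenerates to a doubled path with empty interior, so the paper's right-side/left-side argument is not literally applicable, and your observation that the lemma holds vacuously there (every vertex of $P[v_{top},v_{bottom}]$ lies on $R_{s,v_{top}}$ or on $R_{s,v_{bottom}}$ and is thus an ancestor or descendant of $v_{top}$) is the clean way to handle it.

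Two points to tighten if you write this out in full. (1) When $R_{s,u}$ departs the shared prefix with $R_{s,v_{top}}$ at $z_u$, it can happen that $z_u$ is a proper ancestor of the LCA $z$, so $z_u$ is not on $C$; your sentence "immediately after $z_u$, the path $R_{s,u}$ lies in the exterior of $C$" is then not a local consequence of the departure direction but follows because $R_{s,z}$ itself (and hence $z_u$) lies in the exterior of $C$, and $R_{s,u}$ cannot cross $C$ on the way to $P$. (2) Your step "$R_{s,u}$ can touch $C$ again only by entering $P[v_{top},v_{bottom}]$" is correct but needs the explicit uniqueness argument that re-touching $R_{s,v_{bottom}}[z,v_{bottom}]$ would force $R_{s,u}$ and $R_{s,v_{bottom}}$ to share their prefix up to that vertex, contradicting that $u$ branches off to the left while $v_{bottom}$ branches off to the right; this is exactly the sub-case the paper handles with the sentence "Otherwise, we have that $R_{s,v_{bottom}}[s,w] = R_{s,u}[s,w]$ which implies that $u$ is not to the left of $v_{top}$." With those two points spelled out, your proof and the paper's coincide.
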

\begin{figure}[ht]
    \centering
    \includegraphics[width=0.3\textwidth]{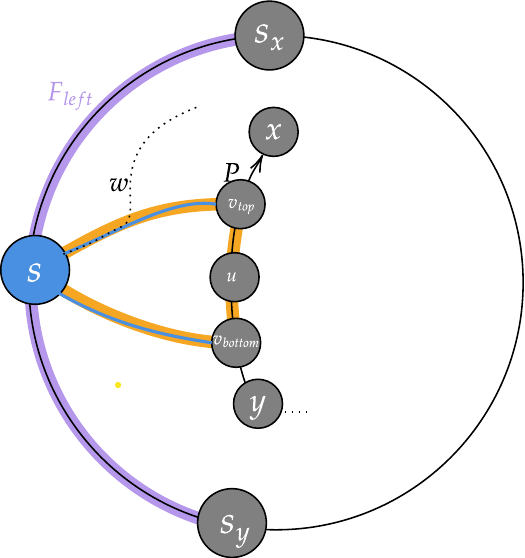}
    \caption{The cycle $C$ for the proof of \cref{clm:being-left-to-vtop-is-monotone}.}
    \label{fig:being-left-to-vtop-is-monotone}
\end{figure}
\begin{proof}
    Assume to the contrary that there is a vertex $u\in P[v_{top},v_{bottom}]$ that $s$ reaches $u$ from the left and $v$ is to the left of $v_{top}$ in the shortest paths tree of $s$.

    Consider the cycle $C$ composed of the concatenation of $R_{s,v_{bottom}}$, $P[v_{bottom},v_{top}]$, and $R_{v_{top},s}$ (see \cref{fig:being-left-to-vtop-is-monotone}).
    We think of $C$ as oriented consistently with $P$.
    Notice that $C$ has no self crossing, since it consists of three shortest paths with shared endpoints.
    Let $(w',w)$ be the first edge of $R_{s,u}$ that is not on $R_{s,v_{top}}$.
    Since $u$ is to the left of $v_{top}$ in the shortest paths tree of $s$, $w$ is on the right side of $C$ (notice that $R_{s,v_{top}}$ is reversely oriented in $C$).
    The path $R_{s,u}[w,u]$ ends on $C$, so it must intersect $C$.
    Let $z$ be the first vertex of $R_{s,u}[w,u]$ that is on $C$.
    By uniqueness of shortest paths, $z$ cannot be on $R_{s,v_{top}}$.
    Moreover, we claim that $z$ is not on $R_{s,v_{bottom}}$ as well.
    If $w$ is not on $R_{s,v_{bottom}}$, this follows from uniqueness of shortest paths.
    Otherwise, we have that $R_{s,v_{bottom}}[s,w] = R_{s,u}[s,w]$ which implies that $u$ is not to the left of $v_{top}$ in the shortest paths tree of $s$, a contradiction.
    We conclude that $z$ is on $P\setminus (R_{s,v_{top}}\cup R_{s,v_{bottom}})$.
    Recall that all left edges of $P\setminus (R_{s,v_{top}}\cup R_{s,v_{bottom}})$ are on the left side of $C$.
    It follows that $s$ reaches $u$ from the right, a contradiction.
\end{proof}

A subpath of $P[a,b]$ is called $(s,\leftside)$ swirly (resp. $x$-swirly, $y$-swirly, non-swirly) if for every $v\in \Left(P[a,b],s)$, the path $R_{s,v}$ is a swirly (resp. $x$-swirly, $y$-swirly, non-swirly) path.

\begin{lemma}\label{lem:top-bottom-swirl-together}
    Let $s\in F_{\leftside}\setminus (P\cup \{s_x,s_y\})$ and $v_{top},v_{bottom}\in \Left(P,s)$ such that $v_{top}$ is closer to $x$ than $v_{bottom}$ and  $v_{bottom}$ appears not to the left of $v_{top}$ in the shortest paths tree of $s$, then   $P[v_{top},v_{bottom}]$  is either a $(s,\leftside)$ non-swirly, a $(s,\leftside)$ $x$-swirly or a $(s,\leftside)$ $y$-swirly subpath.

\end{lemma}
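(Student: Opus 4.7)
The proof proceeds by contradiction. Suppose there exist $u_1, u_2 \in \Left(P[v_{top}, v_{bottom}], s)$ such that $R_{s,u_1}$ and $R_{s,u_2}$ have distinct swirly types; without loss of generality, $u_1$ lies on $P$ between $v_{top}$ and $u_2$, so $u_1$ is closer to $x$ than $u_2$. By \cref{clm:being-left-to-vtop-is-monotone}, both $u_1$ and $u_2$ are not to the left of $v_{top}$ in the shortest paths tree $T_s$. I will construct the cycle $C = R_{s,u_1} \circ P[u_1, u_2] \circ R_{u_2, s}$, which by uniqueness of shortest paths is non-self-crossing, orient it consistently with $P$, and analyze how $R_x$ and $R_y$ interact with $C$.

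I proceed by case analysis on the pair of differing swirly types. In the case where $R_{s,u_1}$ is $x$-swirly and $R_{s,u_2}$ is $y$-swirly, \cref{lem:s-x-swirl} yields $\dist(s_x, u_1) > \dist(s_y, u_1)$ and $\dist(s_y, u_2) > \dist(s_x, u_2)$, together with $u_1, u_2 \in \Left(P, s_x) \cap \Left(P, s_y)$; then the $(\{s_x, s_y\}, \leftside)$-partition obtained in \cref{lem:out_partition_xy} forces $u_1$ to lie in the $y$-end part $P_y$ and $u_2$ in the $x$-end part $P_x$, contradicting the ordering that $u_1$ is closer to $x$ than $u_2$. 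The remaining sub-cases---the reverse pairing ($y$-swirly, $x$-swirly) and the mixed non-swirly/swirly pairings---are ruled out by planar topological arguments inside $C$: using \cref{lem:swirl-cross-and-intersect} (every swirly path crosses both $R_x$ and $R_y$) and tracking the cyclic order of the resulting four crossings along the boundary of the right region bounded by $R_x$, $P$, and $R_y$, combined with the fact that $R_{s,u_1}$ and $R_{s,u_2}$ share a prefix in $T_s$ and are vertex-disjoint after the divergence, I show that the required configuration is geometrically inconsistent, either by forcing a planar crossing between the two shortest paths or by violating the monotonicity from \cref{clm:being-left-to-vtop-is-monotone}.

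The main obstacle is formalizing the planar topological arguments in the sub-cases that do not admit a direct contradiction from the $(\{s_x, s_y\}, \leftside)$-partition. In particular, distinguishing linked from unlinked cyclic orderings of the four $R_x$/$R_y$-crossing points of the two swirly paths on the boundary of the right region, and tying these orderings back to the relative positions of $u_1$ and $u_2$ on $P$, requires delicate Jordan curve arguments. Routine care is then needed to handle degenerate situations such as sub-path sharing between shortest paths, $u_i$ coinciding with $v_{top}$ or $v_{bottom}$, or $R_x$ and $R_y$ sharing a vertex.
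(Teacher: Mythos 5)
Your proposal takes a genuinely different route from the paper, but it contains a substantial gap. You argue by contradiction, taking two vertices $u_1,u_2$ of distinct swirly types and building the cycle $C=R_{s,u_1}\circ P[u_1,u_2]\circ R_{u_2,s}$. Of the six ordered pairs of distinct types, you only actually complete the $(x\text{-swirly}, y\text{-swirly})$ pair, where \cref{lem:s-x-swirl} plus the $v\in P_x\iff \dist(s_x,v)<\dist(s_y,v)$ characterization of the $(\{s_x,s_y\},\leftside)$-partition forces $u_1\in P_y$ and $u_2\in P_x$, contradicting the ordering on $P$. That piece is correct. But the other five cases --- the reverse pairing and all four mixed non-swirly/swirly pairings --- are dismissed as ``ruled out by planar topological arguments inside $C$'' with no actual argument given; you yourself flag this as ``the main obstacle.'' These are not routine: for instance, in the $(y\text{-swirly}, x\text{-swirly})$ ordering, \cref{cor:Px-not-x-swirly} would place $u_1\in P_x$ and $u_2\in P_y$, which is \emph{consistent} with $u_1$ being closer to $x$, so no partition-based contradiction is available and one really does need the planar crossing analysis. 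Since that analysis is the core content of the lemma, the proposal as written has a genuine gap.

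The paper avoids the combinatorial explosion by \emph{not} arguing by contradiction between two generic witnesses. Instead it fixes $v_{top}$ as an anchor, uses \cref{clm:being-left-to-vtop-is-monotone} to control where the first divergence $(w',w)$ of $R_{s,u}$ from $R_{s,v_{top}}$ lands, and does a direct three-way case analysis on the type of $R_{s,v_{top}}$. In each case it constructs a bespoke cycle tailored to that type (for the non-swirly case, a cycle through $R_{s,v_{top}}$, $P[v_{top},y]$, $R_y$ and the face $F$; for the $y$-swirly case, a cycle through $R_{s,v_{top}}[s,z]$ and $R_y[s_y,z]$; for the $x$-swirly case, a cycle through $R_{s,v_{top}}[z,v_{top}]$, $P[v_{top},x]$, $R_x[x,z]$). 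The cycle places $R_x$, $R_y$, and the left side of the relevant stretch of $P$ on controlled sides, so one can track which of $R_x,R_y$ the path $R_{s,u}$ must cross first. This gives a \emph{constructive} argument (every $u$ has the same type as $v_{top}$) rather than a contradiction between two generic points, and it reduces to exactly three cases rather than six. If you want to salvage your structure, the cleanest fix is to specialize $u_1 := v_{top}$ (you may do so because $v_{top}\in\Left(P,s)$ by hypothesis), and then adopt the paper's per-type cycle construction instead of the single cycle $C$; the partition shortcut you found remains useful only for the single case you already handled.
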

\begin{proof}
Let $u\in \Left(P[v_{top},v_{bottom}],s)$ be a vertex.
We prove that  $R_{s,v_{top}}$ and  $R_{s,u}$ are both of the same type (non-swirly, $x$-swirly, or $y$-swirly).
By \cref{clm:being-left-to-vtop-is-monotone} we have that $u$ appears not to the left of $v_{top}$ in the shortest paths tree of $s$.
We distinguish between three cases:\\

\begin{figure}[ht]
    \centering
    \begin{subfigure}[t]{0.3\textwidth}
        \centering
        \includegraphics[width=\textwidth]{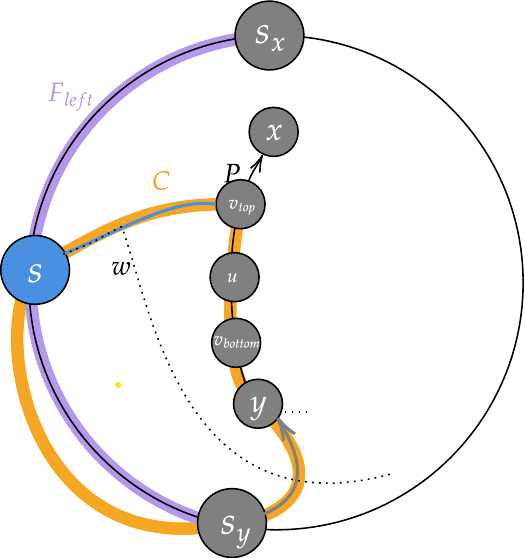}
        \caption{The cycle $C$.}
        \label{fig:Swirl-S-SX-C1}
    \end{subfigure}
     \hspace{0.2in}
    \begin{subfigure}[t]{0.3\textwidth}
        \centering
        \includegraphics[width=\textwidth]{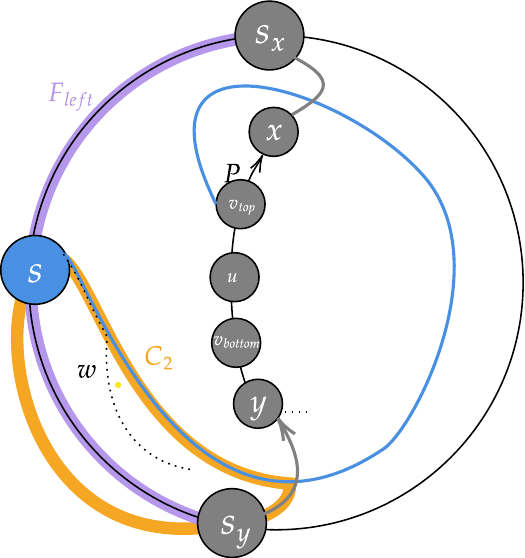}
        \caption{The cycle $C_2$.}
        \label{fig:Swirl-S-SX-C2}
    \end{subfigure}
     \hspace{0.2in}
    \begin{subfigure}[t]{0.3\textwidth}
        \centering
        \includegraphics[width=\textwidth]{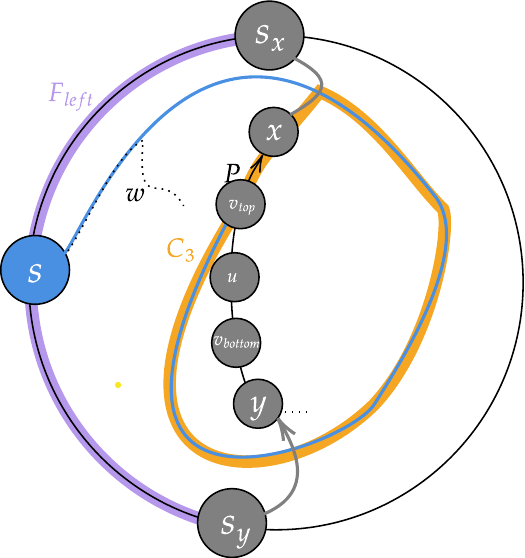}
        \caption{The cycle $C_3$.}
        \label{fig:Swirl-S-SX-C3}
    \end{subfigure}
    \caption{Figurs for the proof of  \cref{lem:top-bottom-swirl-together}.}
    \label{fig:top-bottom-swirl-togetherl}
\end{figure}

1. If $R_{s,v_{top}}$ is non-swirly.
Consider the cycle $C$ obtained by concatenating $R_{s,v_{top}},  P[v_{top},y], R_y$  and the Jordan curve connecting $s_y$ and $s$ embedded in the face $F$. See \cref{fig:Swirl-S-SX-C1}.
We think of $C$ as oriented consistently with that of $P$.
Notice that $R_{s,v_{top}}$ and $R_y$ do not cross each other since $R_{s,v_{top}}$ is a non-swirly path, therefore $C$ is non-self crossing.
Moreover, $R_{s,v_{top}}$ and $R_x$ do not cross each other either, and therefore $R_x$ is fully contained in the (weakly-)right side of $C$.
Let $(w',w)$ be the first edge of $R_{s,u}$ that is not on $R_{s,v_{top}}$.
If $w'\in P$ then $R_{s,u}=R_{s,v_{top}}[s,w']\circ P[w',u]$ is non-swirly.
Otherwise, if $w'\notin P$, since $u$ is not to the left of $v_{top}$ in the shortest paths tree of $s$, $w$ is on the left side of $C$ (notice that $R_{s,v_{top}}$ is reversely oriented in $C$).
Assume by contradiction that $R_{s,u}$ is a swirly path.
Thus, $R_{s,u}$ must cross either $R_x$ or $R_y$.
Since $R_x$ is in the right side of $C$, $R_{s,u}$ must cross $C$ to cross $R_x$ or $R_y$ (for $R_y$ it is straightforward).
By uniqueness of shortest paths $R_{s,u}[w',u]$ cannot cross $P$ or $R_{s,v_{top}}$ thus it must cross $R_y$. Let $z$ be the first vertex of $R_{s,u}$ after $R_y$.
However, since the left edges of $P[y,v_{top}]\setminus (R_{s,v_{top}}\cup R_y)$ are all in the left side of $C$, and $u\in \Left(P[y,v_{top}],s)$ it must be that $R_{s,u}[z,u]$ crosses $C$.
By uniqueness of shortest paths $R_{s,u}[z,u]$ does not cross $R_{s,v_{top}}$, $P$, or $R_y$, n a contradiction.\\

2. If $R_{s,v_{top}}$ is $y$-swirly.
For simplicity of the proof, we assume that $R_{s,v_{top}}\cap R_y$ contains a single vertex, $z$.
In the general case, it may be a continuous subpath but the proof is similar.
Consider the cycle $C_2$ composed of the concatenation of $R_{s,v_{top}}[s,z]$, $R_y[s_y,z]$,  and the Jordan curve connecting $s_y$ and $s$ embedded in the face $F$. See \cref{fig:Swirl-S-SX-C2}.
We think of $C_2$ as oriented  consistently with that of $R_y$ from $s_y$ to $z$.
Notice that $C_2$ is non self-crossing, and that both $P$ and $R_x$ are on the right side of $C_2$.

Let $(w',w)$ be the first edge of $R_{s,u}$ that is not on $R_{s,v_{top}}$.
If $w'$ appears after $z$ in $R_{s,v_{top}}$ then $R_{s,u}$ crosses $R_y$ at $z$ (and before crossing $R_x$), which means that $R_{s,u}$ is $y$-swirly.
Otherwise, if $w'$ appears before $z$ on $R_{s,v_{top}}$, since $u$ is not to the left of $v_{top}$ in the shortest paths tree of $s$, $w$ is on the left side of $C_2$ (notice that $R_{s,z}$ is reversely oriented in $C_2$).
Thus, in order to reach $u\in P$ $R_{s,u}[w,u]$ must cross $C_2$.
By uniqueness of shortest paths,  $R_{s,u}[w,u]$ cannot cross $R_{s,v_{top}}$, therefore it crosses $R_y$.
Since $R_{s,u}$ cannot cross $R_x$ before crossing $C_2$, we conclude that $R_{s,u}$ is indeed a $y$-swirly path.\\

3.  If $R_{s,v_{top}}$ is an $x$-swirly path.
For simplicity of the proof, we assume that $R_{s,v_{top}}\cap R_x$ contains a single vertex, $z$.
In the general case, it may be a continuous subpath but a similar proof still exists.
Consider the cycle $C_3=R_{s,v_{top}}[z,v_{top}]\circ P[v_{top},x]\circ R_x[x,z]$. See \cref{fig:Swirl-S-SX-C3}.
We think of $C_3$ as an oriented cycle whose orientation is consistent with that of $P$ so as to define left and right properly.
Notice that $C_3$ has no self crossing, and that the left side of $P[y,v_{top}]\setminus R_{s,v_{top}}$ are on the right side of $C_3$.

Let $(w',w)$ be the first edge of $R_{s,u}$ that is not on $R_{s,v_{top}}$.
If $w'$ appears after $z$ in $R_{s,v_{top}}$ then $R_{s,u}$ crosses $R_x$ at $z$ (and before crossing $R_y$), which means $R_{s,u}$ is $x$-swirly.
Otherwise, $w'$ is on the left side of $C_3$.
Thus, in order to reach $u\in P[y,v_{top})$, $R_{s,u}[w,u]$ must cross $C_3$.
By uniqueness of shortest paths  $R_{s,u}[w,u]$ cannot cross either $R_{s,v_{top}}$ or $P$, therefore, it must cross $R_x$.
Moreover, it must cross $R_x$ from left to right, and therefore $R_{s,u}$ must cross $R_x$ before crossing $R_y$.
Thus, $R_{s,u}$ is indeed $x$-swirly path. \qedhere
\end{proof}

\begin{lemma}\label{lem:classifyinterval}
    There exists an algorithm that given $s\in F_{\leftside}\setminus (P\cup\{s_x,s_y\})$ and $v_{top},v_{bottom}\in P$ such that $v_{top}$ is closer to $x$ than $v_{bottom}$ and in the shortest paths tree of $s$, $v_{top}$ appears to the left of $v_{bottom}$, reports in $\Otild(1)$ time one of the following:
    \begin{enumerate}
        \item Reports that $P[v_{top},v_{bottom}]$ is an $(s,\leftside)$ non-swirly subpath.
        \item Reports that for every $v\in \Left(P[v_{top},v_{bottom}],s)$ we have, $\dist(s,v)>\min\{\dist(s_x,v),\dist(s_y,v)\}$.
    \end{enumerate}
\end{lemma}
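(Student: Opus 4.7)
The plan is to observe that Lemma~\ref{lem:top-bottom-swirl-together} already does most of the work: its hypothesis, ``$v_{bottom}$ appears not to the left of $v_{top}$ in the shortest paths tree of $s$'', is implied by the present hypothesis ``$v_{top}$ appears to the left of $v_{bottom}$'' (the latter is the strictly symmetric form of ``$v_{bottom}$ appears to the right of $v_{top}$'', which in particular means ``not to the left''). Invoking that lemma therefore tells us that $P[v_{top},v_{bottom}]$ is uniformly $(s,\leftside)$ non-swirly, uniformly $x$-swirly, or uniformly $y$-swirly. The non-swirly case is exactly option~1 of the output, while either swirly case forces option~2 via Lemma~\ref{lem:s-x-swirl}, which converts a swirly shortest path into a strict distance loss of $s$ to either $s_x$ or $s_y$ at every vertex in question.

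Given this structural dichotomy, the algorithm has only to identify the class of the subpath. I would do this with a single probe at $v_{top}$. Using the enhanced MSSP (Lemma~\ref{lem:enhanced_mssp}), retrieve $\dist(s,v_{top})$, $\dist(s_x,v_{top})$, and $\dist(s_y,v_{top})$ in $\Otild(1)$ time. If $\dist(s,v_{top}) \le \dist(s_x,v_{top})$ and $\dist(s,v_{top}) \le \dist(s_y,v_{top})$, then by the contrapositive of Lemma~\ref{lem:s-x-swirl} the path $R_{s,v_{top}}$ is non-swirly; by uniformity so is the whole subpath, and we report option~1. Otherwise $R_{s,v_{top}}$ is swirly, the entire subpath shares its swirly flavor, and Lemma~\ref{lem:s-x-swirl} directly delivers option~2 for every $v\in\Left(P[v_{top},v_{bottom}],s)$.

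The main bit of care I anticipate is that Lemma~\ref{lem:top-bottom-swirl-together} is stated for $v_{top},v_{bottom}\in\Left(P,s)$, whereas here only $v_{top},v_{bottom}\in P$ is assumed. I would address this with a preliminary pair of $\direction(s,v_{top},P)$ and $\direction(s,v_{bottom},P)$ queries: any endpoint reached from the right lies outside $\Left(P[v_{top},v_{bottom}],s)$ and so does not appear in either conclusion, so we may shrink the probe interval to the maximal sub-interval whose two extreme left-reached vertices play the role of $v_{top},v_{bottom}$, which can be located in $\Otild(1)$ time using $\countA$ and $\select$ on the enhanced MSSP. Once this normalization is in place the argument above goes through unchanged. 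This bookkeeping, not the geometric core, is the only place where I expect mild detail work; it does not affect the $\Otild(1)$ running time.
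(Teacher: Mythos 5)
Your Case~1 branch is sound: if $\dist(s,v_{top}) \le \dist(s_x,v_{top})$ and $\dist(s,v_{top}) \le \dist(s_y,v_{top})$, then by the contrapositive of Lemma~\ref{lem:s-x-swirl} the path $R_{s,v_{top}}$ is non-swirly, and Lemma~\ref{lem:top-bottom-swirl-together} makes the whole subpath non-swirly. The gap is in the ``otherwise'' branch. Lemma~\ref{lem:s-x-swirl} is a one-way implication (swirly $\Rightarrow$ distance inequality), so from $\dist(s,v_{top}) > \min\{\dist(s_x,v_{top}),\dist(s_y,v_{top})\}$ you cannot conclude that $R_{s,v_{top}}$ is swirly. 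The path could perfectly well be non-swirly while $s_x$ happens to beat $s$ at $v_{top}$, in which case the uniformity you invoke places you in the non-swirly regime and Lemma~\ref{lem:s-x-swirl} says nothing about the remaining vertices. And even if you had somehow established the distance inequality at $v_{top}$, it does not propagate to the rest of $P[v_{top},v_{bottom}]$: the propagation tool available is Claim~\ref{clm:winner_takes_something}, and with $s_1 = s_x$ (the site closer to $s_x$ on $F_{\leftside}$) winning at the probed vertex $v$, that claim controls $P[x,v]$, not $P[v,y]$. So you probed at the wrong endpoint of the interval.

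The paper handles exactly this by first computing the $\{s_x,s_y\}$-partition $P_x,P_y$ via Lemma~\ref{lem:out_partition_xy} and checking containment, which pins down which flavor of swirl ($x$- or $y$-) could be present on the interval by Corollary~\ref{cor:Px-not-x-swirly}, and therefore which of $s_x,s_y$ is the relevant comparison site. When $P[v_{top},v_{bottom}]\subseteq P_x$ it then probes at $v_{bottom}$, not $v_{top}$, checking whether $v_{bottom}\in\Left(P,s_x)$ and $\dist(s,v_{bottom})>\dist(s_x,v_{bottom})$. If not, the contrapositive of Lemma~\ref{lem:s-x-swirl} (combined with Corollary~\ref{cor:Px-not-x-swirly}) shows $R_{s,v_{bottom}}$ is non-swirly, giving Case~1. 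If yes, there are two sub-cases for $R_{s,v_{bottom}}$: if it is $y$-swirly, Lemma~\ref{lem:s-x-swirl} applies to every vertex of the interval; if it is non-swirly, then $s_x$ wins at $v_{bottom}$ (this needs $R_{s_x,v_{bottom}}$ to be non-swirly, which is Lemma~\ref{lem:sx-to-Px-non-swirly}), and Claim~\ref{clm:winner_takes_something} shows every vertex of $P[x,v_{bottom}]\supseteq P[v_{top},v_{bottom}]$ $(\{s_x,s\},\leftside)$-likes $s_x$, which gives the Case~2 inequality for each $v\in\Left(P[v_{top},v_{bottom}],s)$. Your proposal is missing both the $P_x/P_y$ containment step and the appeal to Claim~\ref{clm:winner_takes_something}; with only Lemma~\ref{lem:s-x-swirl} in hand, the non-swirly sub-case of the ``otherwise'' branch cannot be closed. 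Your normalization of $v_{top},v_{bottom}$ to $\Left(P,s)$ is a reasonable piece of bookkeeping, but it does not repair this.
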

\begin{proof}
    The algorithm first computes a partition of $P$ into $P_x$ and $P_y$, using \cref{lem:out_partition_xy}.
    Then, the algorithm checks if $P[v_{top},v_{bottom}]$ is fully contained in either $P_x$ or $P_y$.
    If it is not, the algorithm reports Case 1.
    Otherwise, if $P[v_{top},v_{bottom}]\subseteq P_x$ (the case where $P[v_{top},v_{bottom}]\subseteq P_y$ is treated symmetrically), the algorithm checks if $v_{bottom}\in\Left(P,s_x)$ .
    If $v_{bottom}\notin\Left(P,s_x)$  or $\dist(s,v_{bottom})<\dist(s_x,v_{bottom})$, the algorithm reports Case 1.
    Otherwise, if $v_{bottom}\in\Left(P,s_x)$ and $\dist(s,v_{bottom})>\dist(s_x,v_{bottom})$, the algorithm reports Case 2.
    Clearly, the running time of the algorithm is $\Otild(1)$.

    \medskip
\noindent
{\bf Correctness.}
    If the algorithm reports Case 1 due to $P[v_{top},v_{bottom}]$ being not fully contained in either $P_x$ or $P_y$.
    In this case, it must be that $v_{top}\in P_x$ and $v_{bottom}\in P_y$.
    However, by \cref{cor:Px-not-x-swirly} it must be that $R_{s,v_{top}}$ is not an $x$-swirly path, and $R_{s,v_{bottom}}$ is not a $y$-swirly path.
    The only remaining option according to \cref{lem:top-bottom-swirl-together}, is that both $R_{s,v_{top}}$  and $R_{s,v_{bottom}}$ are non-swirly paths.

    Otherwise, we have $P[v_{top},v_{bottom}]\subseteq P_x$.
    If the algorithm reports Case 1 due to $s_x$ reaches $v_{bottom}$ from the right or $\dist(s,v_{bottom})<\dist(s_x,v_{bottom})$, by \cref{cor:Px-not-x-swirly} it must be that $R=R_{s,v_{bottom}}$ is not a $y$-swirly path.
    Since $v\in P_x$, $R$ is also not an $x$-swirly path, thus $R$ is a non-swirly path.
    Thus, for every $v\in P[v_{top},v_{bottom}]$, by \cref{lem:top-bottom-swirl-together}, we have that $R_{s,v}$ is also a non-swirly path.

    Finally, if the algorithm reports Case 2 due to $v_{bottom}\in\Left(P,s_x)$ and $\dist(s_x,v_{bottom})<\dist(s,v_{bottom})$, there are two cases, based on the type of $R_{s,v_{bottom}}$ (which the algorithm never recognizes).
    \begin{enumerate}
        \item If $R_{s,v_{bottom}}$ is a $y$-swirly path, then by \cref{lem:top-bottom-swirl-together}, for every $v\in \Left(P[v_{top},v_{bottom}],s)$ we have that $R_{s,v}$  is $y$-swirly.
        Thus, by \cref{lem:s-x-swirl} we have  $\dist(s,v)>\dist(s_x,v)$ .

        \item If $R_{s,v_{bottom}}$ is a non-swirly path, then by \cref{lem:top-bottom-swirl-together} for every $v\in \Left(P[v_{top},v_{bottom}],s)$ we have that $R_{s,v}$  is a non-swirly path.
        Moreover, by \cref{lem:sx-to-Px-non-swirly}, for every $v\in \Left(P[v_{top},v_{bottom}],s_x)$ we have $R_{s_x,v}$  is a non-swirly path.
        Thus, by \cref{clm:winner_takes_something}, since $\dist(s_x,v_{bottom})<\dist(s,v_{bottom})$ and $v_{bottom}\in\Left(P,s_x)$, we have that every $u\in P[v_{top},v_{bottom}]$ is a vertex that $(\{s_x,s\},\leftside)$-likes $s_x$.
        In particular, for every $v\in \Left(P[v_{top},v_{bottom}]s)$, we have $\dist(s,v)>\min\{\dist(s,v),\dist(s_x,v),\dist(s_y,v)\}$ which means $\dist(s,v)>\min\{\dist(s_x,v),\dist(s_y,v)\}$ as required.\qedhere
    \end{enumerate}
\end{proof}

\begin{lemma}\label{lem:TopTreeBotTree-Partition}
There is an algorithm that given $s\in F_{\leftside} \setminus (P\cup\{s_x,s_y\})$ outputs in $\Otild(1)$ time at most two subpaths $P[a_1,b_1]$ and $P[a_2,b_2]$ such that all vertices in $\Left(P,s)\subseteq P[a_1,b_1] \cup P[a_2,b_2]$, and $b_1$ (resp. $b_2$) is not to the left of $a_1$ (resp. $a_2$) in the shortest paths tree of $s$.
In addition, $s$ reaches $a_1,b_1,a_2,b_2\in \Left(P,s)$ and $a_1$ is closer to $x$ than $a_2$.
Moreover, if one of the two subpaths is non-swirly, then the other one is swirly.
\end{lemma}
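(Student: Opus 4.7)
The algorithm makes $\Otild(1)$ queries to the enhanced MSSP data structure of Lemma~\ref{lem:enhanced_mssp}. I first identify the extremal vertices of $\Left(P,s)$ along $P$: let $a=$ first (closest to $x$) and $b=$ last (closest to $y$), obtained via $\countA$ and $\select$ queries. Using $\ancestor$ queries I test the relative tree-order of $a$ and $b$ in the shortest paths tree rooted at $s$. If $b$ is not to the left of $a$, then by Claim~\ref{clm:being-left-to-vtop-is-monotone} every vertex of $\Left(P[a,b],s)$ is not to the left of $a$, and Lemma~\ref{lem:top-bottom-swirl-together} further ensures they are all of one (non-swirly, $x$-swirly, or $y$-swirly) type. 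In this case I return the single interval $[a,b]$; the swirliness side-condition is vacuous because only one subpath is output.

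If $b$ is to the left of $a$ in the tree of $s$, I split using the partition $P=P_x\circ P_y$ computed by Lemma~\ref{lem:out_partition_xy}. I set $a_1,b_1$ to be the first and last vertices of $\Left(P_x,s)$, and $a_2,b_2$ to be the first and last vertices of $\Left(P_y,s)$, all found via $\select$ queries. The coverage $\Left(P,s)\subseteq P[a_1,b_1]\cup P[a_2,b_2]$, the endpoint membership in $\Left(P,s)$, and the ordering $a_1$ before $a_2$ on $P$ are all immediate from $P_x$ preceding $P_y$. For the tree-property on $[a_1,b_1]$, I argue by contradiction: if $b_1$ were to the left of $a_1$ in the tree of $s$, then Corollary~\ref{cor:Px-not-x-swirly} restricts each of $R_{s,a_1},R_{s,b_1}$ to be either non-swirly or $y$-swirly, and a cycle-cutting argument (in the style of Claim~\ref{clm:being-left-to-vtop-is-monotone} using $R_{s,a_1}$, $P[a_1,b_1]$, $R_{s,b_1}$, and the Jordan curve at $s$) together with Lemma~\ref{lem:swirl-cross-and-intersect}, which constrains how $y$-swirly paths cross $R_y$, yields a crossing with $R_x$ or $R_y$ incompatible with the uniqueness of shortest paths. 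The tree-property for $[a_2,b_2]$ is symmetric using the fact that $\Left(P_y,s)$ contains only non-swirly and $x$-swirly vertices.

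For the swirliness side-condition, I observe that if $[a_1,b_1]$ is non-swirly, then the tree-monotonicity established above and Lemma~\ref{lem:top-bottom-swirl-together} imply a single non-swirly type across $[a_1,b_1]$; were $[a_2,b_2]$ also non-swirly, applying the same tree-monotonicity to the full interval $[a,b]$ would contradict our case assumption that $b$ is to the left of $a$. Hence at least one of the two intervals is swirly whenever the other is non-swirly. The main obstacle is the tree-monotonicity argument for $[a_1,b_1]$ (and symmetrically $[a_2,b_2]$): within $P_x$ we may have a mix of non-swirly and $y$-swirly vertices, and Lemma~\ref{lem:top-bottom-swirl-together} only asserts type-uniformity under the conclusion we are trying to establish. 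Overcoming this requires using that $a_1,b_1$ are the \emph{extremal} vertices of $\Left(P_x,s)$, combined with a careful cycle argument that exploits that swirly paths in $P_x$ must cross $R_y$ (Lemma~\ref{lem:swirl-cross-and-intersect}) and that $b=b_2\in P_y$ lies outside the region bounded by such cycles in the case split we are in; this rules out the inversion at the endpoints even though inversions are permitted in the interior.
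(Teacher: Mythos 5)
Your plan has the right shape (test whether the global extremes $a,b\in\Left(P,s)$ already satisfy the tree relation; if so output one interval, if not split into two), and your Case 1 is exactly the paper's. But your choice of \emph{where} to split in Case 2 is wrong, and this is the crux. The paper does not split $\Left(P,s)$ at the $P_x/P_y$ boundary: it binary-searches, using the monotonicity guaranteed by Claim~\ref{clm:being-left-to-vtop-is-monotone}, for the last vertex $v'$ of $\Left(P,s)$ that is not to the left of $v_{top}=a$ in the shortest-paths tree of $s$, and splits there into $P[v_{top},v']$ and $P[v'',v_{bottom}]$. The $P_x/P_y$ boundary and this tree-inversion point do not coincide in general.

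Concretely, $\Left(P_x,s)$ can contain both $y$-swirly and non-swirly vertices; this situation is even used explicitly later (Case~3 of the proof of Lemma~\ref{lem:s1s2partition}, where for $s_2$ the non-swirly subpath $P[a_i,b_i]$ begins at a vertex $a_i$ strictly interior to $P_x$). In that scenario $\Left(P_x,s)$ is of mixed type, and the contrapositive of Lemma~\ref{lem:top-bottom-swirl-together} then forces $b_1$ (the last vertex of $\Left(P_x,s)$) to be to the left of $a_1$ in the tree of $s$ --- the exact opposite of what the lemma asserts. So your output would be wrong, not merely unproven; the ``careful cycle argument'' you gesture at cannot rescue this particular split. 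What you need is to locate the transition vertex $v'$ itself, which is what the binary search does.

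Two further issues. First, even after a correct split, you still must show $b_2 = v_{bottom}$ is not to the left of $a_2 = v''$; this is not automatic and the paper devotes a separate cycle argument (using the cycle $R_{s,v''}\circ P[v'',v_{top}]\circ R_{v_{top},s}$) to rule out $v_{bottom}$ being to the left of $v''$. Your proposal omits any analogue. Second, your justification of the swirliness side-condition applies Lemma~\ref{lem:top-bottom-swirl-together} in the wrong direction: that lemma gives ``no inversion $\Rightarrow$ uniform type,'' and you are implicitly using the converse (``uniform non-swirly on both halves $\Rightarrow$ no inversion''), which is not what the lemma says. The paper instead proves the side-condition directly, by taking $R_{s,v_{top}}$ non-swirly, building the cycle $R_{s,v_{top}}\circ P[v_{top},x]\circ R_x$ together with the Jordan curve between $s_x$ and $s$, and showing $R_{s,v_{bottom}}$ must cross $R_x$. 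You would need a comparable direct argument.
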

\begin{proof}
     Let $v_{top}$ and $v_{bottom}$ respectively be the first (closest to $x$) vertex on $P$ and the last (closest to $y$) vertex on $\Left(P,s)$.
     The algorithm starts by finding $v_{top}$ and $v_{bottom}$ using $\countA$ and $\select$ queries.

    The algorithm then checks if $v_{bottom}$ does not appear to the left of $v_{top}$ in the shortest paths tree of $s$, then $P[v_{top},v_{bottom}]$  is reported as one interval.

    Otherwise, the algorithm executes a binary search on the range $\Left(P[v_{top},v_{bottom}],s)$ for the last (furthest from $x$) vertex $v'$ such that $v'$ does not appear to the left of $v_{top}$ in the shortest paths tree of $s$.
    Notice that by \cref{clm:being-left-to-vtop-is-monotone} a binary search indeed finds this last vertex.

    Let $v''$ be first vertex after $v'$ on $\Left(P,s)$.
    The algorithm reports $P[a_1,b_1]=P[v_{top},v']$ and $P[a_2,b_2]=P[v'',v_{bottom}]$.
    Clearly, every vertex that $s$ reaches from the left is in $P[a_1,b_2]\cup P[a_2,b_2]$.
    Moreover, by definition of $v'$ we have that $b_1$ is not to the left of $a_1$ in the shortest paths tree of $s$.
    We prove that $b_2=v_{bottom}$ is not to the left of $a_2=v''$ in the shortest paths tree of $s$.
    Assume by contradiction that $v_{bottom}$ is to the left of $v''$ in the shortest paths tree of $s$.
    Consider the cycle $C$ composed of the concatenation of $R_{s,v''}$, $P[v'',v_{top}]$ and $R_{v_{top},s}$.
    We think of $C$ as oriented consistently with $P$.
    Notice that $C$ has no self-crossing.
    Let $(w',w)$ be the first edge on $R_{s,v_{bottom}}$ that is not $R_{s,v''}$.
    Due to our assumption that $v_{bottom}$ appears to the left of $v''$, such an edge must exist.
    Since $v''$ is to the left of $v_{top}$, it also must be that $w\notin R_{s,v_{top}}$.
    By uniqueness of shortest paths, $R_{s,v_{bottom}}[w,v_{bottom}]$ cannot intersect neither $R_{s,v_{top}}$ nor $R_{s,v''}$.
    On the other hand, one can prove that all the vertices of $P$ are not on the left side of $P$.
    In particular, $v_{bottom}\in P$ is not to the left of $C$.
    Since $R_{s,v_{bottom}}[w,v_{bottom}]$ cannot cross $P$, it must be that $v_{bottom}\in P(v'',v_{top})$.
    Thus, $v''$ is in $\Left(P,s)$ and is closer to $y$ than $V_{bottom}$, contradicting the definition of $v_{bottom}$.
    We conclude that $b_2=v_{bottom}$ is not to the left of $a_2=v''$.

\medskip
\noindent
{\bf If one of the two subpaths is non-swirly, then the other is a swirly subpath.}
To prove the suffix of the lemma, by \cref{lem:top-bottom-swirl-together} it is enough to show that if one of $R_{s,v_{top}}$ and $R_{s,v_{bottom}}$ is non-swirly, then the other is swirly.
Assume w.l.o.g. that $R_{s,v_{top}}$ is a non-swirly path.
Consider the cycle $C_2$ obtained by concatenating $R_{s,v_{top}}, P[v_{top},x], R_x,$  and the Jordan curve connecting $s_x$ and $s$ embedded in the face $F$.
We think of $C$ as an oriented consistently with $P$.
Notice that $R_{s,v_{top}}$ and $R_x$ do not cross each other since $R_{s,v_{top}}$ is a non-swirly path, therefore $C_2$ is non-self crossing.
Recall that $v_{bottom}$ appears to the left of $v_{top}$ in the shortest paths tree of $s$.
Consider the first edges $(w',w)$ on $R_{s,v_{bottom}}$ that is not on $R_{s,v_{top}}$, it must be that $w$ is to the left of $C_2$.
On the other hand, $v_{bottom}$ is on the right side of $C_2$, as one can see by following the path $R_{s,v_{top}}\circ P[v_{top},v_{bottom}]$.
Thus, $R_{s,v_{bottom}}[w,v_{bottom}]$ must cross $C_2$.
However, by uniqueness of shortest paths, $R_{s,v_{bottom}}$ does not cross nor $R_{s,v_{top}}$ nor $P$, therefore it must cross $R_x$, which means that $R_{s,v_{bottom}}$ is indeed a swirly path, as required.
\end{proof}

\begin{lemma}\label{lem:Partition-P-for-one-site}
    There exists an algorithm that given $s\in F_{\leftside}\setminus (P\cup\{s_x,s_y\})$ outputs in $\Otild(1)$ time one of the following:
    \begin{enumerate}
        \item At most two subpaths $P[a_1,b_1]$ and $P[a_2,b_2]$ such that $\Left(P,s)\subseteq P[a_1,b_1] \cup P[a_2,b_2]$, and $a_1,b_1,a_2,b_2\in\Left(P,s)$, and flags $f_1,f_2 \in \{‘N’,‘X’,‘Y’\} $ such that:
        \begin{enumerate}
        \item If $f_i=‘N’$, then $P[a_i,b_i]$ is an $(s,\leftside)$ non-swirly subpath.
        \item If $f_i=‘Y’$, then $P[a_i,b_i]$ is an $(s,\leftside)$ $y$-swirly subpath.
        \item If $f_i=‘X’$,then $P[a_i,b_i]$ is an $(s,\leftside)$ $x$-swirly subpath.
    \end{enumerate}
        \item Reports that for every $v\in \Left(P,s)$, we have $\dist(s,v)>\min\{\dist(s_x,v)\dist(s_y,v)\}$.
    \end{enumerate}
\end{lemma}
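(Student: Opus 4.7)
The plan is to invoke \cref{lem:TopTreeBotTree-Partition} to obtain at most two subpaths covering $\Left(P,s)$ on which the tree-order premise of \cref{lem:top-bottom-swirl-together} is satisfied, and then to classify each such subpath using \cref{lem:classifyinterval}. For every returned subpath $P[a_i,b_i]$, \cref{lem:classifyinterval} either certifies that it is $(s,\leftside)$ non-swirly (its Case~1) or certifies that $\dist(s,v)>\min\{\dist(s_x,v),\dist(s_y,v)\}$ for every $v\in\Left(P[a_i,b_i],s)$ (its Case~2).

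If every returned subpath is certified Case~2, then, since the union of the subpaths contains $\Left(P,s)$, the distance inequality holds for every $v\in\Left(P,s)$ and the algorithm reports Case~2 of the current lemma. Otherwise, at least one subpath is genuinely non-swirly (from a Case~1 certificate) and is flagged `N'. The main case that requires real work is when \cref{lem:TopTreeBotTree-Partition} returns two subpaths, one certified Case~1 and the other Case~2: the closing clause of \cref{lem:TopTreeBotTree-Partition} (``if one subpath is non-swirly then the other is swirly'') then guarantees that the Case~2 subpath is genuinely swirly, and \cref{lem:top-bottom-swirl-together} refines this to entirely $x$-swirly or entirely $y$-swirly. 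To distinguish the two, I use the partition $P=P_x\circ P_y$ from \cref{lem:out_partition_xy}: by inspection of the algorithm of \cref{lem:classifyinterval}, any Case~2 subpath lies entirely inside $P_x$ or entirely inside $P_y$, since otherwise \cref{lem:classifyinterval} would have reported Case~1. If the swirly subpath lies in $P_x$, \cref{cor:Px-not-x-swirly} forbids $x$-swirliness and so it must be $y$-swirly (flag `Y'); symmetrically, containment in $P_y$ forces flag `X'.

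The algorithm makes $O(1)$ calls to \cref{lem:TopTreeBotTree-Partition}, \cref{lem:classifyinterval}, and \cref{lem:out_partition_xy}, each costing $\Otild(1)$, so the total running time is $\Otild(1)$. Correctness is a matter of reading off the guarantees of these invoked lemmas, with the one exception being the $x$/$y$ disambiguation discussed above, which is the main (and only real) conceptual step in the proof; the crucial enabling observation is the implicit restriction of Case~2 subpaths to a single side of the $s_x,s_y$-partition, which lets \cref{cor:Px-not-x-swirly} apply uniformly along the swirly subpath.
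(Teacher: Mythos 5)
Your proof is correct and follows essentially the same route as the paper: invoke \cref{lem:TopTreeBotTree-Partition} to get at most two tree-monotone subpaths, classify each with \cref{lem:classifyinterval}, report Case~2 if all subpaths are Case~2, and otherwise disambiguate the one swirly subpath via the $P_x/P_y$ split and \cref{cor:Px-not-x-swirly}. The only (cosmetic) difference is that you justify the containment of the Case~2 subpath in $P_x$ or $P_y$ by inspecting the algorithm of \cref{lem:classifyinterval} directly, whereas the paper cites \cref{cor:Px-not-x-swirly} for that containment; both are valid and yield the same conclusion.
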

\begin{proof}
    The algorithm first computes a partition of $P$ into $P_x$ and $P_y$, using \cref{lem:out_partition_xy}.
    Then, the algorithm applies \cref{lem:TopTreeBotTree-Partition} to obatin at most two subpaths $P[a_1,b_1]$ and $P[a_2,b_2]$ such that all vertices that $s$ reaches from the left on $P$ are in $P[a_1,b_1] \cup P[a_2,b_2]$, and $b_1$ (resp. $b_2$) is not to the left of $a_1$ (resp. $a_2$) in the shortest paths tree of $s$.
    Moreover, if one of the two subpaths is non-swirly, then the other one is swirly.

    First, consider the case where \cref{lem:TopTreeBotTree-Partition} reports only one subpath, $P[a_1,b_1]$.
    In this case, notice that $P[a_1,b_1]$ is a valid input for \cref{lem:classifyinterval}.
    If \cref{lem:classifyinterval} finds that $P[a_1,b_1]$ is an $(s,\leftside)$ non-swirly subpath, the algorithm reports $f_1=‘N’$. Otherwise, the algorithm reports Case 2.
    Observe that every vertex outside of $P[a_1,b_1]$ is reached by $s$ from the right, so it trivially satisfies any of the possible classifications.

    In the other case, we obtain two disjoint subpaths $P[a_1,b_1]$ and $P[a_2,b_2]$ from \cref{lem:TopTreeBotTree-Partition}.
    Each one of those subpaths is a valid input for \cref{lem:classifyinterval}.
    So, the algorithm applies \cref{lem:classifyinterval} on both subpaths.
    \begin{enumerate}
        \item
    If both subpaths are found to be in Case 2 of \cref{lem:classifyinterval} (i.e. for every $v\in \Left(P[a_i,b_i],s)$  we have, $\dist(s,v)>\min\{\dist(s_x,v),\dist(s_y,v)\}$) we report Case 2 (i.e. that that for every $v\in \Left(P,s)$, we have $\dist(s,v)>\min\{\dist(s_x,v),\dist(s_y,v)\}$).
    This reports indeed holds, since $\Left(p,s)\subseteq P[a_1,b_1]\cup P[a_2,b_2]$.

    \item If one of the subpaths is found to be $(s,\leftside)$ non-swirly, it is guaranteed (by \cref{lem:TopTreeBotTree-Partition}) that the other one is swirly.
    Let $P_j=P[a_j,b_j]$ be the subpath that is $(s,\leftside)$ swirly. By \cref{cor:Px-not-x-swirly}, it must be that $P_j$ is contained in either $P_x$ or $P_y$.
    Thus, the algorithm checks whether $P_j\subseteq P_x$ or $P_j\subseteq P_y$ and deduces whether the corresponding subpath of $P$ is $(s,\leftside)$ $x$-swirly or $(s,\leftside)$ $y$-swirly.
    Finally, the algorithm returns $P[a_1,b_1]$ and $P[a_2,b_2]$ and the corresponding flags (one is $‘N’$ and the other is $‘X’$ or $‘Y’$ as we describe).\qedhere
    \end{enumerate}
\end{proof}
In the following two lemmas, we show that we can utilize swirly paths in each of $s_1$ and $s_2$ to deduce that a prefix or suffix of $P$ can be assigned in the relaxed partition with respect to $\{s_1,s_2\}$ with one site.
More specifically, the prefix (that contains $x$ can be associated with $s_1$ and the suffix can be associated with $s_2$.
\begin{figure}[ht]
    \centering
    \includegraphics[width=0.3\textwidth]{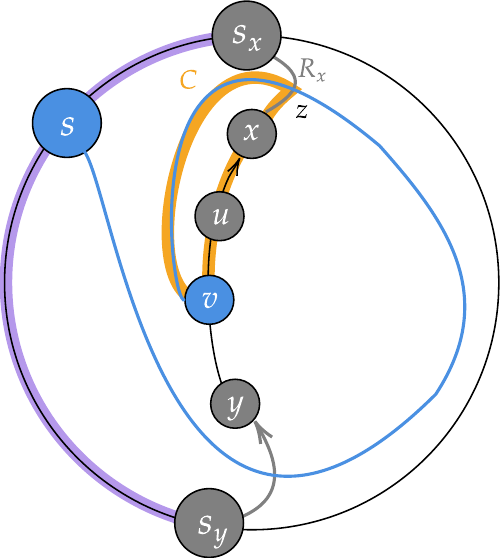}
    \caption{The cycle $C$ for the proof of \cref{lem:swirly-monotone}.}
    \label{fig:swirly-monotone}
\end{figure}
\begin{lemma}\label{lem:swirly-monotone}
    Let $s\in F_{\leftside}\setminus (P\cup \{s_x,s_y\})$ and let $v\in \Left(P,s)$ such that $R_{s,v}$ is a $y$-swirly path.
    Then for every $u\in \Left(P[x,v],s)$ we have that $R_{s,u}$ is also a $y$-swirly path.

    Similarly, if $R_{s,v}$ is an $x$-swirly path, then for every $u\in \Left(P[v,y],s)$ we have that $R_{s,u}$ is also an $x$-swirly path.
\end{lemma}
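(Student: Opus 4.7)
The plan is to reduce the claim to ruling out that $R_{s,u}$ is non-swirly via \cref{cor:Px-not-x-swirly}, and then derive a contradiction using the Jordan curve theorem on a suitably chosen cycle. I focus on the first statement; the $x$-swirly claim is symmetric.

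\textbf{Reduction.} Since $R_{s,v}$ is $y$-swirly (in particular swirly) and $v\in\Left(P,s)$, the symmetric form of \cref{cor:Px-not-x-swirly} forces $v\in P_x$: otherwise $v\in\Left(P_y,s)$ would make $R_{s,v}$ be $x$-swirly, a contradiction. Since $P_x$ is the contiguous portion of $P$ containing $x$ and $u\in P[x,v]$, we also have $u\in P_x$, so $u\in\Left(P_x,s)$. Applying \cref{cor:Px-not-x-swirly} to $u$: if $R_{s,u}$ is swirly, then it is $y$-swirly. It thus suffices to rule out that $R_{s,u}$ is non-swirly.

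\textbf{Cycle and Jordan curve argument.} Assume for contradiction that $R_{s,u}$ is non-swirly, and for simplicity assume $R_{s,v}\cap R_y=\{z_y\}$ is a single vertex (the general case is analogous). Let $w'$ be the last common vertex of $R_{s,u}$ and $R_{s,v}$ on their shared initial segment from $s$. If $w'$ appeared strictly after $z_y$ along $R_{s,v}$, then the shared prefix $R_{s,u}[s,w']\supseteq R_{s,v}[s,z_y]$ would cross $R_y$ at $z_y$, contradicting non-swirliness of $R_{s,u}$. Hence $z_y\in R_{s,v}[w',v]$. Consider the cycle
$$E=R_{s,u}[w',u]\cup P[u,v]\cup R_{s,v}[v,w'],$$
which is simple by uniqueness of shortest paths (so $R_{s,u}[w',u]$ and $R_{s,v}[w',v]$ are internally disjoint past $w'$) together with the trimming assumptions $R_{s,u}\cap P=\{u\}$ and $R_{s,v}\cap P=\{v\}$. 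The path $R_y$ meets $E$ only at $z_y$: $R_y\cap R_{s,u}[w',u]=\emptyset$ by non-swirliness; $R_y\cap P[u,v]=\emptyset$ since $R_y\cap P=\{y\}$ and $y\notin P[u,v]$ (as $y$ lies on $P$ beyond $v$, opposite to $u$); and $R_y\cap R_{s,v}[v,w']=\{z_y\}$ by assumption. This crossing is transversal, so by the Jordan curve theorem $s_y$ and $y$ lie on opposite sides of $E$.

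\textbf{Geometric contradiction.} I complete the proof by showing that both $s_y$ and $y$ lie in the exterior of $E$. Since $F$ is the infinite face and is disjoint from $E$, $F$ lies entirely in the exterior of $E$, placing $s_y\in F$ in the exterior. The orientation of $E$ is pinned down by the fact that the shared prefix $R_{s,u}[s,w']=R_{s,v}[s,w']$ approaches $w'$ from the exterior side (since $s\in F$). Combined with the definition of $R_{s,v}$ entering $P$ at $v$ from the left --- so that the last edge of $R_{s,v}$ into $v$ lies between the two $P$-edges at $v$ in the prescribed cyclic order --- a careful inspection of the cyclic order of edges at $v$ shows that the $P$-edge at $v$ toward $y$ lies in the exterior angular region of $E$ at $v$. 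Since $P[v,y]$ is internally disjoint from $E$, it remains in the exterior throughout, placing $y$ in the exterior and yielding the desired contradiction. The main obstacle is precisely this last planar-embedding step: rigorously pinning down the side of $E$ on which the $P$-continuation toward $y$ lies, which requires a careful analysis of the cyclic order at $v$ in conjunction with the orientation of $E$ induced at $w'$ by the shared prefix emanating from $F$.
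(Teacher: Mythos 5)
Your reduction step via \cref{cor:Px-not-x-swirly} (ruling out $x$-swirliness so that it suffices to rule out non-swirliness) is correct, and the cycle $E=R_{s,u}[w',u]\circ P[u,v]\circ R_{s,v}[v,w']$ is simple for the reasons you give. But the argument built on it cannot be completed, and the ``main obstacle'' you flag is not merely a technical detail that needs care --- the desired conclusion is actually false. Recall that a $y$-swirly path must cross $R_y$ and then return to the left of $P$, so $R_{s,v}$ also crosses $R_x$, at some vertex $z$ lying on $R_{s,v}[w',v]$ (since $w'$ precedes $z_y$, which precedes $z$). Under your assumption that $R_{s,u}$ is non-swirly, $R_x$ intersects $E$ exactly at $z$, transversally, just as $R_y$ meets $E$ exactly at $z_y$. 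So your parity argument places both $y$ \emph{and} $x$ in the bounded region enclosed by $E$. Consequently, the $P$-edge at $v$ toward $y$ points into the \emph{interior} of $E$, not the exterior, and no amount of careful cyclic-order bookkeeping will show otherwise; there is simply no contradiction to be extracted from $E$ in this way. The picture is self-consistent: $E$ looks like a pocket whose interior contains $P[v,y]$, $y$, $P[u,x]$, and $x$, with $R_y$ and $R_x$ each threading in once from $F$.

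The paper sidesteps exactly this trap by choosing a different cycle: $C=P[v,x]\circ R_x[x,z]\circ R_{s,v}[z,v]$, where $z$ is the intersection of $R_{s,v}$ with $R_x$. Because $C$ uses only the \emph{tail} $R_{s,v}[z,v]$, which begins after both the $R_y$-crossing $z_y$ and $z$ itself, $R_y$ never meets $C$ and the $R_x$-crossing is absorbed into $C$ as part of its boundary. The contradiction then comes not from a Jordan-curve parity count but from confinement: the branch point $w'$ (which lies strictly before $z$ on $R_{s,v}$) is on one fixed side of $C$, and $R_{s,u}[w',u]$ cannot cross $C$ (it cannot cross $P$ or $R_{s,v}$ by uniqueness, nor $R_x$ by non-swirliness), yet $u\in\Left(P,s)$ forces its final edge to lie on the other side of $C$. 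If you want to salvage a Jordan-curve style proof, the cycle must be chosen so that at most one of $R_x,R_y$ crosses it --- which is precisely what excluding the swirly prefix of $R_{s,v}$ accomplishes.
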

\begin{proof}
    We prove the first statement, the proof of the second statement is similar.
    Let $z$ be some vertex in $R_{s,v}\cap R_x$ ($z$ exists by \cref{lem:swirl-cross-and-intersect}).
    Consider the cycle $C$ obtained by concatenating $P[v,x], R_x[x,z],$  and $R_{s,v}[z,v]$ (see \cref{fig:swirly-monotone}).
    We think of $C$ as oriented consistently with $P$.
    Notice that $C$ is composed of three shortest paths and therefore $C$ is non-self crossing.
    Let $u\in\Left(P[x,v),s)$.
    Since $u$ is on $C$, $R_{s,u}$ must intersects $C$.
    Let $(w',w)$ be the first (closest to $s$) vertex of $R_{s,u}$ which is not on $R_{s,v}$.
    If $R_{s,v}[s,w']$ crosses $R_y$, then $R_{s,u}$ is an $x$-swirly path, as required.
    Otherwise, $w'$ is on the right side of $C$.
    Assume to the contrary that $R_{s,u}$ is a non-swirly path.
    We get a contradiction since $R_{s,u}$ is entirely on the right side of $C$ but all edges entering $P[x,v)$ from the left are in the left side of $C$.
    Assume to the contrary that $R_{s,u}$ is an $x$-swirly path, then by \cref{lem:swirl-cross-and-intersect} it must cross $R_x$ from left to right which means it first cross $R_x[s_x,z)$.
    Then, by uniqueness of shortest paths it cannot cross $C$ again, which means that $R_{s,u}$ reaches $u$ from the left, a contradiction.
\end{proof}

\begin{lemma}\label{lem:s1-swirl-s2-cannot-win}
Let $s_1$ and $s_2$ be two vertices on $F_{\leftside}$ such that $s_1$ is closer to $s_x$ than $s_2$ on $F_{\leftside}$.
For every $w\in \{1,2\}$ and $v\in P_x$, if $s_w$ reaches $v$ via a $y$-swirly path, every vertex on $P_x[a_x,v]$ $(\{s_1,s_2\},\leftside)$-likes $s_1$.

Similarly, for every $w\in \{1,2\}$ and $v\in P_y$, if $s_w$ reaches $v$ via a $x$-swirly path, every vertex on $P_x[v,b_x]$ $(\{s_1,s_2\},\leftside)$-likes $s_2$.

\end{lemma}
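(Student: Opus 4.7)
The plan is to prove the first statement; the second follows by symmetry (swapping $x\leftrightarrow y$, $s_x\leftrightarrow s_y$, $P_x\leftrightarrow P_y$, $s_1\leftrightarrow s_2$, and $y$-swirly $\leftrightarrow$ $x$-swirly). Fix $u\in P_x[a_x,v]$. By the definition of $(\{s_1,s_2\},\leftside)$-like, $u$ fails to like $s_1$ exactly when $u\in\Vor_{\{s_1,s_2,s_x,s_y\}}(s_2)$ and $R_{s_2,u}$ enters $P$ from the left; I assume this for contradiction. Since $u\in P_x$ and $u\in\Left(P,s_2)$, \cref{cor:Px-not-x-swirly} applied to $s_2$ implies $R_{s_2,u}$ is either non-swirly or $y$-swirly.

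I first dispose of the two easy sub-cases. If $R_{s_2,u}$ is $y$-swirly, then \cref{lem:s-x-swirl} gives $\dist(s_2,u)>\dist(s_y,u)\ge\dist(s_x,u)$, directly contradicting $u\in\Vor(s_2)$. If $R_{s_2,u}$ is non-swirly and $w=2$, then $R_{s_2,v}$ is $y$-swirly, and applying \cref{lem:swirly-monotone} to $s_2$ (with $u\in P[x,v]\cap\Left(P,s_2)$) forces $R_{s_2,u}$ to also be $y$-swirly, contradicting our sub-case.

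The main technical case is $w=1$: $R_{s_1,v}$ is $y$-swirly while $R_{s_2,u}$ is non-swirly. Here I plan a cycle-and-cross argument in the style of \cref{clm:criss-cross}. Let $z$ be the last vertex of $R_{s_1,v}$ that lies on $R_x$. Consider the cycle $C$ obtained by concatenating $R_{s_2,u}$, $P[u,v]$, the reverse of $R_{s_1,v}[z,v]$, the reverse of $R_x[s_x,z]$, and the Jordan arc in $F$ from $s_x$ through $s_1$ to $s_2$. Since $R_{s_2,u}$ is non-swirly and hence disjoint from $R_x$, while the tail $R_{s_1,v}[z,v]$ is disjoint from $R_x$ by the choice of $z$, the cycle $C$ is simple after routine uniqueness-of-shortest-path manipulations. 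Locating the first $R_y$-crossing $z_y$ of $R_{s_1,v}$ outside $C$, I would argue that the initial portion $R_{s_1,v}[s_1,z]$ starts on $C$ at $s_1$, passes through $z_y$ outside $C$, and returns to $C$ at $z$; planarity then forces this portion to cross the boundary of $C$ at $R_{s_2,u}$. From such a cross at a vertex $z^\ast$, the concatenation $R_{s_2,u}[s_2,z^\ast]\circ R_{s_1,v}[z^\ast,v]$ is an $s_2$-to-$v$ path whose length, using $\dist(s_2,u)<\dist(s_1,u)$ together with the minimality of the $y$-swirly $R_{s_1,v}$, can be arranged to strictly beat $\len(R_{s_1,v})$ (or, symmetrically, to produce an $s_1$-to-$u$ path strictly shorter than the one through $s_2$), producing the contradiction.

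The main obstacle will be the planar cycle argument in this last case: verifying simplicity of $C$ when $R_{s_1,v}$ may meet $R_x$ at multiple vertices (handled by taking $z$ to be the final intersection) and when $R_{s_2,u}$ and $R_{s_1,v}[z,v]$ may share prefixes (resolved via uniqueness of shortest paths), and then extracting the concrete shorter path from the forced cross, in analogy with the concatenation $R_{s_1,v_1}[s_1,z]\circ R_{s_2,v_2}[z,v_2]$ built in the proof of \cref{clm:criss-cross}. This is the core technical step of the proof.
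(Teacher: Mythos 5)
Your setup and the two ``easy'' sub-cases are correct and essentially match the paper: the swirly-$R_{s_2,u}$ case is disposed of via \cref{lem:s-x-swirl}, and the $w=2$ case via \cref{lem:swirly-monotone}. The divergence, and the real work of the lemma, is in the main case ($w=1$, $R_{s_1,v}$ $y$-swirly, $R_{s_2,u}$ non-swirly), and there your plan has two concrete gaps.

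\textbf{The planarity step is not forced.} Your cycle $C$ contains both $s_1$ (on the Jordan arc) and $z$ (at the junction of $R_x[s_x,z]$ and $R_{s_1,v}[z,v]$), and you want to conclude that $R_{s_1,v}[s_1,z]$ crosses the $R_{s_2,u}$ portion of $C$ because it ``passes through $z_y$ outside $C$.'' But a path that starts and ends on $C$ and visits a point off $C$ need not cross $C$ at all: it can leave $C$ at $s_1$ into the side containing $z_y$ and re-enter $C$ at $z$ from that same side. To make the crossing necessary you would have to pin down from which side $R_{s_1,v}[s_1,z]$ approaches $z$ (equivalently, the direction of its $R_x$-crossing at $z$, which \cref{lem:swirl-cross-and-intersect} does not supply for a $y$-swirly path) and then show it is the opposite side from $z_y$. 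This orientation bookkeeping is exactly what is missing.

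\textbf{The distance contradiction does not close in both branches.} Even granting a crossing $z^\ast \in R_{s_1,v}[s_1,z]\cap R_{s_2,u}$, comparing $\dist(s_1,z^\ast)$ with $\dist(s_2,z^\ast)$ gives a contradiction in only one branch. If $\dist(s_1,z^\ast)<\dist(s_2,z^\ast)$, then $R_{s_1,v}[s_1,z^\ast]\circ R_{s_2,u}[z^\ast,u]$ is an $s_1$-to-$u$ walk shorter than $\dist(s_2,u)<\dist(s_1,u)$ — a genuine contradiction. But if $\dist(s_2,z^\ast)<\dist(s_1,z^\ast)$, the concatenation $R_{s_2,u}[s_2,z^\ast]\circ R_{s_1,v}[z^\ast,v]$ only shows $\dist(s_2,v)<\dist(s_1,v)$, which contradicts nothing (indeed $v\notin\Vor(s_1)$ is already guaranteed by \cref{lem:s-x-swirl}, and $v$ may legitimately belong to $\Vor(s_2)$). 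The paper avoids this by a \emph{two-cycle} argument: one cycle (built from $R_{s_1,v}[s_1,z_y]$, $R_y[z_y,s_y]$, and a Jordan arc) forces $R_{s_2,u}$ to first meet $R_{s_1,v}$ on the pre-$R_y$ segment, and a second cycle (built from $R_{s_1,v}[z_2,v]$, $P[v,x]$, $R_x[x,z_2]$) forces the continuation of $R_{s_2,u}$ to meet $R_{s_1,v}$ on the post-$R_x$ segment. Uniqueness of shortest paths then pins the shared subpath $R_{s_2,u}[c,c_2]=R_{s_1,v}[c,c_2]$ to pass through $z_y\in R_y$, yielding $\dist(s_y,u)<\dist(s_2,u)$, a contradiction that works unconditionally. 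This localization of the shared segment so that it contains the $R_y$-crossing is the key idea missing from your plan; without it, the concatenation trick alone cannot finish.
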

\begin{figure}[ht!]
    \centering
    \includegraphics[width=0.3\textwidth]{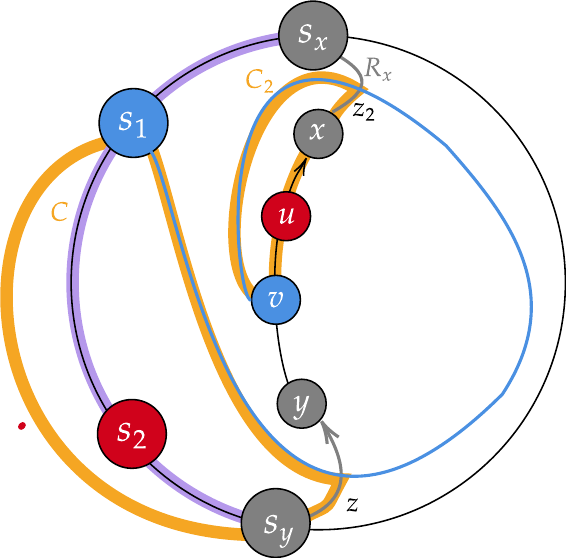}
    \caption{The cycles $C$ and $C_2$ for the proof of \cref{lem:s1-swirl-s2-cannot-win}.}
    \label{fig:s1-swirl-s2-cannot-win}
\end{figure}
\begin{proof}
    We prove the first statement, the proof of the second statement is symmetric.
    First, oversevere that the lemma follows for $w=2$, as the path $R_{s_2,v}$ crosses $R_x$, and therefore $\dist(s_x,v) < \dist(s_2,v)$, and $s_x \neq s_2$.
    It follows that $v$ indeed $(\{s_1,s_2\},\leftside)$-likes $s_1$.
    Moreover, by \cref{lem:swirly-monotone} the same argument holds for any $u\in\Left(P_x[a_x,v],s_2)$, and the claim is trivial for $u\in P_x[a_x,v]\setminus\Left(P_x[a_x,v],s_2)$.
    Thus, we focus on $w=1$ - i.e. $R_{s_1,v}$ is a $y$-swirly path.

For simplicity of the proof, we assume that $R_{s_1,v}\cap R_y$ contains a single vertex, $z$.
In the general case, it may be a continuous subpath but a similar proof still exists.
Consider the cycle $C$ composed of the concatenation of $R_{s_1,v}[s_1,z]$, $R_y[z,s_y]$  and the Jordan curve connecting $s_1$ and $s_y$ embedded in the face $F$.
Notice that $C$ has no self crossing, since $R_y[z,s_y]$ and $R_{s_1,v}[s_1,z]$ are both shortest paths outgoing from $z$.
Assume to the contrary that for some vertex $u\in P_x[a_x,v]$ that does not $(\{s_1,s_2\},\leftside)$-like $s_1$.
Consider the path $R_{s_2,u}$.
It follows from our assumption that $R_{s_2,u}$ enters $P$ from the left, and that $\dist(s_2,u) < \min_{s\in \{s_1,s_x,s_y\}}(\dist(s,u))$.
Notice that $s_2$ is on the left side of $C$ because, notice that $P$ and $u$ are on the right side of $C$.
Therefore, there must be a first vertex $c$ on $R_{s_2,u}$ that is on $C$.
Notice that $c$ cannot be on $R_x$, as that would lead to $\dist(s_x,u) < \dist(s_2,u)$, a contradiction.
We therefore have that $c\in R_{s_1,v}[s_1,z]$.

For simplicity of the proof, we assume that $R_{s_1,v}\cap R_x$ contains a single vertex, $z_2$.
In the general case, it may be a continuous subpath but a similar proof still exists.
Consider the cycle $C_2$ composed of the concatenation of $R_{s_1,v}[z_2,v]$, $P[v,x]$, $R_x[x,z_2]$ and the Jordan curve connecting $s_1$ and $s_y$ embedded in the face $F$.
We think of $C_2$ as oriented  consistently with $P$.
Notice that $C_2$ is non-self crossing.
Notice that $c$ is on the right side of $C_2$ and that every left edge of $P\setminus (R_{s_1,v} \cup R_y)$ is on the left side of $C_2$.
It follows that $R_{s_2,u}[c,u]$ must intersect $C_2$, and that the first intersection is not on $P \setminus (R_y \cup R_{s_1,v})$ (as this would lead to $R_{s_2,u}$ entering $P$ from the right).
Notice that $R_{s_2,u}[c,u]$ cannot intersect $R_y$, as this would lead to $\dist(s_y,u) < \dist(s_2,u)$, a contradiction.
We conclude that $R_{s_2,u}(c,u)$ intersects $C_2$ on $R_{s_1,v}[z_2,v]$ on some vertex $c_2$.
It follows from uniqueness of shortest paths that $R_{s_2,u}[c,c_2]= R_{s_1,v}[c,c_2]$.
This is a contradiction, as $R_{s_1,v}[c,c_2]$ crosses $R_y$ follows from the definition of $z$ and $z_2$.
\end{proof}

Finally, we are ready to prove \cref{lem:s1s2partition} by combining all the above.
This way we obtain a relaxed partition for any two sites $s_1.s_2\in F_{\leftside}$.
As described above, the idea is to assign a prefix and suffix of $P$ to $s_1$ and $s_2$ respectively, exploiting information on swirly paths.
Then the relaxed partition of $P$ is obtained by partition the remaining interval using the non-swirly algorithm (\cref{sec:ns-partition}).

\lemtwositespartition*
\begin{proof}
We distinguish between three cases.\\

1. If $s_1=s_x$ and $s_2=s_y$, the lemma is easily proved by \cref{lem:out_partition_xy}.\\

2. Next, consider the case where $\{s_1,s_2\}\cap \{s_x,s_y\}=\emptyset$.
In this case, the algorithm first computes a partition of $P$ into $P_x$ and $P_y$, using \cref{lem:out_partition_xy}.
Then, the algorithm applies \cref{lem:Partition-P-for-one-site} both on $s_1$ and on $s_2$.
If one of them reports Case 2 (i.e. that for every $v\in \Left(P,s)$ it holds $\dist(s,v)>\min\{\dist(s_x,v),\dist(s_y,v)\}$), the algorithm outputs a trivial partition of $P$ into $P_1=P$ and $P_2=\emptyset$ (if \cref{lem:Partition-P-for-one-site} reports Case 2 for $s_2$) or $P_1=\emptyset$ and $P_2=P$ (if \cref{lem:Partition-P-for-one-site} reports Case 2 for $s_1$).
Clearly, in this case it is a valid $(\{s_1,s_2\},\leftside)$-partition.

Consider the case where \cref{lem:Partition-P-for-one-site} reports Case 1 (at most two subpaths of $P$) for both $s_1$ and $s_2$.
Let $a\in\Left(P,s_1)\cup\Left(P,s_2)$ be the closest vertex to $y$ such that $R_{s_1,a}$ or $R_{s_2,a}$ is $y$-swirly.
Then, by \cref{lem:s1-swirl-s2-cannot-win} every vertex $v\in P[x,a]$ is $(\{s_1,s_2\},\leftside)$-likes $s_1$.
We note that if the vertex $a$ exists, it is the endpoint of one of the intervals reported by \cref{lem:Partition-P-for-one-site}.
If the vertex $a$ does not exist we consider $P[x,a]$ as an empty subpath.
Similarly, let $b\in \Left(P,s_1)\cup\Left(P,s_2)$ be the closest vertex to $y$ such that $R_{s_1,b}$ or $R_{s_2,b}$ is $x$-swirly.
Then, by \cref{lem:s1-swirl-s2-cannot-win} every vertex $v\in P[b,y]$ is $(\{s_1,s_2\},\leftside)$-likes $s_2$.
If the vertex $b$ does not exist we consider $P[b,y]$ as an empty subpath.
Finally, let $P[a',b']$ be the subpath of $P$ obtained by removing $P[x,a]$ (if $a$ exists) and $P[b,y]$ (if $b$ exists),
for every vertex $v\in\Left(P[a',b'],s_i)$ we have $R_{s_i,v}$ is non-swirly path for $i\in\{1,2\}$.
Thus, we apply \cref{lem:out_partition_non-swirly} on $\hat P=P[a',b']$ and obtain $\hat P_1$ and $\hat P_2$.
Finally, the algorithm returns $P_1=P[x,a]\circ \hat P_1$ and $P_2=\hat P_2\circ P[b,y]$.\\

3. Finally, consider the case $s_1=s_x$ and $s_2\ne s_y$ (the case $s_1\ne s_x$ and $s_2=s_y$ is symmetric).
This case is similar to the previous one, except that for now we have that $P_x$ is an $(s_1,\leftside)$-non-swirl subpath (by \cref{lem:sx-to-Px-non-swirly} and every $v\in P_y$ is $(\{s_1,s_2\},\leftside)$-likes $s_2$.
So the rest of the computation has to take care only on $P_x$.
In details, the algorithm applies \cref{lem:Partition-P-for-one-site} on $s_2$.
If it reports Case 2, the algorithm outputs a trivial partition $P_1=P$ and $P_2=\emptyset$.
Assume \cref{lem:Partition-P-for-one-site} reports Case 1, with the parts $P[a_1,b_1]$ and $P[a_2,b_2]$.
By \cref{cor:Px-not-x-swirly}, $P_x$ does not contain any $v$ with $R_{s_2,v}$ being $x$-swirly.
Let $P[a_i,b_i]$ be an $(s_2,\leftside)$ non-swirly subpath, and let $P_x=P[x,v]$.
If $a_i\notin P_x$ we return $P_1=P_x$ and $P_2=P_y$.
Otherwise, if $a_i\in P_x$ it must be that for every $v\in\Left(P[a_i,v],s_2)$ we have that $R_{s_2,v}$ is a non-swirly path.
To see this, $R_{s_2,v}$ cannot be a $x$-swirly path by \cref{cor:Px-not-x-swirly} since $v\in P_x$.
Assume to the contrary that $R_{s,v}$ is a $y$-swirly path, then we have $a_i\in P[x,v]$ and  by \cref{lem:swirly-monotone} we have $R_{s_2,a_i}$ is a $y$-swirly path, a contradiction.
Moreover, since $a_i$ is the closest to $x$ vertex in $\Left(P,s_2)$ with $R_{s_2,v}$ is non-swirly, we have that every $u\in P[x,a_i)$ is a vertex that $(\{s_1,s_2\},\leftside)$-likes $s_1$.
Therefore, we apply \cref{lem:out_partition_non-swirly} on $\hat P=P[a_i,v]$ and return $P_1=P[x,a_i)\circ \hat P_1$ and $P_2=\hat P_2\circ P(v,y]$.

In all three cases the running time of the algorithm is clearly $\Otild(1)$, and the correctness follows our discussion.
\end{proof}

\end{document}